

\documentclass[submission,copyright,creativecommons]
{eptcs}
 
\usepackage{url}    
\usepackage{breakurl}    

\usepackage{etoolbox}
\newcommand{\longv}{}


\usepackage{amsmath}
\usepackage{amsthm}
\usepackage{amsfonts}
\usepackage[T1]{fontenc}

\usepackage{coqdoc}

\usepackage{hyperref}

\usepackage{cleveref}

\usepackage{array}


\newcommand{\lname}[1]{\ensuremath{\mathsf{#1}}} 
\newcommand{\cln}[1]{\ensuremath{\mathsf{#1}^{\mathsf{cl}}}} 
\newcommand{\ipc}{\lname{IPC}}
\newcommand{\cpc}{\lname{CPC}}


\newcommand{\eql}[1]{\equiv_{#1}}

\newcommand{\deq}{:=}

\newcommand{\eqv}{\leftrightarrow}

\newcommand{\iln}[1]{\ensuremath{\mathsf{#1}^{\mathsf{int}}_\Box}}


\newtheorem{thm}{Theorem}
\newtheorem{theorem}[thm]{Theorem}
\newtheorem{corollary}[thm]{Corollary}
\newtheorem{lemma}[thm]{Lemma}

\theoremstyle{definition}
\newtheorem{probl}{Open Problem}
\newtheorem{remark}[thm]{Remark}{\bfseries}{\upshape}

\newcommand{\bro}{\textup{(}}
\newcommand{\brc}{\textup{)\,}}



\newcommand{\mycoqlinebreak}{\coqdoceol\coqdocindent{14.00em}}
\newcommand{\cdkw}[1]{\coqdockw{#1}}

\newcommand{\vrbl}[1]{\coqdocvariable{#1}}






%
\title{Ruitenburg's Theorem Mechanized and Contextualized}

\author{Tadeusz Litak\thanks{The results were obtained while the author was employed at FAU Erlangen-Nuremberg. In the final stages of preparing the print version, the author has been employed at University of Naples Federico II, supported by the PNRR MUR projects FAIR (No. PE0000013-FAIR).}
\institute{
University of Naples Federico II
}
\email{tadeusz.litak@unina.it
}}

%




 \usepackage[font={small}]{caption, subfig}



 \setlength{\abovecaptionskip}{1ex}
 \setlength{\belowcaptionskip}{1ex}
 \setlength{\floatsep}{1ex}
 \setlength{\textfloatsep}{1ex}




  \abovedisplayskip.40ex
  \belowdisplayskip.40ex
  \abovedisplayshortskip.40ex
  \belowdisplayshortskip.40ex

\setlength{\parskip}{0pt} 


\begin{document}

\maketitle

\begin{abstract}
In 1984, Wim Ruitenburg published a surprising result about periodic sequences in intuitionistic propositional calculus (IPC). The property established by Ruitenburg naturally generalizes local finiteness; recall that intuitionistic logic is not locally finite, even in a single variable. One of the two main goals of this note is to illustrate that most "natural" non-classical logics failing local finiteness also do not enjoy the periodic sequence property. IPC is quite unique in separating these properties. The other goal of this note is to present a Coq formalization of Ruitenburg's heavily syntactic proof. Apart from ensuring its correctness, the formalization allows extraction of a program providing a certified implementation of Ruitenburg's algorithm.
\end{abstract}

\bigskip 




\section{Introduction} \label{sec:intro}

In 1984, Wim Ruitenburg \cite{Ruitenburg84:jsl}  published  a surprising result about periodic sequences in intuitionistic propositional calculus (\ipc). For quite a while, the result seemed relatively neglected despite having been discussed, e.g., in Humberstone's monograph on logical connectives \cite{Humberstone2011}, and used in the work of the late Sergey Mardaev \cite{Mardaev1993:al,Mardaev1994:al,Mardaev1997:bsl,Mardaev1998:al,Mardaev2007:jancl}. Recently, however, there has been renewed interest \cite{Ghilardi2016,GhilardiS18,GhilardiGS20,GhilardiS20} (see \Cref{sec:related}), although as we are going see, Ruitenburg's discovery appears to deserve still broader attention. 

 Consider a propositional formula  $A$. Fix a propositional variable $p$, which can be thought of as representing the context hole or the  argument of $A$ taken as a polynomial (other propositional variables being additional constants). Given any other formula $B$, write $A(B)$ for the result of substituting $B$ for $p$. Also, write $A \eql{L} B$ for $\vdash_L A \eqv B$, where $L$ is a chosen system of propositional logic. Now define the natural iterated substitution operation
\ifdef{\longv}{
$$
A^0(p) := p, \quad A^{n+1}(p) := A(A^n(p)).
$$

}
{
$
A^0(p) := p, A^{n+1}(p) := A(A^n(p)).
$
}
 Such a sequence turns almost immediately into a cycle modulo $\eql{\cpc}$, i.e., the equivalence relation of Classical Propositional Calculus: 

\begin{lemma}[\cite{Ruitenburg84:jsl}, Lemma 1.1] \label{lem:cpc_ugpsp}
For any $A$, \quad $A(p)\eql{\cpc} A^3(p).$
\end{lemma}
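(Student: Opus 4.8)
The plan is to exploit that, over $\cpc$, a formula is determined up to equivalence by its truth table, and in particular to case-split on the distinguished variable $p$. Concretely, for any $A$ one has the classical Shannon-expansion identity
$$A(p) \eql{\cpc} (p \wedge A(\top)) \vee (\neg p \wedge A(\bot)),$$
where $A(\top)$ and $A(\bot)$ denote the $p$-free formulas obtained by substituting the two classical truth constants for $p$ (the other propositional variables of $A$ playing the role of parameters).

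First I would observe that, once an assignment to the parameters is fixed, $A$ induces a unary operation on $\{0,1\}$, and there are only four such: the two constants, the identity, and negation. Each such $f$ satisfies $f = f \circ f \circ f$: a constant is already fixed after one iteration, whereas identity and negation both square to the identity, so $f \circ f \circ f = f \circ \mathrm{id} = f$. Hence, under every assignment, $A^3(p)$ and $A(p)$ evaluate to the same truth value, and completeness of $\cpc$ gives $A(p) \eql{\cpc} A^3(p)$.

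Alternatively — and this is the route in keeping with the syntactic spirit of the rest of the paper — I would derive the equivalence purely equationally from the identity above. Substituting $A(p)$ for $p$ and simplifying with Boolean-algebra laws (using that $A(\top)$ and $A(\bot)$ contain no $p$) gives
$$A^2(p) \eql{\cpc} \bigl(p \wedge (A(\top) \vee A(\bot))\bigr) \vee \bigl(\neg p \wedge (A(\top) \wedge A(\bot))\bigr);$$
one further substitution and another round of simplification collapse the coefficients back, yielding $A^3(p) \eql{\cpc} (p \wedge A(\top)) \vee (\neg p \wedge A(\bot)) \eql{\cpc} A(p)$.

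There is essentially no hard step: the result is, as advertised, almost immediate. The only points deserving care are bookkeeping ones — tracking the parameters (equivalently, reasoning uniformly over all assignments, or inside the free Boolean algebra on the parameters) — and observing that the exponent $3$ is genuine and cannot in general be lowered to $2$, as witnessed by $A(p) = \neg p$, for which $A^2(p) \eql{\cpc} p$ is classically inequivalent to $A(p)$.
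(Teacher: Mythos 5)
The paper states this lemma without proof, simply citing Ruitenburg's Lemma~1.1, so there is no in-paper argument to compare against; judged on its own merits, your proof is correct. Both of your routes work: the semantic one (for each fixed assignment to the parameters, $A$ induces one of the four unary Boolean operations, each of which satisfies $f^3=f$) and the equational one via the Shannon expansion, whose intermediate identity for $A^2(p)$ with coefficients $A(\top)\vee A(\bot)$ and $A(\top)\wedge A(\bot)$ is exactly right and does collapse back to $A(p)$ after one more substitution. Your closing observation that $A(p)=\neg p$ shows the period $2$ (rather than $A^2\eql{\cpc}A$) is genuinely needed is also correct and worth keeping.
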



The above observation can be reformulated as asserting that \cpc\ has \emph{uniformly globally periodic sequences} (ugps). A logic $L$  has this property if
 there exist $b$, $c > 0$ s.t. for any formula $A$, $A^b(p) \eql{L}A^{b+c}(p)$.  
However, ugps has still a rather strong logical form: two existential quantifiers preceding an universal one. Hence one can consider changing the order of quantifiers to weaken the property: 

\newcommand{\tbskip}{\hspace{0.7mm}}

\begin{center}
(eventually) periodic sequences: 

\medskip

\begin{tabular}{|@{\tbskip}l@{\tbskip}|@{\tbskip}>{$}c<{$}>{$}c<{$}>{$}c<{$}@{\tbskip}|@{\tbskip}>{$}c<{$}>{$}c<{$}>{$}c<{$}@{\tbskip}|}\hline
 & \multicolumn{3}{c@{\tbskip}|@{\tbskip}}{globally} & \multicolumn{3}{c@{\quad}|}{locally} \\ \hline
uniformly & \exists b. & \exists c>0. & \forall A. & \exists c>0.& \forall A. & \exists b.  \\ \hline
 parametrically  & \exists b. & \forall A. & \exists c>0. & \forall A. &  \exists b. & \exists c>0. \\ \hline
\end{tabular} \qquad  $A^b(p) \eql{L}A^{b+c}(p)$ 
\end{center}

 So, do standard non-classical propositional calculi, \ipc\ in particular, have at least \emph{plps} (\emph{parametrically locally periodic sequences})?\footnote{I am not aware whether terminology and distinctions used in the present paper have been systematically introduced before. The original work of Ruitenburg uses the term \emph{finite order}.} \ifdef{\longv}{ If not, is there something special about \cpc\ which makes results such as Lemma \ref{lem:cpc_ugpsp} possible?

One of many peculiarities of \cpc\ as seen from the general perspective of \emph{abstract algebraic logic} \cite{BlokP89:ams,FontJP03a:sl} is that it is \emph{finite}, i.e., determined by a single finite algebra of truth values. There are some other natural examples of finite logics  (mostly some fuzzy logics of finite chains, such as \L ukasiewicz three-valued logic), but in general, this property is rather rare among logics of importance in today's Computer Science. A somewhat more general property is \emph{local finiteness}: a logic is locally finite if given a finite set of propositional variables, one can form only finitely many non-equivalent formulas. The modal system \lname{S5}\ is a typical example of a logic which is locally finite without being finite.}{To begin with, we have an obvious observation:}

\begin{lemma} \label{lem:llf}
 Any locally finite logic has  plps.
\end{lemma}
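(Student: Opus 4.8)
The plan is to prove the statement by a straightforward pigeonhole argument on the sequence of iterates $A^k(p)$, exploiting the fact that, for a fixed $A$, all these iterates are formulas over one and the same \emph{finite} set of variables.

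First I would record the elementary syntactic observation that every iterate $A^k(p)$ contains only propositional variables that already occur in $A$, possibly together with $p$ itself. This is an easy induction on $k$: the base case $A^0(p) = p$ is a formula over $\{p\}$, and $A^{k+1}(p) = A(A^k(p))$ is obtained from $A$ by substituting $A^k(p)$ for $p$, so its set of variables is contained in $(\mathrm{Var}(A) \setminus \{p\}) \cup \mathrm{Var}(A^k(p))$, which by the induction hypothesis is contained in $\mathrm{Var}(A) \cup \{p\} =: V$. Since $A$ is a single formula, $V$ is finite.

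Next I would invoke local finiteness of $L$ applied to the finite set $V$: there are only finitely many formulas over $V$ up to $\eql{L}$. Consequently the assignment $k \maps\mathchoice{}{}{}{}{}$—more precisely, $k \mapsto$ the $\eql{L}$-class of $A^k(p)$—is a map from the infinite set $\mathbb{N}$ into a finite set, hence not injective. So there exist indices $i < j$ with $A^i(p) \eql{L} A^j(p)$. Setting $b := i$ and $c := j - i > 0$ yields $A^b(p) \eql{L} A^{b+c}(p)$, which is exactly the plps property for this $A$; as $A$ was arbitrary, we are done.

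There is no genuine obstacle here; the statement is essentially immediate once the right finiteness is isolated. The only points that need (routine) care are the variable-occurrence bookkeeping in the first step and fixing precisely what counts as a ``logic'' for the purposes of local finiteness (closure under substitution, and $\eqv$ being respected by $\vdash_L$). I would also remark in passing that when $\eql{L}$ is moreover a congruence with respect to substitution—as it is for all systems considered in this note—repeatedly applying the operation $A(-)$ to the identity $A^b(p) \eql{L} A^{b+c}(p)$ upgrades it to genuine eventual periodicity, $A^k(p) \eql{L} A^{k+c}(p)$ for all $k \geq b$, so that ``locally finite'' in fact delivers the stronger ``eventually periodic'' reading as well.
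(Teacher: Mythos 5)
Your proof is correct and takes essentially the same route as the paper, which merely sketches the contrapositive of your pigeonhole argument: a sequence $p, A(p), A^2(p), \dots$ with no repetition modulo $\eql{L}$ would exhibit infinitely many pairwise non-equivalent formulas over the finite variable set of $A$ (together with $p$), contradicting local finiteness. Your additional bookkeeping of variable occurrences and the closing remark on upgrading to eventual periodicity are sound but not needed for the statement as given.
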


\ifdef{\longv}{
\begin{proof}[Sketch]
Any sequence $p, A(p), A^2(p), A^3(p) \dots$ disproving the plps property would also disprove local finiteness. 
\end{proof}}{}

It is, however, well-known that \ipc\ is not locally finite: even in one propositional variable, there are infinitely many nonequivalent formulas. The exact description of the infinite algebra of formulas in one propositional variable is provided by the Rieger-Nishimura Theorem (see \cite{Rieger49,Nishimura60:jsl} and \cite[Ch. 7]{ChagrovZ97:ml} for references, also \Cref{sec:bounds_ens} herein).   As we are going to see in \Cref{sec:context}, most ``natural'' propositional logics which fail to be locally finite, fail to have (even parametrically locally) periodic sequences. 
 \ipc\ turns out to fare better. One can indeed show that (uniformly or parametrically) globally periodic sequences would be too much to expect, at least when formulas are allowed to contain other variables than $p$ itself \cite[\S 2]{Ruitenburg84:jsl}. But we do have
\begin{theorem}[\cite{Ruitenburg84:jsl}, Theorem 1.9]
\ipc\ has the ulps property:  for any $A$, there exists $b$ s.t. 
$A^b(p) \eql{\ipc} A^{b+2}(p)$. Moreover, $b$ is linear in the size of $A$.
\end{theorem}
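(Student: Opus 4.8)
The plan is to follow Ruitenburg's purely syntactic route and produce, from a given $A$, an explicit $b = O(|A|)$ together with an \ipc-derivation of $A^b(p) \eqv A^{b+2}(p)$; this keeps the whole argument constructive, so that the bound and the witnessing derivation can be extracted. I would fix $p$ as the distinguished variable, treat the other variables of $A$ as parameters, and use throughout the replacement of provably equivalent subformulas in both its absolute and its relativized form (from $\Gamma \icons B \eqv B'$ infer $\Gamma \icons C(B) \eqv C(B')$ for any context $C$). A first reduction disposes of the degenerate case in which $p$ does not occur in $A$: there $A^1(p) \eql{\ipc} A^n(p)$ for every $n \geq 1$, so $b = 1$ works. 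It also helps to keep the classical picture in view as a target: by \Cref{lem:cpc_ugpsp} we already have $A(p) \eql{\cpc} A^3(p)$, so the real content of the theorem is to upgrade $\eql{\cpc}$ to $\eql{\ipc}$ at the price of a linear delay $b$, keeping the period equal to $2$.

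The heart of the proof is a \textbf{stabilization lemma}. The obstacle to a naive argument is that $A$ need be neither positive nor negative in $p$, so neither structural induction on $A$ nor monotonicity of the substitution map is available. The idea I would pursue, following Ruitenburg, is that \emph{eventually} $p$ can influence $A^n(p)$ only through a fixed, $p$-free ``interface'': suitable Boolean-style combinations of the parameter-only formulas $A^i(\bot)$ and $A^i(\top)$ for $i<n$. The two things to establish, by a simultaneous induction on a complexity measure of $A$ (e.g.\ its number of subformulas), are: (i) each of the sequences $\big(A^i(\bot)\big)_i$ and $\big(A^i(\top)\big)_i$ becomes \emph{$2$-periodic modulo $\eql{\ipc}$} after $O(|A|)$ steps --- it genuinely need not become constant, as $A(p)=\neg p$ already shows, where $A^i(\bot)$ alternates between $\top$ and $\bot$; and (ii) once the interface has settled, $A^n(p)$ is \ipc-provably equivalent to a fixed normal form built only from the settled interface and from $p$. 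Given (i) and (ii), $A^b(p) \eqv A^{b+2}(p)$ follows for any $b$ past the linear threshold by one more pair of substitution steps together with replacement of equivalents using the settled interface as a hypothesis, and tracking the threshold through the induction yields $b$ linear in $|A|$. That period $2$ cannot in general be improved to $1$ is again witnessed by $A(p)=\neg p$: $A^1(p)=\neg p \eql{\ipc} \neg\neg\neg p = A^3(p)$ while $\neg p \not\eql{\ipc} \neg\neg p$.

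I expect the main obstacle to be, first, stating the engine lemma correctly --- identifying the right notion of ``interface'' and of normal form so that the simultaneous induction actually closes --- and, second, the bound: the \emph{sizes} of $A^i(\bot)$ and $A^i(\top)$ grow (exponentially in the worst case), so one must argue that it is \emph{equivalence}, not size, that stabilizes, and that it does so within a \emph{linear} rather than exponential number of iterations. Beyond this conceptual core lies a substantial amount of routine-but-intricate \ipc-reasoning with many case splits --- precisely the part that benefits from being machine-checked, and which, once formalized constructively, delivers the promised certified algorithm for computing $b$. (A semantic alternative --- showing via bounded bisimulations and the finite model property that $A^n(p)$ and $A^{n+2}(p)$ agree on all finite Kripke models once $n$ passes a linear threshold --- would also prove the theorem, but is less convenient for extracting the algorithm.)
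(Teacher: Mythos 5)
Your framing of the problem is right (reduce to entering a $2$-cycle after linearly many steps, handle the $p$-free case separately, note via $A(p)=\neg p$ that the period cannot be $1$), but the engine you propose is not the one that makes the proof go through, and where it is concrete it does not work as stated. Your claim (i) --- that $(A^i(\bot))_i$ and $(A^i(\top))_i$ become $2$-periodic modulo $\eql{\ipc}$ after $O(|A|)$ steps --- is itself an instance of the theorem being proved (substitute $\bot$ or $\top$ into $A^b(p)\eqv A^{b+2}(p)$), and you give no independent route to it; it is not easier than the general statement. Your claim (ii) --- that past the threshold $A^n(p)$ is equivalent to a \emph{fixed} normal form in the settled interface and $p$ --- would, taken literally, force $A^n(p)\eql{\ipc}A^{n+1}(p)$ for large $n$, contradicting your own $\neg p$ example; so the normal form must depend on $n$, and specifying how is exactly the missing content. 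You acknowledge that "stating the engine lemma correctly" is the main obstacle, which is an accurate self-diagnosis: the key idea is absent.

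In Ruitenburg's proof (the one the paper formalizes) the induction is \emph{not} on a complexity measure of $A$ and the "interface" is not the family $A^i(\bot),A^i(\top)$. The central notion is a \emph{bound}: a finite set of formulas, each provably equivalent (over a suitable context) to an implicational subformula of $A$ with $\top$ substituted for $p$; such a set of size linear in $|A|$ always exists. The key theorem (Theorem 1.4) is a trichotomy, proved by induction on the \emph{cardinality of the bound} (with an inner structural induction over a second formula $B$ whose bound-subformulas are contained in those of $A$), involving a fresh variable $v$ and the hypothesis $A^{2n}(p)\impl v$. Its payoff (Corollary 1.5) is a \emph{relative} derivability statement, $A[\top/p],\,A^i(p)\icons A^{2n+1}(p)$ for every $i$, where $n$ is the size of the bound --- not a periodicity statement about $p$-free instances. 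Two further lemmas then convert this into the theorem: from derivability of $A^m(p)$ under every hypothesis $A^i(p)$ (plus $A[\top/p]$) one gets $A[\top/p]\icons A^m(p)\eqv A^{m+1}(p)$ (Lemma 1.6), and from that one discharges the hypothesis $A[\top/p]$ to obtain $\icons A^{m+1}(p)\eqv A^{m+3}(p)$ outright (Lemma 1.8), giving $b=2n+2$. The single formula $A[\top/p]$ and the bound play the role you wanted your "interface" to play, but the mechanism --- counting equivalence classes of implicational subformulas over a growing context, and working with relative derivability of later iterates from earlier ones --- is what closes the induction, and none of it is present in your sketch.
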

In fact, Ruitenburg's theorem 
 is effective: the proof provides an algorithm to compute $b$ in question, as discussed in Sections \ref{sec:bounds_ens} and \ref{sec:bounds_lists} below. Moreover, as the periodic sequence property (in all its incarnations) transfers from sublogics to extensions in the same signature  (just like local finiteness and unlike uniform interpolation), we also get that all superintuitionistic logics (\emph{si-logics}) have ulps. This shows that unlike local finiteness, ulps does \emph{not}  guarantee the finite model property (\emph{fmp}), or even Kripke completeness. 
 
On the other hand, there is an obvious connection with fixpoint definability\footnote{In fact, the present author got interested in Ruitenburg's result for similar reasons, in the context of  ongoing joint work with Albert Visser on definability of fixpoint in intuitionistic modal logics involving (a strong version of) the L\"ob axiom. I would like to thank Albert Visser  for attracting my attention to Ruitenburg's work, for his support and comments on early drafts. Thanks are also due to Wim Ruitenburg for providing his recollections of how the theorem was proved. Furthermore, George Metcalfe kindly corrected my misunderstandings concerning the status of \lname{RM} \bro``\lname{R} with Mingle''\brc and provided me with several additional references. Finally, I would like to thank the referees of this and earlier incarnations of this paper, Alexis Saurin, Lutz Schr\"oder and the participants of Oberseminar of our group 
 at FAU  for discussions and suggestions. 
 }: If $A(p)$ is a monotone formula, then  $\{A^n(p)\}_{n \in \omega}$ stabilizes when reaching a cycle. In particular, substituting $\bot$ for $p$ in $A^b(p)$ produces the least fixpoint, while substituting $\top$ produces the greatest fixpoints. This is why Mardaev \cite{Mardaev1993:al,Mardaev1994:al,Mardaev1997:bsl,Mardaev1998:al,Mardaev2007:jancl} quotes Ruitenburg when investigating the issue of fixpoint definability in non-classical logics.  Periodic sequences, however, are by no means the only way of ensuring that a logic has definable fixpoints: there are examples of systems having the latter property without the former. In fact, a combination of Pitts' \cite{Pitts92:jsl} uniform interpolation  with what modal logicians would describe as a (definable) \emph{master modality}\footnote{This is only needed if the logic in question contains additional ``modal'' connectives or lacks some structural rules. For intutionistic propositional logic itself, the requirement of having a ``master modality'' (global deduction theorem, equationally definable principal congruences...) is trivially satisfied, just like for standard relevance logics. On the other hand, 
 this criterion is not generally met by substructural logics such as those covered by Theorem \ref{th:square}. They generally fail to 
  satisfy axioms ensuring EDPC \cite[Theorem 3.55]{GalatosJKO07}. Same problems arise with non-transitive modalities, even in the classical unimodal setting.} plus some trivial additional restrictions would be sufficient.  Ghilardi et al. \cite{Ghilardi2016, GhilardiGS20} discuss further the issue of computing fixpoints and fixpoint definability, and compare the two approaches. It is worth mentioning here that Ruitenburg in the final part of his paper suggests a potential connection with uniform interpolation, despite preceding  Pitts' \cite{Pitts92:jsl} by several years. 
 
Finally, as suggested by a referee, a clarification might be in order. The reader might be aware of results on definability of fixpoints in modal logics (classical or intuitionistic ones) containing some form of the L\"ob axiom. Nevertheless, as pointed out in Corollaries \ref{cor:modclass} and \ref{cor:intclas}, such logics generally fail the plps property. Such definability results concern \emph{guarded} or \emph{modalized} fixpoints, i.e., those where the fixpoint variable occurs only in the scope of a modal operator. Van Benthem \cite{Benthem06:sl} and Visser \cite{Visser05:lncs} illustrate how to use definability of such fixpoints to derive definability of ordinary, monotone fixpoints: Precisely here one can use  the periodic sequence property for the underlying modality-free (\emph{extensional}) reduct of the logic in question.

The purpose of this note is twofold. In \Cref{sec:context}, I discuss the status of periodic sequences in other non-classical logics,  illustrating just how special the situation of \ipc\ is. Starting from \Cref{sec:formalization}, I present a mechanization of Ruitenburg's result in the Coq proof assistant.\footnote{The content of both parts is based on the work done in years 2015--2017, which for various reasons remained unpublished and presented only in the form of a talk at TACL 2017.}

\subsection{Related work} \label{sec:related}
In 2015--2016, when the mechanization described herein 
 was produced, Ruitenburg's original and poorly understood syntactic proof\footnote{It is worth mentioning here that Wim Ruitenburg himself (p.c.) claims that his original proof was utilizing Kripke semantics. Difficulties in explaining it to his colleagues, in particular Albert Visser, and Visser's additional insights finally recorded as Lemma 1.7 in Ruitenburg's paper, convinced him to cast the argument into a purely syntactic setting, which at the time proved clear enough to both Ruitenburg and Visser.}  was the only available one. This in fact motivated the present author to take on the challenge of mechanizing the proof, despite a relatively limited experience with proof assistants at the time. In the meantime, Ghilardi and Santocanale \cite{GhilardiS18,GhilardiS20} provided a semantic proof using the apparatus developed in the Ghilardi and Zawadowski monograph \cite{ghil:shea02}, involving games for bounded bisimulations and (pre)sheaves over the category of finite rooted posets with bounded morphisms. Nevertheless, as Ghilardi and Santocanale admit, their semantic proof does not provide tight bounds and computational information provided by Ruitenburg's proof, and extracted by the Coq mechanization described in this paper. Furthermore, the hope expressed in their final remark
\begin{quote}
While we can expect that periodicity
phenomena of substitutions do not arise for the basic modal logic \lname{K}, they
surely do for locally tabular [i.e., locally finite] modal logics. Considering also the numerous
results on definability of fixpoints \dots these phenomena are
likely to appear in other subsystems of modal logics. As far as we know,
investigation of periodicity phenomena in modal logics is a research direction which has not yet been explored and where the bounded bisimulation
methods might prove their strength once more.
\end{quote}
in the light of \Cref{sec:context} herein requires qualification: outside of locally finite modal logics, there seems to be little hope and scope for periodicity. Open Problem \ref{probl:pll} below isolates one potential intuitionistic modal logic for which a generalization of Ruitenburg result might be possible. In fact, the mechanization described here can be used in investigating the problem or (should the answer turns out to be positive) verifying a potential proof; cf. \Cref{rem:extensions} in \Cref{sec:setup} and \Cref{rem:pll} in \Cref{sec:aux}.

Of all Coq formalizations of non-trivial results concerning various propositional calculi, the recent work of F{\'{e}}r{\'{e}}e and van Gool \cite{FereeG23} is probably closest to our interests here. It deals with Pitt's syntactic proof of uniform interpolation for \ipc, which as discussed above provides another route towards fixpoint definability, and it also allows extraction of executable code, actually computing propositional quantifiers\footnote{To make the relationship even stronger, the apparatus developed in the Ghilardi and Zawadowski monograph \cite{ghil:shea02} provides a model-theoretic proof of both Pitts' result and Ruitenburg's result. In fact, the starting point for that monograph was their earlier article \cite{ghil:shea95}, explicitly motivated as ``language-free'' or categorical analysis of uniform interpolation. Visser  \cite{viss:laye96} follows a similar approach based on bounded bisimulations, cast in somewhat less categorical terms.}. As  Pitts' proof was cast in the setting of the terminating sequent calculus \lname{G4ip} \cite{hude:boun89,dyck:cont92} (Ruitenburg, by contrast, works with a slightly idiosyncratic and purely Hilbert-style setting, as discussed in \Cref{sec:formalization}),  F{\'{e}}r{\'{e}}e and van Gool \cite{FereeG23} mechanizes some metatheory of that calculus, in particular admissibility of a restricted form of cut and other structural rules. The present mechanization simply assumes decidability of \ipc\ and does not attempt to provide either a syntactic proof via cut elimination or a Kripke-based semantic one, although such developments are available elsewhere.  

Finally, there is an entire recent body of work mechanizing in Coq \lname{G4ip}-style calculi for several propositional logics (over classical and intuitionistic base) developed by Shillito and coauthors \cite{ShillitoGGI23,ShillitoG22,GoreRS21}. It would seem of interest to turn the present formalization into a part of a larger library, integrating the developments described above and possibly other scattered contributions, such as the Coq development supporting the discussion of modal negative translations in Litak et al. \cite{litakpr17}. 




\section{Periodic sequences in nonclassical logics} \label{sec:nonclassical} \label{sec:context}


\ifdef{\longv}{
In order to understand properly the special status of Ruitenburg's result, let us compare the situation in \ipc\ with that in other non-classical logics, e.g., substructural or modal ones. It turns out that in almost all standard cases,  plps (and even more so, ulps) implies local finiteness; \ipc\ seems rather exotic in having the first property without the second one. 




\subsection{Modal logics over \cpc}

For modal logics over the boolean propositional base, the reader can refer to, e.g., Chagrov and Zakharyaschev \cite{ChagrovZ97:ml} for notation, syntax and semantics; one difference is that I am using here a superscript \cln{\cdot} to make the \cpc\ propositional base clear.  For transitive modal logics, having periodic sequences is indistinguishable from local finiteness, i.e., the converse of Lemma \ref{lem:llf} holds:}
{As it turns out, however, finding other natural examples of logic enjoying plps without local finiteness  is a very challenging task. First let us consider intuitionistic or classical normal modal logics (with $\Box$ only), with  superscript \cln{\cdot} denoting the \cpc\ propositional base:}

\begin{theorem} \label{th:transfail}
A normal extension of \cln{K4} has plps iff it is locally finite.
\end{theorem}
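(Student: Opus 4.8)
The plan is to prove the nontrivial direction by contraposition, using the classical structure theory of transitive modal logics to reduce to a single explicit witness. The $(\Leftarrow)$ direction is immediate from Lemma~\ref{lem:llf}. For $(\Rightarrow)$, assume $L$ is a normal extension of $\cln{K4}$ that is \emph{not} locally finite; I will produce one formula $A$ whose substitution iterates $p, A(p), A^2(p), \dots$ are pairwise $\eql{L}$-inequivalent, so the sequence is not eventually periodic and plps fails. The structural input I would invoke is the theory of $\cln{K4}$ due to Fine (see \cite[Ch.~12]{ChagrovZ97:ml}): a normal extension of $\cln{K4}$ of finite depth has the finite model property and is locally finite, because at any fixed finite number of colours a frame of depth $\le d$ admits only boundedly many bisimulation types. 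Consequently, failure of local finiteness forces $L$ to have \emph{infinite depth}: for every $d$ there is a finite model, based on a frame validating $L$, containing a chain of clusters $C_0, C_1, \dots, C_d$ (with $C_i$ strictly below $C_j$ whenever $i < j$), possibly with further clusters branching upward off the chain.

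The witness I would use is the one-variable formula $A(p) \deq \Diamond p \wedge \neg p$ --- ``$p$ fails here but holds at some accessible point''. The heart of the argument is the following description of its iterates, proved by induction on $k$ (transitivity is used at each step to collapse the Boolean residue): in any model, $A^k(p)$ holds at a point $w$ iff $p(w)$ is true exactly when $k$ is even, \emph{and} there is an accessibility path $w = x_0 \to x_1 \to \cdots \to x_k$ along which the truth value of $p$ strictly alternates. Now take the frame above, colour $p$ to be true exactly on the clusters $C_i$ with $i$ odd (and false everywhere else, in particular at all branch points), and evaluate at a point $w_0 \in C_0$. Since $p$ is constant on clusters and uniformly false off the chain, an alternating path from $w_0$ can only collect its $p$-true vertices at odd-indexed clusters, whose indices must strictly increase along the path; hence the longest alternating path from $w_0$ has length between $d$ and $d+1$, and $A^k(p)$ holds at $w_0$ precisely when $k$ is odd and $k \le d$. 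Colouring $p$ on the even-indexed clusters instead captures the even values of $k$. Thus, given any $j < j'$, a suitable choice of $d$ and of the parity of the valuation produces an $L$-model separating $A^j(p)$ from $A^{j'}(p)$; the iterates $A^k(p)$ are therefore pairwise $\eql{L}$-inequivalent, and $L$ lacks plps.

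The main obstacle is the first step, passing from ``not locally finite'' to ``arbitrarily deep finite models validating $L$''. This rests on Fine's finite model property for finite-depth extensions of $\cln{K4}$ together with the fact that the bounded-depth axioms pin down the depth of (general) frames; in full generality one must be a little careful with descriptive frames, where not every valuation is admissible, but the classical arguments still deliver finite models of the required depth. The second point requiring care is the path-counting: one must check that the clusters branching off the chosen chain, although they create additional alternating paths, cannot produce alternating paths essentially longer than the chain itself --- this is where the observation that a branch off $C_i$ never reaches back into $C_0, \dots, C_i$ is used. The remaining inductions (the characterisation of $A^k(p)$ and the bookkeeping on indices and parities) are routine.
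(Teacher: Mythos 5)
Your overall strategy coincides with the paper's: the $(\Leftarrow)$ direction is Lemma~\ref{lem:llf}, and for $(\Rightarrow)$ both you and the paper invoke the Segerberg--Maksimova characterization \cite[Theorem 12.21]{ChagrovZ97:ml} to trade failure of local finiteness for infinite depth, and then exhibit a single formula whose iterates are separated along deep chains of clusters. Where you differ is the witness: the paper takes the two-variable formula $A_{\lname{K4}}(p) \deq q \vee \Box(q \to \Box p)$ and delegates the rest to ``a straightforward modification of the argument proving'' the cited theorem (with $q$ evaluated like $p$), whereas you take the one-variable $\Diamond p \wedge \neg p$ and give a self-contained analysis via alternating paths. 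That analysis is correct on transitive models (transitivity is exactly what makes the ``no shorter alternating path from $w$'' half of the induction go through), and on a Kripke model consisting of a chain of clusters with $p$ constant on clusters and false off the chain it behaves as you claim.

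The gap is in the reduction that precedes the combinatorics. From ``$L$ is not locally finite'' you conclude that for every $d$ there is a \emph{finite} model based on a frame validating $L$ containing a chain of $d+1$ clusters, and your separation of $A^j(p)$ from $A^{j'}(p)$ moreover needs the depth to be essentially \emph{equal} to a chosen $d$ between $j$ and $j'$. Neither is available in general: a normal extension of $\cln{K4}$ of infinite depth need not have the finite model property, need not be Kripke complete, and need not admit frames of any prescribed exact finite depth. What the failure of the bounded-depth axiom $\mathsf{bd}_d$ actually yields is a strictly ascending chain of $d+1$ points in a \emph{descriptive} frame for $L$, separated by \emph{admissible} sets; your valuation (``$p$ true exactly on the odd-indexed clusters and false everywhere else, in particular at all branch points'') is a global stipulation that has no reason to be admissible, and it is precisely what your path-counting leans on to cap the length of alternating paths. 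Two repairs are needed: (i) separate $A^j(p)$ from $A^{j'}(p)$ by moving the evaluation point along a single sufficiently deep chain rather than by varying the depth of the frame; (ii) replace your valuation by a Boolean combination of the admissible sets refuting $\mathsf{bd}_d$ and redo the path bound with only the control over $p$ that such sets provide. Point (ii) is where the one-variable witness may genuinely cost you something: the paper's extra variable $q$ is there so that the construction can reuse the admissible refutation sets from the proof of \cite[Theorem 12.21]{ChagrovZ97:ml} verbatim. As it stands, your argument is complete only for extensions with the finite model property (such as $\cln{K4}$, $\cln{S4}$, $\cln{GL}$ themselves), where every valuation on a finite frame is admissible.
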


\ifdef{\longv}{
\begin{proof}
 It is known \cite[Theorem 12.21]{ChagrovZ97:ml} that a normal modal logic extending \lname{K4} is locally finite iff it is of \emph{infinite depth}, i.e., admits Kripke frames of arbitrary finite depths. Consider $A_\lname{K4}(p) \deq q \vee \Box (q \to \Box p)$. A straightforward modification of the argument proving the above equivalence \cite[Theorem 12.21]{ChagrovZ97:ml} shows the failure of plps (in the proof, the valuation for $q$  should be defined in the same way as the valuation for $p$): the sequence $\{A_\lname{K4}^n(p)\}_{n \in \omega}$ never stabilizes.
\end{proof}}{}

\ifdef{\longv}{
\begin{corollary} \label{cor:modclass}
All extensions of \cln{K} contained in either \cln{S4Grz.3}   (such as  \cln{K4},  \cln{S4},  \cln{T}) or \cln{GL.3} \bro in particular \cln{GL}\brc fail to have locally periodic sequences.
\end{corollary}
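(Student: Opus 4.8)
The plan is to obtain \Cref{cor:modclass} from \Cref{th:transfail} by means of the transfer principle already recorded in the introduction: the periodic sequence property passes from a sublogic to any extension in the same signature, and hence its \emph{failure} propagates \emph{downwards} along~$\subseteq$. Concretely, if $L\subseteq L'$ then $\eql{L}\;\subseteq\;\eql{L'}$ as relations on formulas, so any iterated substitution sequence $\{A^n(p)\}_{n\in\omega}$ that is not eventually periodic modulo $\eql{L'}$ is a fortiori not eventually periodic modulo $\eql{L}$. Thus it suffices to prove that the two largest logics mentioned in the statement, \cln{S4Grz.3} and \cln{GL.3}, themselves fail plps; every normal modal logic $L$ with $\cln{K}\subseteq L\subseteq\cln{S4Grz.3}$ or $\cln{K}\subseteq L\subseteq\cln{GL.3}$ then inherits a non-periodic substitution sequence, and so fails plps as well.

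For the two enclosing logics I would argue as follows. Both \cln{S4Grz.3} and \cln{GL.3} are normal extensions of \cln{K4} (the former in the reflexive, the latter in the irreflexive transitive setting), and both admit finite frames of arbitrarily large depth --- e.g.\ finite chains --- so both are of infinite depth. By \cite[Theorem~12.21]{ChagrovZ97:ml}, a normal extension of \cln{K4} is locally finite iff it is of finite depth, so neither \cln{S4Grz.3} nor \cln{GL.3} is locally finite; by \Cref{th:transfail} neither has plps. Unpacking the proof of \Cref{th:transfail} one even gets a single uniform witness, namely $A_\lname{K4}(p)=q\vee\Box(q\to\Box p)$: the sequence $\{A_\lname{K4}^n(p)\}_{n\in\omega}$ is never eventually periodic modulo $\eql{\cln{S4Grz.3}}$, nor modulo $\eql{\cln{GL.3}}$.

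Assembling the two steps yields the statement: every normal modal logic sandwiched below \cln{S4Grz.3} (in particular \cln{K4}, \cln{S4}, \cln{T}) or below \cln{GL.3} (in particular \cln{GL}) fails plps, and hence fails ulps and both global variants too, plps being the weakest of the four notions in the table. I do not anticipate a real obstacle here: all the substance is already contained in \Cref{th:transfail}. The only points requiring attention, rather than effort, are (i) using the transfer principle in the correct direction --- it is \emph{failure} of periodicity that descends to sublogics, not periodicity itself --- and (ii) the routine frame-theoretic bookkeeping that \cln{S4Grz.3} and \cln{GL.3} are \cln{K4}-extensions of infinite depth and that the quoted examples genuinely lie below them, e.g.\ $\cln{T}\subseteq\cln{S4}\subseteq\cln{S4Grz.3}$, $\cln{K4}\subseteq\cln{S4Grz.3}$, and $\cln{GL}\subseteq\cln{GL.3}$.
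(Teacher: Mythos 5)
Your proposal is correct and matches the derivation the paper intends: the paper states \Cref{cor:modclass} immediately after \Cref{th:transfail} with no separate proof, relying exactly on the downward transfer of plps failure from \cln{S4Grz.3} and \cln{GL.3} (both infinite-depth extensions of \cln{K4}, witnessed by $A_\lname{K4}$) to all their sublogics over \cln{K}. The only addition the paper makes is a side remark that for subsystems of \cln{GL.3} one can alternatively use the simpler chain-based argument of \Cref{th:sfail}, but your route is the primary one.
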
 

For subsystems of \cln{GL.3}, this can be proved via a simpler alternative technique that remains useful when the propositional base is weakened to \ipc; see Theorem \ref{th:sfail}.

Moreover, even without transitivity it does not appear easy to find examples of logics with plps which are not locally finite. Shapirovsky \cite{Shapirovsky18:aiml} has provided an example of a normal modal logic which has finitely many formulas in one variable, but fails local finiteness. Unfortunately, the technique used in the proof of Theorem \ref{th:transfail}  can be also applied to his example with a minor modification: namely, use $$A_\lname{Sha}(p) \deq q \vee \Box (q \to \Box (p \vee r)),$$ where $r$ is to be evaluated as $\{\omega\}$ in Shapirovsky's frame.

\subsection{Intuitionistic modal logics}

The reader is referred to the extensive literature \cite{Simpson94:phd,WolterZ97:al,WolterZ98:lw,Litak14:trends} for basic information about intuitionistic modal logics. Just for clarity,  we are only discussing here intuitionistic modal logics with a single modality $\Box$ (no $\Diamond$), which is reflected in the notation. 
Theorem \ref{th:transfail} immediately extends to intuitionistic modal logics being counterparts of standard extensions of \cln{K} (see Simpson's \cite{Simpson94:phd} Requirement 3):

\begin{corollary} \label{cor:intclas}
All extensions of \iln{K} contained in either \cln{S4Grz.3}   \bro such as \iln{K4},  \iln{T},\iln{S4},   \iln{S4Grz.3} or  \iln{S4Grz.3}\brc or \cln{GL.3} \bro in particular \iln{GL} or \iln{GL.3}\brc fail to have locally periodic sequences.
\end{corollary}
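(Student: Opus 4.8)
The plan is to derive Corollary~\ref{cor:intclas} from its classical counterpart, Corollary~\ref{cor:modclass}, via the transfer principle recorded in \Cref{sec:intro}: the periodic sequence property (in every one of its incarnations) passes from a sublogic to each extension in the same signature, so its failure passes in the opposite direction. First I would record the relevant containments at the level of derivable formulas. If $L$ is any extension of $\iK$ contained in $\cln{S4Grz.3}$ (resp.\ in $\cln{GL.3}$), then $\eql{L}\subseteq\eql{\cln{S4Grz.3}}$ (resp.\ $\eql{L}\subseteq\eql{\cln{GL.3}}$) is simply the hypothesis. For the named examples one checks it directly: the classical companion of an intuitionistic modal logic is obtained by strengthening the propositional base to \cpc\ while keeping the same modal axioms, so it proves every theorem of the intuitionistic version; and these companions sit where expected, e.g.\ $\cln{K4},\cln{T},\cln{S4},\cln{S4Grz.3}\subseteq\cln{S4Grz.3}$ while $\cln{GL},\cln{GL.3}\subseteq\cln{GL.3}$. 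Here I would cite Simpson's Requirement~3 \cite{Simpson94:phd} for the precise correspondence between the intuitionistic modal systems and their classical counterparts.

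Second, I would apply transfer in contrapositive form: whenever $\eql{L}\subseteq\eql{M}$ and $M$ fails plps as witnessed by a formula $A$ all of whose iterates $A^n(p)$ are expressible in the signature of $L$, then that same $A$ witnesses the failure of plps for $L$ --- for an $L$-equivalence $A^b(p)\eql{L}A^{b+c}(p)$ would in particular be an $M$-equivalence, which is excluded. By Corollary~\ref{cor:modclass}, both $\cln{S4Grz.3}$ and $\cln{GL.3}$ fail plps, the witness supplied by the proof of Theorem~\ref{th:transfail} being $A_{\lname{K4}}(p)\deq q\vee\Box(q\to\Box p)$ --- a formula built only from $\vee$, $\to$ and $\Box$, hence belonging to the signature of every intuitionistic modal logic with a single $\Box$; the extra variable $q$ does no harm, since the whole setup treats the variables other than $p$ as parameters. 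Combining the two steps yields the failure of plps --- and hence of every stronger variant --- for every $L$ as in the statement.

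The only genuinely delicate point is the bookkeeping in the first step: one must be sure that the axiomatizations of the listed intuitionistic systems and of their classical companions match up as claimed, and that each companion really does sit inside $\cln{S4Grz.3}$ or $\cln{GL.3}$. This is routine modal-logic folklore rather than a new argument, but it is precisely where the phrase ``immediately extends'' could hide an oversight, so it deserves to be stated or referenced carefully. I would also note that for the subsystems of $\cln{GL.3}$ the more elementary and propositionally robust technique of Theorem~\ref{th:sfail} (alluded to in the remark after Corollary~\ref{cor:modclass}) provides an independent route to the failure of plps that never passes through the classical logics, and could be offered as a self-contained alternative.
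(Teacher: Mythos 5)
Your proposal is correct and takes essentially the same route as the paper: the paper obtains Corollary~\ref{cor:intclas} by remarking that Theorem~\ref{th:transfail} ``immediately extends'' to these systems, which is precisely the argument you spell out --- each listed intuitionistic modal logic is contained (as a set of theorems in the common unimodal signature) in \cln{S4Grz.3} or \cln{GL.3}, and the failure of plps transfers downward along containment via the same witness $A_{\lname{K4}}(p) = q \vee \Box(q \to \Box p)$. Your closing observation that subsystems of \cln{GL.3} admit an independent, more elementary argument is likewise exactly the paper's own remark following Corollary~\ref{cor:modclass}, realized in Theorem~\ref{th:sfail}.
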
 
}
{\begin{corollary}  \label{cor:intclas}
All extensions of \iln{K} contained in either \cln{S4Grz.3} \bro including, for example, \cln{K}, \cln{K4},  \cln{S4},  \cln{T}, \iln{K4},  \iln{T}, \iln{S4},   \iln{S4Grz.3} or  \iln{S4Grz.3}\brc or \cln{GL.3} \bro including, for example, \cln{GL}, \iln{GL} or \iln{GL.3}\brc fail to have locally periodic sequences.\footnote{The reader is referred to the extensive literature \cite{BellinPR01:m4m,Simpson94:phd,Wolter97:sl,WolterZ97:al,WolterZ98:lw,Litak14:trends} for basic information about intuitionistic modal logics, including axiomatizations of systems mentioned in this theorem.}
\end{corollary}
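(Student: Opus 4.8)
The plan is to derive the whole statement from two observations about the two maximal logics it mentions, \cln{S4Grz.3} and \cln{GL.3}, together with the transfer principle recalled in \Cref{sec:intro}: the periodic-sequence properties pass from any logic to every extension of it in the same signature, so \emph{failure} of plps passes from any logic down to every one of its sublogics. Each logic named in the statement, and more generally any $L$ with \iln{K} $\subseteq L \subseteq$ \cln{S4Grz.3} or \iln{K} $\subseteq L \subseteq$ \cln{GL.3}, lives in the single signature $\{\wedge,\vee,\to,\bot,\Box\}$ ($\Diamond$ being definable from $\Box$ in the classical systems, $\neg$ from $\to$ and $\bot$), so it suffices to show that \cln{S4Grz.3} and \cln{GL.3} themselves fail plps and then push this down.

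For the two ceilings the argument is uniform. Both are normal modal logics extending \cln{K4} --- \cln{S4Grz.3} extends \cln{S4} and \cln{GL.3} extends \cln{GL}, and \cln{S4} and \cln{GL} both extend \cln{K4} --- and both are of infinite depth, being validated respectively by all finite reflexive chains and by all finite irreflexive chains, hence admitting rooted frames of arbitrarily large finite depth. By the Chagrov--Zakharyaschev characterization of local finiteness of transitive modal logics \cite[Theorem~12.21]{ChagrovZ97:ml} (recall that a normal extension of \cln{K4} is locally finite iff it is of finite depth), neither \cln{S4Grz.3} nor \cln{GL.3} is locally finite; for \cln{GL.3} this is in fact visible already in the constant fragment, since $\Box^{n+1}\bot$ holds at the root of the irreflexive chain on $n+1$ points while $\Box^n\bot$ fails there, so the formulas $\Box^n\bot$ are pairwise non-equivalent. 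Now \Cref{th:transfail} asserts that a normal extension of \cln{K4} has plps iff it is locally finite, so both \cln{S4Grz.3} and \cln{GL.3} fail plps.

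Transferring downwards then finishes the proof. Among the listed systems, the classical ones below \cln{S4Grz.3} --- \cln{K}, \cln{K4}, \cln{T}, \cln{S4} --- are sublogics of it by the standard inclusions, and each intuitionistic one --- \iln{K4}, \iln{T}, \iln{S4}, \iln{S4Grz.3} --- is contained in its classical counterpart (adjoining excluded middle to the propositional base), which is in turn contained in \cln{S4Grz.3}; likewise \cln{GL}, \iln{GL} and \iln{GL.3} are sublogics of \cln{GL.3}. Hence all of them, and with them every $L$ lying between \iln{K} and one of the two ceilings, fail plps, and a fortiori also fail ulps and the two global variants.

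The step I expect to demand the most care is the non-local-finiteness of \cln{S4Grz.3}: one is tempted to regard the logic of finite chains as locally finite --- indeed its si-counterpart \lname{LC}, the logic of linear Heyting algebras, \emph{is}, and therefore does enjoy plps by \Cref{lem:llf} --- whereas it is precisely the infinite depth of the modal frames that destroys local finiteness here, and this contrast is exactly what makes the corollary worth stating. The only remaining, purely bookkeeping subtlety is placing each intuitionistic system below the right classical ceiling; this can be avoided altogether by instead verifying directly, as anticipated in the discussion just before the statement, that \Cref{th:transfail} carries over verbatim to the intuitionistic modal logics in question, the relevant reflexive and irreflexive chains being bi-relational frames for these logics as well.
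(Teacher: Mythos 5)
Your proposal is correct and follows essentially the same route as the paper: the corollary is obtained from \Cref{th:transfail} applied to the two non-locally-finite ceilings $\cln{S4Grz.3}$ and $\cln{GL.3}$ (the paper's proof of that theorem supplying the concrete witness $A_{\lname{K4}}(p) = q \vee \Box(q \to \Box p)$ on frames of unbounded depth), combined with the downward transfer of plps-failure to all sublogics in the same signature, which covers the intuitionistic systems via their inclusion in the classical counterparts. Your closing remark about verifying \Cref{th:transfail} directly on the birelational chain frames matches the paper's own comment that the theorem ``immediately extends'' to the intuitionistic modal counterparts.
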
 }

Some intuitionistic modal logics of  computational interest have ``degenerate'' classical counterparts \ifdef{\longv}{(see \cite{Litak14:trends} for a discussion)}{} and hence Corollary \ref{cor:intclas} cannot be used to disprove that they have periodic sequences. This includes $\iln{S} \deq \iln{K} \oplus A \to \Box A$, i.e.,  the Curry-Howard logic of \emph{applicative functors}, also known as \emph{idioms} \cite{McBrideP08:jfp}. Its 
 classical counterpart \cln{S}\ and all its two consistent proper 
 extensions are 
 finite logics enjoying ulps. 
 \ifdef{\longv}{
  In fact, \cln{S}  has exactly two proper consistent extensions, one 
  denoted as \lname{Triv} 
  and the other 
  denoted as \lname{Ver} \cite{ChagrovZ97:ml}.}{}  
 In contrast, not only does  \iln{S}\ have uncountably many propositional extensions, but the failure of plps remains a common phenomenon among  them\ifdef{\longv}{. To show this, one can use a proof technique applicable to (subsystems of) logics with L\"ob-style axioms, either intuitionistic or classical ones}{}: 


\begin{theorem} \label{th:sfail}
No sublogic of  $\iln{KM.3}$, also denoted as $\lname{KM_{lin}}$ \textup{\cite{CloustonG15:fossacs}} has parametrically locally periodic sequences; this in particular applies to  $\iln{SL.3} \deq \iln{S} \oplus \iln{GL.3}$, $\iln{SL} \deq \iln{S} \oplus \iln{GL}$ or \iln{S}.
\end{theorem}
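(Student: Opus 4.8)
\emph{Proof strategy.}
The plan is to reduce the statement to a single observation about the top logic $\iln{KM.3}=\lname{KM_{lin}}$ and then invoke the fact, recalled in \Cref{sec:intro}, that the periodic sequence property transfers from sublogics to extensions in the same signature. Thus it suffices to prove that $\iln{KM.3}$ itself fails plps: if a sublogic $L$ of $\iln{KM.3}$ had plps, then $\iln{KM.3}$, being an extension of $L$, would inherit plps, a contradiction. Since $\iln{S}$, $\iln{SL}$ and $\iln{SL.3}$ are all sublogics of $\iln{KM.3}$, this yields the ``in particular'' clause.

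For the failure of plps I would use the minimal possible witness $A(p)\deq\Box p$, so that $A^n(p)=\Box^n p$. It is then enough to show that the formulas $\Box^n p$, $n\in\omega$, are pairwise non-equivalent modulo $\eql{\iln{KM.3}}$: for then $A^b(p)=\Box^b p\not\eql{\iln{KM.3}}\Box^{b+c}p=A^{b+c}(p)$ for every $b$ and every $c>0$, so $\{A^n(p)\}_{n\in\omega}$ is not eventually periodic, let alone parametrically-locally periodic. This is the ``simpler alternative technique'' announced after \Cref{th:transfail}: unlike for $\cln{K4}$, where one needs frames with infinite ascending chains and a more elaborate witness such as $q\vee\Box(q\to\Box p)$, the L\"ob axiom of $\lname{KM_{lin}}$ forces converse well-foundedness, so modalized sequences converge; one instead exploits that arbitrarily long \emph{finite} chains remain available to keep the $\Box^n p$ distinct.

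To establish the non-equivalences I would pass to the classical companion. Adding excluded middle to $\lname{KM_{lin}}$ yields $\cln{GL.3}$ (the classical companion of Kuznetsov--Muravitsky logic being $\cln{GL}$), so $\iln{KM.3}\subseteq\cln{GL.3}$ and it is enough to show $\Box^m p\not\eql{\cln{GL.3}}\Box^n p$ whenever $m<n$. For $N>n$ consider the finite strict linear order on $\{0,1,\dots,N-1\}$ with $j\,R\,i$ iff $i<j$; this is a $\cln{GL.3}$-frame, and putting $p$ false at every world, a routine induction on $k$ shows that $\Box^k p$ holds exactly at the worlds $0,1,\dots,k-1$. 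Hence the world $m$ satisfies $\Box^n p$ (as $m<n$) but refutes $\Box^m p$, so $\Box^m p\eqv\Box^n p$ fails in this $\cln{GL.3}$-model, and therefore $\Box^m p\not\eql{\cln{GL.3}}\Box^n p$. (Alternatively one may stay intuitionistic: the same chain, equipped with the trivial intuitionistic preorder, is already a $\lname{KM_{lin}}$-model, and the same computation gives $\Box^m p\not\eql{\iln{KM.3}}\Box^n p$ directly.)

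The non-routine steps are really just two: noticing that the bare modality $\Box p$ suffices once one has reduced the problem to the single logic $\iln{KM.3}$, and getting the bookkeeping about $\lname{KM_{lin}}$ right --- that $\iln{S}$, $\iln{SL}$, $\iln{SL.3}$ lie below it \cite{CloustonG15:fossacs} and that its classical companion is $\cln{GL.3}$. Both are standard; the induction computing $\Box^k p$ and the verification that finite strict linear orders validate $\cln{GL.3}$ (resp.\ $\lname{KM_{lin}}$) are entirely mechanical, so I do not anticipate a genuine obstacle beyond assembling these ingredients.
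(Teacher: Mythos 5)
Your skeleton matches the paper's: reduce to the single logic $\iln{KM.3}$ via the transfer of plps from sublogics to extensions, take the witness $A(p) \deq \Box p$, and show the iterates $\Box^n p$ do not become periodic. The paper does exactly this, working with one infinite frame --- the reverse order on the natural numbers, with the modal relation its irreflexive and the intuitionistic order its reflexive version --- and the valuation sending $p$ to $\emptyset$, under which $\Box^n p$ denotes an initial segment that keeps growing with $n$, so the sequence never stabilizes and plps fails in every logic sound in that frame.

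The genuine gap is in how you certify $\Box^m p \not\eql{\iln{KM.3}} \Box^n p$. Your main route rests on the claim that adding excluded middle to $\lname{KM_{lin}}$ yields $\cln{GL.3}$, so that $\iln{KM.3} \subseteq \cln{GL.3}$. This is false, and the paragraph of the paper immediately preceding the theorem exists precisely to flag it: $\iln{KM.3}$ contains the axiom $A \to \Box A$ of $\iln{S}$, which is not a theorem of $\cln{GL.3}$ (it fails already on a two-element irreflexive chain), so $\iln{KM.3}$ is not contained in $\cln{GL.3}$. Classically, $A \to \Box A$ together with a L\"ob-style axiom collapses $\Box$ to $\top$; the classical counterpart of these logics is degenerate and locally finite (an extension of $\cln{S}$ such as \lname{Ver}), which is exactly why Corollary~\ref{cor:intclas} cannot be reused here and a separate theorem is needed. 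Your parenthetical fallback also fails as stated: a finite strict chain ``equipped with the trivial intuitionistic preorder'' (i.e., the discrete one) does not validate $A \to \Box A$ either, so it is not a $\lname{KM_{lin}}$-frame and cannot witness a non-equivalence over $\iln{KM.3}$. The repair is what the paper does: take the intuitionistic order to be the \emph{reflexive reverse order} accompanying the irreflexive modal one, so that persistence forces downward-closed valuations and the frame validates $A \to \Box A$, strong L\"ob and linearity; with $p$ everywhere false your computation of $\Box^k p$ then goes through verbatim, either on the single frame over $\omega$ or on arbitrarily long finite chains.
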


\ifdef{\longv}{
\begin{proof}
 The logic $\iln{KM.3}$ (or $\lname{KM_{lin}}$) is the logic of the Kripke frame where the modal and the intuitionistic order are, respectively, irreflexive and reflexive variant of the reverse order on natural numbers. Consider $A_\lname{KM}(p) \deq \Box p$ and the valuation sending $p$ to $\emptyset$; the denotation of $A_\lname{KM}^n(p)$ is the set of natural numbers smaller or equal to $n$. Hence, the sequence $\{A_\lname{KM}^n(p)\}_{n \in \omega}$ never stabilizes and  plps fails  in every logic sound in this frame.
 \end{proof}}{}

To contrast this with Theorem \ref{th:transfail}, note that $\iln{KM.3}$, the propositional fragment of the logic of the Mitchell-B\'enabou logic of the \emph{topos of trees} \cite{BirkedalMSS12:lmcs,CloustonG15:fossacs,Litak14:trends}, is \ifdef{\longv}{\emph{prefinite} or \emph{pretabular}: all its extensions are finite, each determined by a finite chain.
  Interestingly, neither the proof Theorem \ref{th:transfail} nor the proof of Theorem \ref{th:sfail} apply to the Propositional Lax Logic \iln{PLL} \cite{FairtloughM97:ic}, i.e., the Curry-Howard counterpart to (the type system of) Moggi's monadic metalanguage \cite{BentonBP98:jfp}.  

\begin{probl} \label{probl:pll}
Does \iln{PLL} have locally periodic sequences?
\end{probl}}
{ prefinite (\emph{pretabular}). Turning to substructural logics:}

\subsection{Substructural logics}

\newcommand{\fus}{\cdot}
\newcommand{\luka}[1]{\ensuremath{\text{\sf \L}_{#1}}}

Arguments analogous to those above establish that in the realm of substructural logics \cite{GalatosJKO07}, plps as a rule coincides with local finiteness. Consider $A_\otimes(p) \deq p \fus p$, where $\fus$ is the substructural \emph{fusion} connective \cite[\S 2.1.2]{GalatosJKO07}, also known by linear logicians as \emph{tensor} or \emph{multiplicative conjunction} $\otimes$, and in the realm of the Logic of Bunched Implications \lname{BI} and separation logic as \emph{spatial}, \emph{separating} or \emph{independent} conjunction $*$ \cite{Reynolds02:lics}. 

\begin{theorem} \label{th:square}
 The product logic \lname{\Pi},  the infinite valued \L ukasiewicz logic \luka{\infty} or the logic of the \emph{heap model} of \lname{BBI} \bro boolean logic of bunched implications \textup{\cite{BrotherstonK14:jacm,OHearnP99:jsl,PymOHY04:tcs,Reynolds02:lics}}\brc fail to have plps.  Consequently, the  property fails in all their sublogics, including \lname{(In-)FL_{(ew)}}, multiplicative-additive fragment of linear logic \lname{MALL} \bro and its intuitionistic fragment \lname{IMALL}\brc and fuzzy logics such as \lname{BL} or \lname{MTL}.\footnote{See Galatos et al. \cite{GalatosJKO07} for substructural systems mentioned in the statement of this theorem.}
\end{theorem}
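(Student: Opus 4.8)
Here is the plan. Take $A_\otimes(p) \deq p \fus p$ as the witness formula throughout. The first observation is that, up to provable equivalence in any of the logics in the statement, $A_\otimes^n(p)$ is the $2^n$-fold fusion power of $p$, which I write $p^{\fus 2^n}$: indeed $A_\otimes^0(p)=p$ and $A_\otimes^{n+1}(p)=A_\otimes^n(p)\fus A_\otimes^n(p)$, so associativity of $\fus$ (no commutativity needed, since all copies coincide) collapses this to $2^{n+1}$ copies of $p$ fused together. Unfolding the definition of plps for this single $A_\otimes$, what must be shown is that for all $b$ and all $c>0$ one has $p^{\fus 2^b}\not\eql{L}p^{\fus 2^{b+c}}$; since $2^b\ne 2^{b+c}$, it suffices to prove $p^{\fus k}\not\eql{L}p^{\fus m}$ whenever $k\ne m$. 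For each of the three anchor logics I would do this by exhibiting a model of $L$ (for which $L$ is sound) in which a suitably chosen interpretation of $p$ has pairwise distinct fusion powers; soundness then forces $p^{\fus k}\not\eql{L}p^{\fus m}$, because $A\eql{L}B$ entails that $A$ and $B$ receive the same value under every valuation of that model.

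For \luka{\infty} I would use the standard MV-algebra on $[0,1]$ with $x\fus y=\max(0,x+y-1)$: interpreting $p$ as $x$, the formula $p^{\fus k}$ evaluates to $\max(0,\,1-k(1-x))$, so given $k\ne m$, any $x$ with $0<1-x<1/\max(k,m)$ makes both values strictly positive and manifestly distinct. For \lname{\Pi} I would use $[0,1]$ with $x\fus y=xy$, so that $p^{\fus k}$ evaluates to $x^k$; any single $x\in(0,1)$ separates all the powers simultaneously. For the heap model of \lname{BBI} I would interpret $p$ as the set of singleton heaps; then a heap satisfies $p^{\fus k}$ iff it decomposes into $k$ pairwise domain-disjoint singletons, i.e.\ iff its domain has exactly $k$ elements, and any heap whose domain has $k$ elements witnesses that the denotations of $p^{\fus k}$ and $p^{\fus m}$ differ, so $p^{\fus k}\eqv p^{\fus m}$ is not valid there. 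This yields the failure of plps for the three named logics; notice this is essentially the same one-line idea as in \Cref{th:transfail} and \Cref{th:sfail}, only now the non-stabilizing quantity is a fusion power rather than a nesting of boxes.

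For the \textbf{``consequently''} part I would invoke the transfer principle recorded in \Cref{sec:intro} — plps passes from a sublogic to any extension in the same signature, hence fails downward — together with the fact that each system listed is, in a signature containing $\fus$, a sublogic of one of the three anchors: $\lname{MTL}\subseteq\lname{BL}$, which is contained both in \luka{\infty} and in \lname{\Pi}; $\lname{FL_{ew}}$ (and its involutive variant) sits below $\lname{MTL}$; $\lname{MALL}$ and $\lname{IMALL}$ are sound with respect to the standard MV-algebra on $[0,1]$ (which is bounded, integral and involutive), hence are contained in \luka{\infty}; and $\lname{FL}$ lies below all of these. The main obstacle is not conceptual but a matter of care: one must (i) check in each concrete model that the chosen denotation of $p$ genuinely has pairwise distinct fusion powers — routine arithmetic for \luka{\infty} and \lname{\Pi}, an easy cardinality count for the heap model — and (ii) place each named substructural system correctly below one of the three anchor logics, respecting the relevant signature and units. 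The reduction of plps-failure for $A_\otimes$ to non-equivalence of fusion powers, and the appeal to the transfer principle, are otherwise straightforward.
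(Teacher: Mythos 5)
Your proposal is correct and follows exactly the route of the paper's (one-line) proof: evaluate $\{A_\otimes^n(p)\}_{n\in\omega}$ in the $[0,1]$-interval with the Łukasiewicz and product $t$-norms and in the heap model, observe that the fusion powers of a suitable denotation of $p$ never stabilize, and push the failure down to sublogics via the transfer principle from the introduction. You have merely supplied the routine computations and containments that the paper leaves implicit.
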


\begin{proof}
This is shown by evaluating the sequence $\{A_\otimes^n(p)\}_{n \in \omega}$ defined above in the heap model or  the $[0,1]$-interval with corresponding $\ell$-norms. 
\end{proof}

Thus, in order to find a natural substructural logic $L$ enjoying the plps without local finiteness, one should look at those where the sequence $\{A_\otimes^n(p)\}_{n \in \omega}$ stabilizes modulo $\eql{L}$. It can be  naturally achieved by stipulating that fusion is  idempotent (both square-increasing and square-decreasing). This, however, is a very restrictive condition. When $L$ satisfies the weakening rule, it collapses fusion to ordinary additive conjunction $\wedge$ and substructural implication to Heyting implication. Idempotent systems where $\fus$ does not entirely collapse to $\wedge$ are sometimes considered by relevance logicians, with perhaps the most famous example being  \lname{RM} \bro``\lname{R} with Mingle''\brc. However, this system has been long known to be locally finite anyway \cite{Dunn70}. See more recent references \cite{Raftery07,GilFJM20} on limits of local finiteness results for idempotent structural logics. 

\begin{probl} \label{probl:relevance}
Are there natural non-Heyting examples of (idempotent? square-increasing?) substructural logics with the plps property failing local finiteness? 
\end{probl}

%

\section{Coq Formalization: The Basics} \label{sec:formalization}

The formalization is available as a git repository 
\begin{center}
\url{https://git8.cs.fau.de/software/ruitenburg1984}. 
\end{center}
It consists of less than 4000 lines of Coq code, split into 6 files.  The code allows working program extraction to OCaml or Haskell; it can be also used directly for computation using Coq's core functional programming language (Gallina). More on that can be found in \Cref{sec:bounds_lists}.

Naturally, the formalization involves a \emph{deep} rather than \emph{shallow} embedding of \ipc. The syntax of \ipc\ is formalized from first principles. Also, all semantic discussion (i.e., any mention of Kripke models) from the original paper is omitted. The semantic counterexamples given by Ruitenburg are unproblematic and easy to understand. The important part is purely syntactic.




\subsection{Setup and basic lemmas} \label{sec:setup}

The language of \ipc\  is defined as usual:



\begin{coqdoccode}
\coqdocemptyline
\coqdocnoindent
\coqdockw{Inductive}  \coqdef{HilbertIPCsetup.form}{form}{\coqdocinductive{form}}   :=\coqdoceol
\coqdocindent{1.00em}
\ensuremath{|}  \coqdef{HilbertIPCsetup.var}{var}{\coqdocconstructor{var}} :  \coqexternalref{nat}{http://coq.inria.fr/distrib/8.4pl6/stdlib/Coq.Init.Datatypes}{\coqdocinductive{nat}} \ensuremath{\rightarrow} \coqref{HilbertIPCsetup.form}{\coqdocinductive{form}}\coqdoceol
\coqdocindent{1.00em}
\ensuremath{|}  \coqdef{HilbertIPCsetup.imp}{imp}{\coqdocconstructor{imp}} :  \coqref{HilbertIPCsetup.form}{\coqdocinductive{form}} \ensuremath{\rightarrow} \coqref{HilbertIPCsetup.form}{\coqdocinductive{form}} \ensuremath{\rightarrow} \coqref{HilbertIPCsetup.form}{\coqdocinductive{form}}\coqdoceol
\coqdocindent{1.00em}
\ensuremath{|}  \coqdef{HilbertIPCsetup.and}{and}{\coqdocconstructor{and}} :  \coqref{HilbertIPCsetup.form}{\coqdocinductive{form}} \ensuremath{\rightarrow} \coqref{HilbertIPCsetup.form}{\coqdocinductive{form}} \ensuremath{\rightarrow} \coqref{HilbertIPCsetup.form}{\coqdocinductive{form}}\coqdoceol
\coqdocindent{1.00em}
\ensuremath{|}  \coqdef{HilbertIPCsetup.or}{or}{\coqdocconstructor{or}}  :  \coqref{HilbertIPCsetup.form}{\coqdocinductive{form}} \ensuremath{\rightarrow} \coqref{HilbertIPCsetup.form}{\coqdocinductive{form}} \ensuremath{\rightarrow} \coqref{HilbertIPCsetup.form}{\coqdocinductive{form}}\coqdoceol
\coqdocindent{1.00em}
\ensuremath{|}  \coqdef{HilbertIPCsetup.tt}{tt}{\coqdocconstructor{tt}} :  \coqref{HilbertIPCsetup.form}{\coqdocinductive{form}}\coqdoceol
\coqdocindent{1.00em}
\ensuremath{|}  \coqdef{HilbertIPCsetup.ff}{ff}{\coqdocconstructor{ff}} :  \coqref{HilbertIPCsetup.form}{\coqdocinductive{form}}.\coqdoceol
\coqdocemptyline
\coqdocnoindent
\coqdockw{Notation} \coqdef{HilbertIPCsetup.::x 'x26' x}{"}{"}A '\&' B " := (\coqref{HilbertIPCsetup.and}{\coqdocconstructor{and}} \coqdocvar{A} \coqdocvar{B}) 
(\coqdoctac{at} \coqdockw{level} 40, \coqdoctac{left} \coqdockw{associativity}).\coqdoceol
\coqdocnoindent
\coqdockw{Notation} \coqdef{HilbertIPCsetup.::x 'x5Cv/' x}{"}{"}A '\symbol{92}v/' B " := (\coqref{HilbertIPCsetup.or}{\coqdocconstructor{or}} \coqdocvar{A} \coqdocvar{B}) 
(\coqdoctac{at} \coqdockw{level} 45, \coqdoctac{left} \coqdockw{associativity}).\coqdoceol
\coqdocnoindent
\coqdockw{Notation} \coqdef{HilbertIPCsetup.::x '->>' x}{"}{"}A '->>' B" := (\coqref{HilbertIPCsetup.imp}{\coqdocconstructor{imp}} \coqdocvar{A} \coqdocvar{B})
 (\coqdoctac{at} \coqdockw{level} 49, \coqdoctac{right} \coqdockw{associativity}).\coqdoceol
\coqdocemptyline
\coqdocemptyline
\coqdocnoindent
\coqdockw{Definition} \coqdef{HilbertIPCsetup.p}{p}{\coqdocdefinition{p}} := \coqref{HilbertIPCsetup.var}{\coqdocconstructor{var}} 0.\coqdoceol
\coqdocnoindent
\coqdockw{Definition} \coqdef{HilbertIPCsetup.q}{q}{\coqdocdefinition{q}} := \coqref{HilbertIPCsetup.var}{\coqdocconstructor{var}} 1.\coqdoceol
\coqdocnoindent
\coqdockw{Definition} \coqdef{HilbertIPCsetup.r}{r}{\coqdocdefinition{r}} := \coqref{HilbertIPCsetup.var}{\coqdocconstructor{var}} 2.\coqdoceol
\coqdocemptyline
\end{coqdoccode}

\noindent
Variable $p$ will be used as a distinguished variable of the formula: the input or the argument of a polynomial. It is convenient to make it the variable with index 0 (consider the use of \coqdockw{destruct} in proofs where the distinguished variable should get a special treatment). We also explicitly add equivalence: 

\begin{coqdoccode}
\coqdocemptyline
\coqdockw{Notation} \coqdef{HilbertIPCsetup.::x '<<->>' x}{"}{"}A '<<->>' B" := (\coqref{HilbertIPCsetup.::x 'x26' x}{\coqdocnotation{(}}\coqdocvar{A} \coqref{HilbertIPCsetup.::x '->>' x}{\coqdocnotation{->>}} \coqdocvar{B}\coqref{HilbertIPCsetup.::x 'x26' x}{\coqdocnotation{)}} \coqref{HilbertIPCsetup.::x 'x26' x}{\coqdocnotation{\&}} \coqref{HilbertIPCsetup.::x 'x26' x}{\coqdocnotation{(}}\coqdocvar{B} \coqref{HilbertIPCsetup.::x '->>' x}{\coqdocnotation{->>}} \coqdocvar{A}\coqref{HilbertIPCsetup.::x 'x26' x}{\coqdocnotation{)}}) (\coqdoctac{at} \coqdockw{level} 58).\coqdoceol
\coqdocemptyline
\end{coqdoccode}

\noindent
Ruitenburg's paper uses a formulation of \ipc\ in terms of syntactic consequence (turnstile) relation between (finite) sets of formulas and formulas themselves. This approach is natural from the point of view of abstract algebraic logic  \cite{BlokP89:ams,FontJP03a:sl}. It would be natural to replace this turnstile-Hilbert-style axiomatization by a Gentzen-style formalism, either sequent calculus or natural deduction. We will return to this point in \Cref{sec:turnstile}. 
 The chosen formalization of \ipc, however, is convenient for our purposes and stays as close to the development in the original article \cite{Ruitenburg84:jsl} as possible (Ruitenburg, in fact, did not write the exact axiomatization he was using or give an explicit reference for it, but it is easy to reconstruct). 

Rather unsurprisingly, in the Coq version of the axiomatization I replaced finite sets of formulas with finite lists. The standard Hilbert-style presentation of \ipc\ can be found in numerous references.  In my setup, it looks as follows:

\begin{coqdoccode}
\coqdocemptyline
\coqdocnoindent
\coqdockw{Reserved Notation} "G '|--' A" (\coqdoctac{at} \coqdockw{level} 63).\coqdoceol
\coqdocemptyline
\coqdocnoindent
\coqdockw{Notation} \coqdef{HilbertIPCsetup.context}{context}{\coqdocabbreviation{context}} := (\coqexternalref{list}{http://coq.inria.fr/distrib/8.4pl6/stdlib/Coq.Init.Datatypes}{\coqdocinductive{list}} \coqref{HilbertIPCsetup.form}{\coqdocinductive{form}}).\coqdoceol
\coqdocemptyline
\coqdocnoindent
\coqdockw{Inductive} \coqdef{HilbertIPCsetup.hil}{hil}{\coqdocinductive{hil}} : \coqref{HilbertIPCsetup.context}{\coqdocabbreviation{context}} \ensuremath{\rightarrow} \coqref{HilbertIPCsetup.form}{\coqdocinductive{form}} \ensuremath{\rightarrow} \coqdockw{Prop} :=\coqdoceol
\coqdocindent{0.50em}
\ensuremath{|} \coqdef{HilbertIPCsetup.hilst}{hilst}{\coqdocconstructor{hilst}} : \coqdockw{\ensuremath{\forall}} \coqdocvar{G} \coqdocvar{A},  \coqexternalref{In}{http://coq.inria.fr/distrib/8.4pl6/stdlib/Coq.Lists.List}{\coqdocdefinition{In}} \coqdocvariable{A} \coqdocvariable{G} \ensuremath{\rightarrow} \coqdocvariable{G} \coqref{HilbertIPCsetup.::x '|--' x}{\coqdocnotation{|--}} \coqdocvariable{A} \coqdoceol
\coqdocindent{0.50em}
\ensuremath{|} \coqdef{HilbertIPCsetup.hilK}{hilK}{\coqdocconstructor{hilK}}  : \coqdockw{\ensuremath{\forall}} \coqdocvar{G} \coqdocvar{A} \coqdocvar{B}, \coqdocvariable{G} \coqref{HilbertIPCsetup.::x '|--' x}{\coqdocnotation{|--}} \coqdocvariable{A} \coqref{HilbertIPCsetup.::x '->>' x}{\coqdocnotation{->>}} \coqdocvariable{B} \coqref{HilbertIPCsetup.::x '->>' x}{\coqdocnotation{->>}} \coqdocvariable{A}\coqdoceol
\coqdocindent{0.50em}
\ensuremath{|} \coqdef{HilbertIPCsetup.hilS}{hilS}{\coqdocconstructor{hilS}}  : \coqdockw{\ensuremath{\forall}} \coqdocvar{G} \coqdocvar{A} \coqdocvar{B} \coqdocvar{C}, 
\coqdocvariable{G} \coqref{HilbertIPCsetup.::x '|--' x}{\coqdocnotation{|--}} \coqref{HilbertIPCsetup.::x '->>' x}{\coqdocnotation{(}}\coqdocvariable{A} \coqref{HilbertIPCsetup.::x '->>' x}{\coqdocnotation{->>}} \coqdocvariable{B} \coqref{HilbertIPCsetup.::x '->>' x}{\coqdocnotation{->>}} \coqdocvariable{C}\coqref{HilbertIPCsetup.::x '->>' x}{\coqdocnotation{)}} \coqref{HilbertIPCsetup.::x '->>' x}{\coqdocnotation{->>}} \coqref{HilbertIPCsetup.::x '->>' x}{\coqdocnotation{(}}\coqdocvariable{A} \coqref{HilbertIPCsetup.::x '->>' x}{\coqdocnotation{->>}} \coqdocvariable{B}\coqref{HilbertIPCsetup.::x '->>' x}{\coqdocnotation{)}} \coqref{HilbertIPCsetup.::x '->>' x}{\coqdocnotation{->>}} \coqdocvariable{A} \coqref{HilbertIPCsetup.::x '->>' x}{\coqdocnotation{->>}} \coqdocvariable{C}\coqdoceol
\coqdocindent{0.50em}
\ensuremath{|} \coqdef{HilbertIPCsetup.hilMP}{hilMP}{\coqdocconstructor{hilMP}} : \coqdockw{\ensuremath{\forall}} \coqdocvar{G} \coqdocvar{A} \coqdocvar{B}, \coqdocvariable{G} \coqref{HilbertIPCsetup.::x '|--' x}{\coqdocnotation{|--}} \coqref{HilbertIPCsetup.::x '|--' x}{\coqdocnotation{(}}\coqdocvariable{A} \coqref{HilbertIPCsetup.::x '->>' x}{\coqdocnotation{->>}} \coqdocvariable{B}\coqref{HilbertIPCsetup.::x '|--' x}{\coqdocnotation{)}} \ensuremath{\rightarrow} (\coqdocvariable{G} \coqref{HilbertIPCsetup.::x '|--' x}{\coqdocnotation{|--}} \coqdocvariable{A}) \ensuremath{\rightarrow} (\coqdocvariable{G} \coqref{HilbertIPCsetup.::x '|--' x}{\coqdocnotation{|--}} \coqdocvariable{B})\coqdoceol
\coqdocindent{0.50em}
\ensuremath{|} \coqdef{HilbertIPCsetup.hilC1}{hilC1}{\coqdocconstructor{hilC1}} : \coqdockw{\ensuremath{\forall}} \coqdocvar{G} \coqdocvar{A} \coqdocvar{B}, \coqdocvariable{G} \coqref{HilbertIPCsetup.::x '|--' x}{\coqdocnotation{|--}} \coqref{HilbertIPCsetup.::x '|--' x}{\coqdocnotation{(}}\coqdocvariable{A} \coqref{HilbertIPCsetup.::x '->>' x}{\coqdocnotation{->>}} \coqdocvariable{B} \coqref{HilbertIPCsetup.::x '->>' x}{\coqdocnotation{->>}} \coqdocvariable{A} \coqref{HilbertIPCsetup.::x 'x26' x}{\coqdocnotation{\&}} \coqdocvariable{B}\coqref{HilbertIPCsetup.::x '|--' x}{\coqdocnotation{)}}\coqdoceol
\coqdocindent{0.50em}
\ensuremath{|} \coqdef{HilbertIPCsetup.hilC2}{hilC2}{\coqdocconstructor{hilC2}} : \coqdockw{\ensuremath{\forall}} \coqdocvar{G} \coqdocvar{A} \coqdocvar{B}, \coqdocvariable{G} \coqref{HilbertIPCsetup.::x '|--' x}{\coqdocnotation{|--}} \coqref{HilbertIPCsetup.::x '|--' x}{\coqdocnotation{(}}\coqdocvariable{A} \coqref{HilbertIPCsetup.::x 'x26' x}{\coqdocnotation{\&}} \coqdocvariable{B} \coqref{HilbertIPCsetup.::x '->>' x}{\coqdocnotation{->>}} \coqdocvariable{A}\coqref{HilbertIPCsetup.::x '|--' x}{\coqdocnotation{)}}\coqdoceol
\coqdocindent{0.50em}
\ensuremath{|} \coqdef{HilbertIPCsetup.hilC3}{hilC3}{\coqdocconstructor{hilC3}} : \coqdockw{\ensuremath{\forall}} \coqdocvar{G} \coqdocvar{A} \coqdocvar{B}, \coqdocvariable{G} \coqref{HilbertIPCsetup.::x '|--' x}{\coqdocnotation{|--}} \coqref{HilbertIPCsetup.::x '|--' x}{\coqdocnotation{(}}\coqdocvariable{A} \coqref{HilbertIPCsetup.::x 'x26' x}{\coqdocnotation{\&}} \coqdocvariable{B} \coqref{HilbertIPCsetup.::x '->>' x}{\coqdocnotation{->>}} \coqdocvariable{B}\coqref{HilbertIPCsetup.::x '|--' x}{\coqdocnotation{)}}\coqdoceol
\coqdocindent{0.50em}
\ensuremath{|} \coqdef{HilbertIPCsetup.hilA1}{hilA1}{\coqdocconstructor{hilA1}} : \coqdockw{\ensuremath{\forall}} \coqdocvar{G} \coqdocvar{A} \coqdocvar{B}, \coqdocvariable{G} \coqref{HilbertIPCsetup.::x '|--' x}{\coqdocnotation{|--}} \coqref{HilbertIPCsetup.::x '|--' x}{\coqdocnotation{(}}\coqdocvariable{A} \coqref{HilbertIPCsetup.::x '->>' x}{\coqdocnotation{->>}} \coqdocvariable{A} \coqref{HilbertIPCsetup.::x 'x5Cv/' x}{\coqdocnotation{\symbol{92}}}\coqref{HilbertIPCsetup.::x 'x5Cv/' x}{\coqdocnotation{v}}\coqref{HilbertIPCsetup.::x 'x5Cv/' x}{\coqdocnotation{/}} \coqdocvariable{B}\coqref{HilbertIPCsetup.::x '|--' x}{\coqdocnotation{)}}\coqdoceol
\coqdocindent{0.50em}
\ensuremath{|} \coqdef{HilbertIPCsetup.hilA2}{hilA2}{\coqdocconstructor{hilA2}} : \coqdockw{\ensuremath{\forall}} \coqdocvar{G} \coqdocvar{A} \coqdocvar{B}, \coqdocvariable{G} \coqref{HilbertIPCsetup.::x '|--' x}{\coqdocnotation{|--}} \coqref{HilbertIPCsetup.::x '|--' x}{\coqdocnotation{(}}\coqdocvariable{B} \coqref{HilbertIPCsetup.::x '->>' x}{\coqdocnotation{->>}} \coqdocvariable{A} \coqref{HilbertIPCsetup.::x 'x5Cv/' x}{\coqdocnotation{\symbol{92}}}\coqref{HilbertIPCsetup.::x 'x5Cv/' x}{\coqdocnotation{v}}\coqref{HilbertIPCsetup.::x 'x5Cv/' x}{\coqdocnotation{/}} \coqdocvariable{B}\coqref{HilbertIPCsetup.::x '|--' x}{\coqdocnotation{)}}\coqdoceol
\coqdocindent{0.50em}
\ensuremath{|} \coqdef{HilbertIPCsetup.hilA3}{hilA3}{\coqdocconstructor{hilA3}} : \coqdockw{\ensuremath{\forall}} \coqdocvar{G} \coqdocvar{A} \coqdocvar{B} \coqdocvar{C}, 
\coqdocvariable{G} \coqref{HilbertIPCsetup.::x '|--' x}{\coqdocnotation{|--}} \coqref{HilbertIPCsetup.::x '->>' x}{\coqdocnotation{(}}\coqdocvariable{A} \coqref{HilbertIPCsetup.::x '->>' x}{\coqdocnotation{->>}} \coqdocvariable{C}\coqref{HilbertIPCsetup.::x '->>' x}{\coqdocnotation{)}} \coqref{HilbertIPCsetup.::x '->>' x}{\coqdocnotation{->>}} \coqref{HilbertIPCsetup.::x '->>' x}{\coqdocnotation{(}}\coqdocvariable{B} \coqref{HilbertIPCsetup.::x '->>' x}{\coqdocnotation{->>}} \coqdocvariable{C}\coqref{HilbertIPCsetup.::x '->>' x}{\coqdocnotation{)}} \coqref{HilbertIPCsetup.::x '->>' x}{\coqdocnotation{->>}} \coqref{HilbertIPCsetup.::x '->>' x}{\coqdocnotation{(}}\coqdocvariable{A} \coqref{HilbertIPCsetup.::x 'x5Cv/' x}{\coqdocnotation{\symbol{92}}}\coqref{HilbertIPCsetup.::x 'x5Cv/' x}{\coqdocnotation{v}}\coqref{HilbertIPCsetup.::x 'x5Cv/' x}{\coqdocnotation{/}} \coqdocvariable{B} \coqref{HilbertIPCsetup.::x '->>' x}{\coqdocnotation{->>}} \coqdocvariable{C}\coqref{HilbertIPCsetup.::x '->>' x}{\coqdocnotation{)}}\coqdoceol
\coqdocindent{0.50em}
\ensuremath{|} \coqdef{HilbertIPCsetup.hiltt}{hiltt}{\coqdocconstructor{hiltt}} : \coqdockw{\ensuremath{\forall}} \coqdocvar{G}, \coqdocvariable{G} \coqref{HilbertIPCsetup.::x '|--' x}{\coqdocnotation{|--}} \coqref{HilbertIPCsetup.tt}{\coqdocconstructor{tt}}\coqdoceol
\coqdocindent{0.50em}
\ensuremath{|} \coqdef{HilbertIPCsetup.hilff}{hilff}{\coqdocconstructor{hilff}} : \coqdockw{\ensuremath{\forall}} \coqdocvar{G} \coqdocvar{A}, \coqdocvariable{G} \coqref{HilbertIPCsetup.::x '|--' x}{\coqdocnotation{|--}} \coqref{HilbertIPCsetup.ff}{\coqdocconstructor{ff}} \coqref{HilbertIPCsetup.::x '->>' x}{\coqdocnotation{->>}} \coqdocvariable{A}\coqdoceol
\coqdocindent{9.50em}
\coqdockw{where} \coqdef{HilbertIPCsetup.::x '|--' x}{"}{"}G '|--' A" := (\coqref{HilbertIPCsetup.hil}{\coqdocinductive{hil}} \coqdocvar{G} \coqdocvar{A}).\coqdoceol
\end{coqdoccode}

A sequence of lemmas follows in \coqdockw{HilbertIPCsetup.v}, establishing basic properties of the turnstile relation. There are also some easy tactics for use in later proofs. They should be all rather self-explanatory. Again, in Ruitenburg's paper trivial lemmas of this kind are used tacitly or nearly tacitly. 
 Several, though by no means all of the basic lemmas in this part  were added to the \cdkw{Hint} database and so were, e.g., constructors of \coqdoclemma{hil}. This has in some cases slowed down working of some tactics, in particular \cdkw{eauto}, but I believe overall the database has not been unduly swollen and \cdkw{eauto}, \cdkw{auto} and their cousins remain useful.

One also needs a standard notion of substitution; as the focus is entirely on a propositional language with  no notions of---and no problems of---binding, $\alpha$-conversion etc., I decided to use a straightforward, own formalization, with a minimal number of tailored tactics to make the development smoother. Base substitutions are simply functions from variables to formulas; we can thus reduce them to functions from natural numbers to formulas. They are extended inductively to arbitrary formulas and then to arbitrary contexts:

\begin{coqdoccode}
\coqdocemptyline
\coqdocnoindent
\coqdockw{Fixpoint} \coqdef{HilbertIPCsetup.sub}{sub}{\coqdocdefinition{sub}}  (\coqdocvar{s}: \coqexternalref{nat}{http://coq.inria.fr/distrib/8.4pl6/stdlib/Coq.Init.Datatypes}{\coqdocinductive{nat}} \ensuremath{\rightarrow} \coqref{HilbertIPCsetup.form}{\coqdocinductive{form}})  (\coqdocvar{A} : \coqref{HilbertIPCsetup.form}{\coqdocinductive{form}}) : \coqref{HilbertIPCsetup.form}{\coqdocinductive{form}} :=\coqdoceol
\coqdocindent{1.00em}
\coqdockw{match} \coqdocvariable{A} \coqdockw{with}\coqdoceol
\coqdocindent{2.00em}
\ensuremath{|} \coqref{HilbertIPCsetup.var}{\coqdocconstructor{var}} \coqdocvar{i} \ensuremath{\Rightarrow} \coqdocvariable{s} \coqdocvar{i}\coqdoceol
\coqdocindent{2.00em}
\ensuremath{|} \coqdocvar{A} \coqref{HilbertIPCsetup.::x '->>' x}{\coqdocnotation{->>}} \coqdocvar{B} \ensuremath{\Rightarrow} \coqref{HilbertIPCsetup.::x '->>' x}{\coqdocnotation{(}}\coqref{HilbertIPCsetup.sub}{\coqdocdefinition{sub}} \coqdocvariable{s} \coqdocvariable{A}\coqref{HilbertIPCsetup.::x '->>' x}{\coqdocnotation{)}} \coqref{HilbertIPCsetup.::x '->>' x}{\coqdocnotation{->>}} \coqref{HilbertIPCsetup.::x '->>' x}{\coqdocnotation{(}}\coqref{HilbertIPCsetup.sub}{\coqdocdefinition{sub}} \coqdocvariable{s} \coqdocvar{B}\coqref{HilbertIPCsetup.::x '->>' x}{\coqdocnotation{)}}\coqdoceol
\coqdocindent{2.00em}
\ensuremath{|} \coqdocvar{A} \coqref{HilbertIPCsetup.::x 'x26' x}{\coqdocnotation{\&}} \coqdocvar{B} \ensuremath{\Rightarrow} \coqref{HilbertIPCsetup.::x 'x26' x}{\coqdocnotation{(}}\coqref{HilbertIPCsetup.sub}{\coqdocdefinition{sub}} \coqdocvariable{s} \coqdocvariable{A}\coqref{HilbertIPCsetup.::x 'x26' x}{\coqdocnotation{)}} \coqref{HilbertIPCsetup.::x 'x26' x}{\coqdocnotation{\&}} \coqref{HilbertIPCsetup.::x 'x26' x}{\coqdocnotation{(}}\coqref{HilbertIPCsetup.sub}{\coqdocdefinition{sub}} \coqdocvariable{s} \coqdocvar{B}\coqref{HilbertIPCsetup.::x 'x26' x}{\coqdocnotation{)}}\coqdoceol
\coqdocindent{2.00em}
\ensuremath{|} \coqdocvar{A} \coqref{HilbertIPCsetup.::x 'x5Cv/' x}{\coqdocnotation{\symbol{92}}}\coqref{HilbertIPCsetup.::x 'x5Cv/' x}{\coqdocnotation{v}}\coqref{HilbertIPCsetup.::x 'x5Cv/' x}{\coqdocnotation{/}} \coqdocvar{B} \ensuremath{\Rightarrow} \coqref{HilbertIPCsetup.::x 'x5Cv/' x}{\coqdocnotation{(}}\coqref{HilbertIPCsetup.sub}{\coqdocdefinition{sub}} \coqdocvariable{s} \coqdocvariable{A}\coqref{HilbertIPCsetup.::x 'x5Cv/' x}{\coqdocnotation{)}} \coqref{HilbertIPCsetup.::x 'x5Cv/' x}{\coqdocnotation{\symbol{92}}}\coqref{HilbertIPCsetup.::x 'x5Cv/' x}{\coqdocnotation{v}}\coqref{HilbertIPCsetup.::x 'x5Cv/' x}{\coqdocnotation{/}} \coqref{HilbertIPCsetup.::x 'x5Cv/' x}{\coqdocnotation{(}}\coqref{HilbertIPCsetup.sub}{\coqdocdefinition{sub}} \coqdocvariable{s} \coqdocvar{B}\coqref{HilbertIPCsetup.::x 'x5Cv/' x}{\coqdocnotation{)}}\coqdoceol
\coqdocindent{2.00em}
\ensuremath{|} \coqref{HilbertIPCsetup.tt}{\coqdocconstructor{tt}} \ensuremath{\Rightarrow} \coqref{HilbertIPCsetup.tt}{\coqdocconstructor{tt}}\coqdoceol
\coqdocindent{2.00em}
\ensuremath{|} \coqref{HilbertIPCsetup.ff}{\coqdocconstructor{ff}} \ensuremath{\Rightarrow} \coqref{HilbertIPCsetup.ff}{\coqdocconstructor{ff}}\coqdoceol
\coqdocindent{1.00em}
\coqdockw{end}.\coqdoceol
\coqdocemptyline
\coqdocnoindent
\coqdockw{Fixpoint} \coqdef{HilbertIPCsetup.ssub}{ssub}{\coqdocdefinition{ssub}} (\coqdocvar{s} : \coqexternalref{nat}{http://coq.inria.fr/distrib/8.4pl6/stdlib/Coq.Init.Datatypes}{\coqdocinductive{nat}} \ensuremath{\rightarrow} \coqref{HilbertIPCsetup.form}{\coqdocinductive{form}}) (\coqdocvar{G}: \coqref{HilbertIPCsetup.context}{\coqdocabbreviation{context}}) : \coqref{HilbertIPCsetup.context}{\coqdocabbreviation{context}} :=\coqdoceol
\coqdocindent{1.00em}
\coqdockw{match} \coqdocvariable{G} \coqdockw{with}\coqdoceol
\coqdocindent{2.00em}
\ensuremath{|} \coqexternalref{nil}{http://coq.inria.fr/distrib/8.4pl6/stdlib/Coq.Init.Datatypes}{\coqdocconstructor{nil}} \ensuremath{\Rightarrow} \coqexternalref{nil}{http://coq.inria.fr/distrib/8.4pl6/stdlib/Coq.Init.Datatypes}{\coqdocconstructor{nil}}\coqdoceol
\coqdocindent{2.00em}
\ensuremath{|} \coqdocvar{A} \coqexternalref{:list scope:x '::' x}{http://coq.inria.fr/distrib/8.4pl6/stdlib/Coq.Init.Datatypes}{\coqdocnotation{::}} \coqdocvar{G'} \ensuremath{\Rightarrow} \coqexternalref{:list scope:x '::' x}{http://coq.inria.fr/distrib/8.4pl6/stdlib/Coq.Init.Datatypes}{\coqdocnotation{(}}\coqref{HilbertIPCsetup.sub}{\coqdocdefinition{sub}} \coqdocvariable{s} \coqdocvar{A}\coqexternalref{:list scope:x '::' x}{http://coq.inria.fr/distrib/8.4pl6/stdlib/Coq.Init.Datatypes}{\coqdocnotation{)}} \coqexternalref{:list scope:x '::' x}{http://coq.inria.fr/distrib/8.4pl6/stdlib/Coq.Init.Datatypes}{\coqdocnotation{::}} \coqexternalref{:list scope:x '::' x}{http://coq.inria.fr/distrib/8.4pl6/stdlib/Coq.Init.Datatypes}{\coqdocnotation{(}}\coqref{HilbertIPCsetup.ssub}{\coqdocdefinition{ssub}} \coqdocvariable{s} \coqdocvar{G'}\coqexternalref{:list scope:x '::' x}{http://coq.inria.fr/distrib/8.4pl6/stdlib/Coq.Init.Datatypes}{\coqdocnotation{)}}\coqdoceol
\coqdocindent{1.00em}
\coqdockw{end}.\coqdoceol
\coqdocemptyline
\end{coqdoccode}

Typical substitutions arise inductively from a base substitution replacing a single chosen variable by a formula and leaving all other variables unchanged:

\begin{coqdoccode}
\coqdocemptyline
\coqdocnoindent
\coqdockw{Definition} \coqdef{HilbertIPCsetup.s n}{s\_n}{\coqdocdefinition{s\_n}} (\coqdocvar{n}:\coqexternalref{nat}{http://coq.inria.fr/distrib/8.4pl6/stdlib/Coq.Init.Datatypes}{\coqdocinductive{nat}}) (\coqdocvar{A}: \coqref{HilbertIPCsetup.form}{\coqdocinductive{form}}) : (\coqexternalref{nat}{http://coq.inria.fr/distrib/8.4pl6/stdlib/Coq.Init.Datatypes}{\coqdocinductive{nat}} \ensuremath{\rightarrow} \coqref{HilbertIPCsetup.form}{\coqdocinductive{form}}) :=\coqdoceol
\coqdocindent{1.00em}
\coqdockw{fun} \coqdocvar{m}  \ensuremath{\Rightarrow} \coqdockw{match} (\coqexternalref{eq nat dec}{http://coq.inria.fr/distrib/8.4pl6/stdlib/Coq.Arith.Peano\_dec}{\coqdoclemma{eq\_nat\_dec}} \coqdocvariable{n} \coqdocvariable{m}) \coqdockw{with}\coqdoceol
\coqdocindent{6.50em}
\ensuremath{|} \coqexternalref{left}{http://coq.inria.fr/distrib/8.4pl6/stdlib/Coq.Init.Specif}{\coqdocconstructor{left}} \coqdocvar{\_} \ensuremath{\Rightarrow} \coqdocvariable{A}\coqdoceol
\coqdocindent{6.50em}
\ensuremath{|} \coqexternalref{right}{http://coq.inria.fr/distrib/8.4pl6/stdlib/Coq.Init.Specif}{\coqdocconstructor{right}} \coqdocvar{\_} \ensuremath{\Rightarrow} (\coqref{HilbertIPCsetup.var}{\coqdocconstructor{var}} \coqdocvariable{m})\coqdoceol
\coqdocindent{5.50em}
\coqdockw{end}.\coqdoceol
\coqdocemptyline
\coqdocnoindent
\coqdockw{Definition} \coqdef{HilbertIPCsetup.s p}{s\_p}{\coqdocdefinition{s\_p}} := \coqref{HilbertIPCsetup.s n}{\coqdocdefinition{s\_n}} 0.\coqdoceol
\coqdocemptyline
\coqdocnoindent
\coqdockw{Notation} \coqdef{HilbertIPCsetup.::x 'x7B' x 'x7D/' x}{"}{"}A '\{' B '\}/' n " := (\coqref{HilbertIPCsetup.sub}{\coqdocdefinition{sub}} (\coqref{HilbertIPCsetup.s n}{\coqdocdefinition{s\_n}} \coqdocvar{n} \coqdocvar{B}) \coqdocvar{A}) (\coqdoctac{at} \coqdockw{level} 29).\coqdoceol
\coqdocnoindent
\coqdockw{Notation} \coqdef{HilbertIPCsetup.::x 'x7B' x 'x7D/p'}{"}{"}A '\{' B '\}/p'" := (\coqref{HilbertIPCsetup.sub}{\coqdocdefinition{sub}} (\coqref{HilbertIPCsetup.s p}{\coqdocdefinition{s\_p}} \coqdocvar{B}) \coqdocvar{A}) (\coqdoctac{at} \coqdockw{level} 29).\coqdoceol
\coqdocnoindent
\coqdockw{Notation} \coqdef{HilbertIPCsetup.::'ssubp' x x}{"}{"}'ssubp' B G" := (\coqref{HilbertIPCsetup.ssub}{\coqdocdefinition{ssub}} (\coqref{HilbertIPCsetup.s p}{\coqdocdefinition{s\_p}} \coqdocvar{B}) \coqdocvar{G}) (\coqdoctac{at} \coqdockw{level} 29).\coqdoceol
\coqdocemptyline
\end{coqdoccode}

As only the more narrow substitution  \coqdocdefinition{s\_p} for the chosen variable  is needed on most occasions, it does pay off to state suitable lemmas in two versions, one for \coqdocdefinition{s\_n}  and one for \coqdocdefinition{s\_n}, the former usually with postfix \coqdocdefinition{\_gen}. There are also several notions of freshness, for formulas and for lists, both as a predicate and as a boolean-valued recursive function (all distinguished by corresponding suffixes). Finally, we can finalize the notion of iterated substitution:

\begin{coqdoccode}
\coqdocemptyline
\coqdocnoindent
\coqdockw{Fixpoint} \coqdef{HilbertIPCsetup.f p}{f\_p}{\coqdocdefinition{f\_p}} (\coqdocvar{A} : \coqref{HilbertIPCsetup.form}{\coqdocinductive{form}}) (\coqdocvar{n} : \coqexternalref{nat}{http://coq.inria.fr/distrib/8.4pl6/stdlib/Coq.Init.Datatypes}{\coqdocinductive{nat}}) : \coqref{HilbertIPCsetup.form}{\coqdocinductive{form}} :=\coqdoceol
\coqdocindent{1.00em}
\coqdockw{match} \coqdocvariable{n} \coqdockw{with}\coqdoceol
\coqdocindent{2.00em}
\ensuremath{|} 0 \ensuremath{\Rightarrow} \coqref{HilbertIPCsetup.var}{\coqdocconstructor{var}} 0\coqdoceol
\coqdocindent{2.00em}
\ensuremath{|} \coqexternalref{S}{http://coq.inria.fr/distrib/8.4pl6/stdlib/Coq.Init.Datatypes}{\coqdocconstructor{S}} \coqdocvar{n'} \ensuremath{\Rightarrow} \coqref{HilbertIPCsetup.sub}{\coqdocdefinition{sub}} (\coqref{HilbertIPCsetup.s p}{\coqdocdefinition{s\_p}} (\coqref{HilbertIPCsetup.f p}{\coqdocdefinition{f\_p}} \coqdocvariable{A} \coqdocvar{n'})) \coqdocvariable{A}\coqdoceol
\coqdocindent{1.00em}
\coqdockw{end}.\coqdoceol
\coqdocemptyline
\end{coqdoccode}

\begin{remark} \label{rem:extensions}
The definitions so far were fairly straightforward. There are, however, two basic lemmas that are worth singling out and contrasting.

\begin{coqdoccode}
\coqdocemptyline
\coqdocnoindent
\coqdockw{Lemma} \coqdef{HilbertIPCsetup.hil ded}{hil\_ded}{\coqdoclemma{hil\_ded}}: \coqdockw{\ensuremath{\forall}} \coqdocvar{G} (\coqdocvar{A} \coqdocvar{B} : \coqref{HilbertIPCsetup.form}{\coqdocinductive{form}}), \coqdocvariable{A} \coqexternalref{:list scope:x '::' x}{http://coq.inria.fr/distrib/8.4pl6/stdlib/Coq.Init.Datatypes}{\coqdocnotation{::}} \coqdocvariable{G} \coqref{HilbertIPCsetup.::x '|--' x}{\coqdocnotation{|--}} \coqdocvariable{B} \ensuremath{\rightarrow}  \coqdocvariable{G} \coqref{HilbertIPCsetup.::x '|--' x}{\coqdocnotation{|--}} \coqdocvariable{A} \coqref{HilbertIPCsetup.::x '->>' x}{\coqdocnotation{->>}} \coqdocvariable{B}.\coqdoceol
\coqdocnoindent
\coqdockw{Lemma} \coqdef{HilbertIPCsetup.ded subst gen}{ded\_subst\_gen}{\coqdoclemma{ded\_subst\_gen}} : \coqdockw{\ensuremath{\forall}} \coqdocvar{A} \coqdocvar{G} \coqdocvar{B} \coqdocvar{C} \coqdocvar{n}, \coqdocvariable{G} \coqref{HilbertIPCsetup.::x '|--' x}{\coqdocnotation{|--}} \coqdocvariable{B} \coqref{HilbertIPCsetup.::x '<<->>' x}{\coqdocnotation{<<->>}} \coqdocvariable{C} \ensuremath{\rightarrow} 
\mycoqlinebreak
\coqdocvariable{G} \coqref{HilbertIPCsetup.::x '|--' x}{\coqdocnotation{|--}} \coqref{HilbertIPCsetup.::x '<<->>' x}{\coqdocnotation{(}}\coqref{HilbertIPCsetup.sub}{\coqdocdefinition{sub}} (\coqref{HilbertIPCsetup.s n}{\coqdocdefinition{s\_n}} \coqdocvariable{n} \coqdocvariable{B}) \coqdocvariable{A}\coqref{HilbertIPCsetup.::x '<<->>' x}{\coqdocnotation{)}} \coqref{HilbertIPCsetup.::x '<<->>' x}{\coqdocnotation{<<->>}} \coqref{HilbertIPCsetup.::x '<<->>' x}{\coqdocnotation{(}}\coqref{HilbertIPCsetup.sub}{\coqdocdefinition{sub}} (\coqref{HilbertIPCsetup.s n}{\coqdocdefinition{s\_n}} \coqdocvariable{n} \coqdocvariable{C}) \coqdocvariable{A}\coqref{HilbertIPCsetup.::x '<<->>' x}{\coqdocnotation{)}}.\coqdoceol
\coqdocemptyline
\end{coqdoccode} 
\noindent
As a consequence of the last lemma we have

\begin{coqdoccode}
\coqdocemptyline
\coqdocnoindent
\coqdockw{Lemma} \coqdef{HilbertIPCsetup.ded subst}{ded\_subst}{\coqdoclemma{ded\_subst}} : \coqdockw{\ensuremath{\forall}} \coqdocvar{A} \coqdocvar{G} \coqdocvar{B} \coqdocvar{C}, \coqdocvariable{G} \coqref{HilbertIPCsetup.::x '|--' x}{\coqdocnotation{|--}} \coqdocvariable{B} \coqref{HilbertIPCsetup.::x '<<->>' x}{\coqdocnotation{<<->>}} \coqdocvariable{C} \ensuremath{\rightarrow} \mycoqlinebreak
\coqdocvariable{G} \coqref{HilbertIPCsetup.::x '|--' x}{\coqdocnotation{|--}} \coqref{HilbertIPCsetup.::x '<<->>' x}{\coqdocnotation{(}}\coqref{HilbertIPCsetup.sub}{\coqdocdefinition{sub}} (\coqref{HilbertIPCsetup.s p}{\coqdocdefinition{s\_p}} \coqdocvariable{B}) \coqdocvariable{A}\coqref{HilbertIPCsetup.::x '<<->>' x}{\coqdocnotation{)}} \coqref{HilbertIPCsetup.::x '<<->>' x}{\coqdocnotation{<<->>}} \coqref{HilbertIPCsetup.::x '<<->>' x}{\coqdocnotation{(}}\coqref{HilbertIPCsetup.sub}{\coqdocdefinition{sub}} (\coqref{HilbertIPCsetup.s p}{\coqdocdefinition{s\_p}} \coqdocvariable{C}) \coqdocvariable{A}\coqref{HilbertIPCsetup.::x '<<->>' x}{\coqdocnotation{)}}.\coqdoceol
\coqdocemptyline
\end{coqdoccode}

For most non-classical logics, it is by no means common to have both of these metatheorems at the same time. In the modal logic setting, for example, a turnstile relation enjoying this deduction theorem (i.e., the one of the form $\Gamma \cup \{A\}\vdash B$ implies $\Gamma \vdash A \to B$) would be the one known as the \emph{local consequence relation}. However, this notion of consequence does not enjoy the substitution metatheorem. 
 The global consequence relation, which incorporates the Rule of Necessitation, satisfies in turn the latter metatheorem, but not the former one. Similar problems would arise in the realm of substructural logics. Yet both these metatheorems are heavily used in Ruitenburg's work (implicitly) and the present formalization (explicitly), which indicates some reasons why the rarity of periodic sequences property outside the realm of locally finite logics may be not a coincidence. As far as substructural logics are concerned (at least those not enjoying the weakening rule), other examples of incompatible metatheorems heavily used in the present development would include:

\begin{coqdoccode}
\coqdocemptyline
\coqdocnoindent
\coqdockw{Lemma} \coqdef{HilbertIPCsetup.hil weaken}{hil\_weaken}{\coqdoclemma{hil\_weaken}}: \coqdockw{\ensuremath{\forall}} \coqdocvar{G} (\coqdocvar{A} \coqdocvar{B} : \coqref{HilbertIPCsetup.form}{\coqdocinductive{form}}),  \coqdocvariable{G} \coqref{HilbertIPCsetup.::x '|--' x}{\coqdocnotation{|--}} \coqdocvariable{A} \ensuremath{\rightarrow} \coqdocvariable{B} \coqexternalref{:list scope:x '::' x}{http://coq.inria.fr/distrib/8.4pl6/stdlib/Coq.Init.Datatypes}{\coqdocnotation{::}} \coqdocvariable{G} \coqref{HilbertIPCsetup.::x '|--' x}{\coqdocnotation{|--}} \coqdocvariable{A}.\coqdoceol
\coqdocnoindent
\coqdocnoindent
\coqdockw{Lemma} \coqdef{HilbertIPCsetup.hil weaken gen}{hil\_weaken\_gen}{\coqdoclemma{hil\_weaken\_gen}} : \coqdockw{\ensuremath{\forall}} \coqdocvar{G} \coqdocvar{G'} \coqdocvar{A}, \coqdocvariable{G} \coqref{HilbertIPCsetup.::x '|--' x}{\coqdocnotation{|--}} \coqdocvariable{A} \ensuremath{\rightarrow} \coqdocvariable{G'} \coqexternalref{:list scope:x '++' x}{http://coq.inria.fr/distrib/8.4pl6/stdlib/Coq.Init.Datatypes}{\coqdocnotation{++}} \coqdocvariable{G} \coqref{HilbertIPCsetup.::x '|--' x}{\coqdocnotation{|--}} \coqdocvariable{A}.\coqdoceol
\coqdocnoindent
\coqdockw{Lemma} \coqdef{HilbertIPCsetup.hil weaken incl}{hil\_weaken\_incl}{\coqdoclemma{hil\_weaken\_incl}} : \coqdockw{\ensuremath{\forall}} \coqdocvar{G} \coqdocvar{G'} \coqdocvar{A}, 
\coqdocvariable{G} \coqref{HilbertIPCsetup.::x '|--' x}{\coqdocnotation{|--}} \coqdocvariable{A} \ensuremath{\rightarrow} \coqexternalref{incl}{http://coq.inria.fr/distrib/8.4pl6/stdlib/Coq.Lists.List}{\coqdocdefinition{incl}} \coqdocvariable{G} \coqdocvariable{G'} \ensuremath{\rightarrow} \coqdocvariable{G'} \coqref{HilbertIPCsetup.::x '|--' x}{\coqdocnotation{|--}} \coqdocvariable{A}.\coqdoceol\end{coqdoccode}

\medskip

However, \iln{PLL} singled out in \Cref{probl:pll} and more broadly all extensions of \iln{S}provides an important example of a logic enjoying simultaneously all metatheorems listed in the present Remark. Just like for intuitionism itself, there is no gap between its local consequence relation  and its global counterpart. See also Remark \ref{rem:pll} below.
\end{remark}

\subsection{Decidability of the turnstile relation} \label{sec:turnstile}

Ruitenburg's proof of \coqdockw{Theorem} \coqdef{Ruitenburg1984KeyTheorem.rui 1 4}{rui\_1\_4}{\coqdoclemma{rui\_1\_4}} in several places does case analysis, splitting cases between $\Gamma \vdash A$ and $\Gamma \not\vdash A$. While in classical metatheory this does not require further justification, constructively of course it amounts to decidability of the turnstile relation. Simpler syntactic notions such as equality between variables or between  formulas are trivially decidable (and, as one may guess, useful in formal development):

\begin{coqdoccode}
\coqdocemptyline
\coqdocnoindent
\coqdockw{Lemma} \coqdef{HilbertIPCsetup.dceq v}{dceq\_v}{\coqdoclemma{dceq\_v}} : \coqdockw{\ensuremath{\forall}} \coqdocvar{n} \coqdocvar{n0},  \coqexternalref{:type scope:'x7B' x 'x7D' '+' 'x7B' x 'x7D'}{http://coq.inria.fr/distrib/8.4pl6/stdlib/Coq.Init.Specif}{\coqdocnotation{\{}}\coqref{HilbertIPCsetup.var}{\coqdocconstructor{var}} \coqdocvariable{n} \coqexternalref{:type scope:x '=' x}{http://coq.inria.fr/distrib/8.4pl6/stdlib/Coq.Init.Logic}{\coqdocnotation{=}} \coqref{HilbertIPCsetup.var}{\coqdocconstructor{var}} \coqdocvariable{n0}\coqexternalref{:type scope:'x7B' x 'x7D' '+' 'x7B' x 'x7D'}{http://coq.inria.fr/distrib/8.4pl6/stdlib/Coq.Init.Specif}{\coqdocnotation{\}}} \coqexternalref{:type scope:'x7B' x 'x7D' '+' 'x7B' x 'x7D'}{http://coq.inria.fr/distrib/8.4pl6/stdlib/Coq.Init.Specif}{\coqdocnotation{+}} \coqexternalref{:type scope:'x7B' x 'x7D' '+' 'x7B' x 'x7D'}{http://coq.inria.fr/distrib/8.4pl6/stdlib/Coq.Init.Specif}{\coqdocnotation{\{}}\coqref{HilbertIPCsetup.var}{\coqdocconstructor{var}} \coqdocvariable{n} \coqexternalref{:type scope:x '<>' x}{http://coq.inria.fr/distrib/8.4pl6/stdlib/Coq.Init.Logic}{\coqdocnotation{\ensuremath{\not=}}} \coqref{HilbertIPCsetup.var}{\coqdocconstructor{var}} \coqdocvariable{n0}\coqexternalref{:type scope:'x7B' x 'x7D' '+' 'x7B' x 'x7D'}{http://coq.inria.fr/distrib/8.4pl6/stdlib/Coq.Init.Specif}{\coqdocnotation{\}}}.\coqdoceol

\coqdocnoindent
\coqdockw{Lemma} \coqdef{HilbertIPCsetup.dceq f}{dceq\_f}{\coqdoclemma{dceq\_f}}: \coqdockw{\ensuremath{\forall}} (\coqdocvar{A} \coqdocvar{B}: \coqref{HilbertIPCsetup.form}{\coqdocinductive{form}}), \coqexternalref{:type scope:'x7B' x 'x7D' '+' 'x7B' x 'x7D'}{http://coq.inria.fr/distrib/8.4pl6/stdlib/Coq.Init.Specif}{\coqdocnotation{\{}}\coqdocvariable{A} \coqexternalref{:type scope:x '=' x}{http://coq.inria.fr/distrib/8.4pl6/stdlib/Coq.Init.Logic}{\coqdocnotation{=}} \coqdocvariable{B}\coqexternalref{:type scope:'x7B' x 'x7D' '+' 'x7B' x 'x7D'}{http://coq.inria.fr/distrib/8.4pl6/stdlib/Coq.Init.Specif}{\coqdocnotation{\}}} \coqexternalref{:type scope:'x7B' x 'x7D' '+' 'x7B' x 'x7D'}{http://coq.inria.fr/distrib/8.4pl6/stdlib/Coq.Init.Specif}{\coqdocnotation{+}} \coqexternalref{:type scope:'x7B' x 'x7D' '+' 'x7B' x 'x7D'}{http://coq.inria.fr/distrib/8.4pl6/stdlib/Coq.Init.Specif}{\coqdocnotation{\{}}\coqdocvariable{A} \coqexternalref{:type scope:x '<>' x}{http://coq.inria.fr/distrib/8.4pl6/stdlib/Coq.Init.Logic}{\coqdocnotation{\ensuremath{\not=}}} \coqdocvariable{B}\coqexternalref{:type scope:'x7B' x 'x7D' '+' 'x7B' x 'x7D'}{http://coq.inria.fr/distrib/8.4pl6/stdlib/Coq.Init.Specif}{\coqdocnotation{\}}}.\coqdoceol
\coqdocemptyline
\end{coqdoccode}
\noindent
But a constructive proof of decidability of turnstile relation would require incorporating an actual decidability proof for \ipc. As the present development is purely syntactic (and extending it with some standard Kripke completeness proof for \ipc\ would provide  little insight into Ruitenburg's proof), the only route worth considering would be the one mentioned  above: give up the Hilbert-style approach of the original paper \cite{Ruitenburg84:jsl} and use a cut-free sequent system together with an actual proof of cut elimination, then use it to prove decidability of turnstile. This is a viable idea for future development, in particular if extended with proof term assignment to extract the computational content of Ruitenburg's result, especially in the light of more recent work discussed in \Cref{sec:related}. Still, it seems orthogonal to the actual goal of verifying the original proof.  The route taken in the present paper looks as follows:

\begin{coqdoccode}
\coqdocemptyline
\coqdocnoindent
\coqdockw{Module} \coqdef{HilbertIPCsetup.DecidEquiv}{DecidEquiv}{\coqdocmodule{DecidEquiv}}.\coqdoceol
\coqdocemptyline
\coqdocnoindent
\coqdockw{Require} \coqdockw{Import} \coqexternalref{}{http://coq.inria.fr/distrib/8.4pl6/stdlib/Coq.Logic.Classical\_Prop}{\coqdoclibrary{Coq.Logic.Classical\_Prop}}.\coqdoceol
\coqdocemptyline
\coqdocnoindent
\coqdockw{Lemma} \coqdef{HilbertIPCsetup.DecidEquiv.decid equiv}{decid\_equiv}{\coqdoclemma{decid\_equiv}} : \coqdockw{\ensuremath{\forall}} \coqdocvar{G} \coqdocvar{A}, \coqexternalref{:type scope:x 'x5C/' x}{http://coq.inria.fr/distrib/8.4pl6/stdlib/Coq.Init.Logic}{\coqdocnotation{(}}\coqdocvariable{G} \coqref{HilbertIPCsetup.::x '|--' x}{\coqdocnotation{|--}} \coqdocvariable{A}\coqexternalref{:type scope:x 'x5C/' x}{http://coq.inria.fr/distrib/8.4pl6/stdlib/Coq.Init.Logic}{\coqdocnotation{)}} \coqexternalref{:type scope:x 'x5C/' x}{http://coq.inria.fr/distrib/8.4pl6/stdlib/Coq.Init.Logic}{\coqdocnotation{\ensuremath{\lor}}} \coqexternalref{:type scope:'x7E' x}{http://coq.inria.fr/distrib/8.4pl6/stdlib/Coq.Init.Logic}{\coqdocnotation{\~{}(}}\coqdocvariable{G} \coqref{HilbertIPCsetup.::x '|--' x}{\coqdocnotation{|--}} \coqdocvariable{A}\coqexternalref{:type scope:'x7E' x}{http://coq.inria.fr/distrib/8.4pl6/stdlib/Coq.Init.Logic}{\coqdocnotation{)}}.\coqdoceol
\coqdocindent{1.00em}
\coqdoctac{intros}. \coqdoctac{apply} \coqexternalref{classic}{http://coq.inria.fr/distrib/8.4pl6/stdlib/Coq.Logic.Classical\_Prop}{\coqdocaxiom{classic}}.\coqdoceol
\coqdocnoindent
\coqdockw{Qed}.\coqdoceol
\coqdocemptyline
\coqdocnoindent
\coqdockw{End} \coqref{HilbertIPCsetup.DecidEquiv}{\coqdocmodule{DecidEquiv}}.\coqdoceol
\coqdocemptyline
\end{coqdoccode}
\noindent
That is, excluded middle was imported inside a module in order not to contaminate the rest of development (admittedly, these days such import strategy is not encouraged by Coq developers). The reader can verify that it is only used in two places in the proof of \coqdockw{Theorem} \coqdef{Ruitenburg1984KeyTheorem.rui 1 4}{rui\_1\_4}{\coqdoclemma{rui\_1\_4}}.

\section{Proving Ruitenburg's auxiliary lemmas} \label{sec:aux}

So far, we were discussing basic metatheorems used implicitly in Ruitenburg's paper. In this section, we are finally going to start formalizing the actual development in the paper itself, corresponding Lemma 1.2 and Lemmas 1.6--1.8 in the original paper. Lemma 1.2 is the very last part of \cdkw{HilbertIPCSetup}:

\begin{coqdoccode}
\coqdocemptyline
\coqdocnoindent
\coqdockw{Lemma} \coqdef{HilbertIPCsetup.rui 1 2 i}{rui\_1\_2\_i}{\coqdoclemma{rui\_1\_2\_i}} : \coqdockw{\ensuremath{\forall}} \coqdocvar{A} \coqdocvar{i} \coqdocvar{k}, \coqexternalref{ListNotations.:list scope:'[' x ';' '..' ';' x ']'}{http://coq.inria.fr/distrib/8.4pl6/stdlib/Coq.Lists.List}{\coqdocnotation{[}}\coqref{HilbertIPCsetup.f p}{\coqdocdefinition{f\_p}} \coqdocvariable{A} \coqdocvariable{i} \coqexternalref{ListNotations.:list scope:'[' x ';' '..' ';' x ']'}{http://coq.inria.fr/distrib/8.4pl6/stdlib/Coq.Lists.List}{\coqdocnotation{;}} \coqref{HilbertIPCsetup.sub}{\coqdocdefinition{sub}} (\coqref{HilbertIPCsetup.s p}{\coqdocdefinition{s\_p}} \coqref{HilbertIPCsetup.tt}{\coqdocconstructor{tt}}) \coqdocvariable{A}\coqexternalref{ListNotations.:list scope:'[' x ';' '..' ';' x ']'}{http://coq.inria.fr/distrib/8.4pl6/stdlib/Coq.Lists.List}{\coqdocnotation{]}} \coqref{HilbertIPCsetup.::x '|--' x}{\coqdocnotation{|--}} \coqref{HilbertIPCsetup.f p}{\coqdocdefinition{f\_p}} \coqdocvariable{A} (\coqdocvariable{i} \coqexternalref{:nat scope:x '+' x}{http://coq.inria.fr/distrib/8.4pl6/stdlib/Coq.Init.Peano}{\coqdocnotation{+}} \coqdocvariable{k}).\coqdoceol
\coqdocemptyline
\coqdocnoindent
\coqdockw{Lemma} \coqdef{HilbertIPCsetup.rui 1 2 ii}{rui\_1\_2\_ii}{\coqdoclemma{rui\_1\_2\_ii}} : \coqdockw{\ensuremath{\forall}} \coqdocvar{A} \coqdocvar{i} \coqdocvar{n}, \coqexternalref{ListNotations.:list scope:'[' x ';' '..' ';' x ']'}{http://coq.inria.fr/distrib/8.4pl6/stdlib/Coq.Lists.List}{\coqdocnotation{[}} \coqref{HilbertIPCsetup.f p}{\coqdocdefinition{f\_p}} \coqdocvariable{A} \coqdocvariable{i}\coqexternalref{ListNotations.:list scope:'[' x ';' '..' ';' x ']'}{http://coq.inria.fr/distrib/8.4pl6/stdlib/Coq.Lists.List}{\coqdocnotation{;}} \coqref{HilbertIPCsetup.sub}{\coqdocdefinition{sub}} (\coqref{HilbertIPCsetup.s p}{\coqdocdefinition{s\_p}} \coqref{HilbertIPCsetup.tt}{\coqdocconstructor{tt}}) \coqdocvariable{A} \coqexternalref{ListNotations.:list scope:'[' x ';' '..' ';' x ']'}{http://coq.inria.fr/distrib/8.4pl6/stdlib/Coq.Lists.List}{\coqdocnotation{]}} \coqref{HilbertIPCsetup.::x '|--' x}{\coqdocnotation{|--}} 
\mycoqlinebreak
\coqref{HilbertIPCsetup.::x '->>' x}{\coqdocnotation{(}}\coqref{HilbertIPCsetup.f p}{\coqdocdefinition{f\_p}} \coqdocvariable{A} (\coqexternalref{S}{http://coq.inria.fr/distrib/8.4pl6/stdlib/Coq.Init.Datatypes}{\coqdocconstructor{S}} \coqdocvariable{n}) \coqref{HilbertIPCsetup.::x '->>' x}{\coqdocnotation{->>}} \coqref{HilbertIPCsetup.f p}{\coqdocdefinition{f\_p}} \coqdocvariable{A} \coqdocvariable{n}\coqref{HilbertIPCsetup.::x '->>' x}{\coqdocnotation{)}} \coqref{HilbertIPCsetup.::x '->>' x}{\coqdocnotation{->>}} \coqref{HilbertIPCsetup.f p}{\coqdocdefinition{f\_p}} \coqdocvariable{A} \coqdocvariable{n}.\coqdoceol
\coqdocemptyline
\coqdocemptyline
\end{coqdoccode}

Lemmas 1.6--1.8 form the entirety of  \cdkw{Ruitenburg1984Aux}:

\begin{coqdoccode}
\coqdocemptyline
\coqdocemptyline
\coqdocnoindent
\coqdockw{Lemma} \coqdef{Ruitenburg1984Aux.rui 1 6}{rui\_1\_6}{\coqdoclemma{rui\_1\_6}} : \coqdockw{\ensuremath{\forall}} \coqdocvar{A} \coqdocvar{m}, \coqdoceol
\coqdocindent{8.00em}
(\coqdockw{\ensuremath{\forall}} \coqdocvar{i}, \coqexternalref{ListNotations.:list scope:'[' x ';' '..' ';' x ']'}{http://coq.inria.fr/distrib/8.4pl6/stdlib/Coq.Lists.List}{\coqdocnotation{[}}\coqref{HilbertIPCsetup.f p}{\coqdocdefinition{f\_p}} \coqdocvariable{A} \coqdocvariable{i} \coqexternalref{ListNotations.:list scope:'[' x ';' '..' ';' x ']'}{http://coq.inria.fr/distrib/8.4pl6/stdlib/Coq.Lists.List}{\coqdocnotation{;}} \coqref{HilbertIPCsetup.sub}{\coqdocdefinition{sub}} (\coqref{HilbertIPCsetup.s p}{\coqdocdefinition{s\_p}} \coqref{HilbertIPCsetup.tt}{\coqdocconstructor{tt}}) \coqdocvariable{A}\coqexternalref{ListNotations.:list scope:'[' x ';' '..' ';' x ']'}{http://coq.inria.fr/distrib/8.4pl6/stdlib/Coq.Lists.List}{\coqdocnotation{]}} \coqref{HilbertIPCsetup.::x '|--' x}{\coqdocnotation{|--}}
 \coqref{HilbertIPCsetup.f p}{\coqdocdefinition{f\_p}} \coqdocvariable{A} \coqdocvariable{m}) \ensuremath{\rightarrow}
\mycoqlinebreak
\coqexternalref{ListNotations.:list scope:'[' x ';' '..' ';' x ']'}{http://coq.inria.fr/distrib/8.4pl6/stdlib/Coq.Lists.List}{\coqdocnotation{[}}\coqref{HilbertIPCsetup.sub}{\coqdocdefinition{sub}} (\coqref{HilbertIPCsetup.s p}{\coqdocdefinition{s\_p}} \coqref{HilbertIPCsetup.tt}{\coqdocconstructor{tt}}) \coqdocvariable{A}\coqexternalref{ListNotations.:list scope:'[' x ';' '..' ';' x ']'}{http://coq.inria.fr/distrib/8.4pl6/stdlib/Coq.Lists.List}{\coqdocnotation{]}} \coqref{HilbertIPCsetup.::x '|--' x}{\coqdocnotation{|--}} \coqref{HilbertIPCsetup.f p}{\coqdocdefinition{f\_p}} \coqdocvariable{A} \coqdocvariable{m} \coqref{HilbertIPCsetup.::x '<<->>' x}{\coqdocnotation{<<->>}} \coqref{HilbertIPCsetup.f p}{\coqdocdefinition{f\_p}} \coqdocvariable{A} (\coqdocvariable{m}\coqexternalref{:nat scope:x '+' x}{http://coq.inria.fr/distrib/8.4pl6/stdlib/Coq.Init.Peano}{\coqdocnotation{+}}1).\coqdoceol
\coqdocemptyline
\coqdocnoindent
\coqdockw{Lemma} \coqdef{Ruitenburg1984Aux.rui 1 7 i}{rui\_1\_7\_i}{\coqdoclemma{rui\_1\_7\_i}} :  \coqdockw{\ensuremath{\forall}} \coqdocvar{A} \coqdocvar{m} \coqdocvar{n}, \coqdoceol
\coqdocindent{8.00em}
\coqexternalref{ListNotations.:list scope:'[' x ';' '..' ';' x ']'}{http://coq.inria.fr/distrib/8.4pl6/stdlib/Coq.Lists.List}{\coqdocnotation{[}} \coqref{HilbertIPCsetup.sub}{\coqdocdefinition{sub}} (\coqref{HilbertIPCsetup.s p}{\coqdocdefinition{s\_p}} \coqref{HilbertIPCsetup.tt}{\coqdocconstructor{tt}}) (\coqref{HilbertIPCsetup.f p}{\coqdocdefinition{f\_p}} \coqdocvariable{A} (2\coqexternalref{:nat scope:x '*' x}{http://coq.inria.fr/distrib/8.4pl6/stdlib/Coq.Init.Peano}{\coqdocnotation{\ensuremath{\times}}}\coqdocvariable{m} \coqexternalref{:nat scope:x '+' x}{http://coq.inria.fr/distrib/8.4pl6/stdlib/Coq.Init.Peano}{\coqdocnotation{+}} 1))\coqexternalref{ListNotations.:list scope:'[' x ';' '..' ';' x ']'}{http://coq.inria.fr/distrib/8.4pl6/stdlib/Coq.Lists.List}{\coqdocnotation{]}} \coqref{HilbertIPCsetup.::x '|--' x}{\coqdocnotation{|--}}  \coqref{HilbertIPCsetup.sub}{\coqdocdefinition{sub}} (\coqref{HilbertIPCsetup.s p}{\coqdocdefinition{s\_p}} \coqref{HilbertIPCsetup.tt}{\coqdocconstructor{tt}}) (\coqref{HilbertIPCsetup.f p}{\coqdocdefinition{f\_p}} \coqdocvariable{A} \coqdocvariable{n}).\coqdoceol
\coqdocemptyline
\coqdocnoindent
\coqdockw{Lemma} \coqdef{Ruitenburg1984Aux.rui 1 7 ii}{rui\_1\_7\_ii}{\coqdoclemma{rui\_1\_7\_ii}} :  \coqdockw{\ensuremath{\forall}} \coqdocvar{A} \coqdocvar{m} \coqdocvar{n}, \coqdoceol
\coqdocindent{6em}
\coqexternalref{ListNotations.:list scope:'[' x ';' '..' ';' x ']'}{http://coq.inria.fr/distrib/8.4pl6/stdlib/Coq.Lists.List}{\coqdocnotation{[}} \coqref{HilbertIPCsetup.sub}{\coqdocdefinition{sub}} (\coqref{HilbertIPCsetup.s p}{\coqdocdefinition{s\_p}} \coqref{HilbertIPCsetup.tt}{\coqdocconstructor{tt}}) (\coqref{HilbertIPCsetup.f p}{\coqdocdefinition{f\_p}} \coqdocvariable{A} (2\coqexternalref{:nat scope:x '*' x}{http://coq.inria.fr/distrib/8.4pl6/stdlib/Coq.Init.Peano}{\coqdocnotation{\ensuremath{\times}}}\coqdocvariable{m} \coqexternalref{:nat scope:x '+' x}{http://coq.inria.fr/distrib/8.4pl6/stdlib/Coq.Init.Peano}{\coqdocnotation{+}} 2))\coqexternalref{ListNotations.:list scope:'[' x ';' '..' ';' x ']'}{http://coq.inria.fr/distrib/8.4pl6/stdlib/Coq.Lists.List}{\coqdocnotation{]}} \coqref{HilbertIPCsetup.::x '|--' x}{\coqdocnotation{|--}}  \coqref{HilbertIPCsetup.sub}{\coqdocdefinition{sub}} (\coqref{HilbertIPCsetup.s p}{\coqdocdefinition{s\_p}} \coqref{HilbertIPCsetup.tt}{\coqdocconstructor{tt}}) (\coqref{HilbertIPCsetup.f p}{\coqdocdefinition{f\_p}} \coqdocvariable{A} (2 \coqexternalref{:nat scope:x '*' x}{http://coq.inria.fr/distrib/8.4pl6/stdlib/Coq.Init.Peano}{\coqdocnotation{\ensuremath{\times}}}\coqdocvariable{n})).\coqdoceol
\coqdocemptyline
\coqdocnoindent
\coqdockw{Lemma} \coqdef{Ruitenburg1984Aux.rui 1 8}{rui\_1\_8}{\coqdoclemma{rui\_1\_8}} : \coqdockw{\ensuremath{\forall}} \coqdocvar{A} \coqdocvar{m}, \coqdoceol
\coqdocindent{8em} \coqexternalref{ListNotations.:list scope:'[' x ';' '..' ';' x ']'}{http://coq.inria.fr/distrib/8.4pl6/stdlib/Coq.Lists.List}{\coqdocnotation{[}}\coqref{HilbertIPCsetup.sub}{\coqdocdefinition{sub}} (\coqref{HilbertIPCsetup.s p}{\coqdocdefinition{s\_p}} \coqref{HilbertIPCsetup.tt}{\coqdocconstructor{tt}}) \coqdocvariable{A}\coqexternalref{ListNotations.:list scope:'[' x ';' '..' ';' x ']'}{http://coq.inria.fr/distrib/8.4pl6/stdlib/Coq.Lists.List}{\coqdocnotation{]}} \coqref{HilbertIPCsetup.::x '|--' x}{\coqdocnotation{|--}}  \coqref{HilbertIPCsetup.::x '<<->>' x}{\coqdocnotation{(}}\coqref{HilbertIPCsetup.f p}{\coqdocdefinition{f\_p}} \coqdocvariable{A} \coqdocvariable{m}\coqref{HilbertIPCsetup.::x '<<->>' x}{\coqdocnotation{)}}  \coqref{HilbertIPCsetup.::x '<<->>' x}{\coqdocnotation{<<->>}}  \coqref{HilbertIPCsetup.::x '<<->>' x}{\coqdocnotation{(}}\coqref{HilbertIPCsetup.f p}{\coqdocdefinition{f\_p}} \coqdocvariable{A} (\coqdocvariable{m} \coqexternalref{:nat scope:x '+' x}{http://coq.inria.fr/distrib/8.4pl6/stdlib/Coq.Init.Peano}{\coqdocnotation{+}} 1)\coqref{HilbertIPCsetup.::x '<<->>' x}{\coqdocnotation{)}} \ensuremath{\rightarrow}\coqdoceol
\coqdocindent{14.00em}
\coqexternalref{ListNotations.:list scope:'[' ']'}{http://coq.inria.fr/distrib/8.4pl6/stdlib/Coq.Lists.List}{\coqdocnotation{[]}} \coqref{HilbertIPCsetup.::x '|--' x}{\coqdocnotation{|--}} \coqref{HilbertIPCsetup.f p}{\coqdocdefinition{f\_p}} \coqdocvariable{A} (\coqdocvariable{m}\coqexternalref{:nat scope:x '+' x}{http://coq.inria.fr/distrib/8.4pl6/stdlib/Coq.Init.Peano}{\coqdocnotation{+}}1) \coqref{HilbertIPCsetup.::x '<<->>' x}{\coqdocnotation{<<->>}}  \coqref{HilbertIPCsetup.f p}{\coqdocdefinition{f\_p}} \coqdocvariable{A} (\coqdocvariable{m} \coqexternalref{:nat scope:x '+' x}{http://coq.inria.fr/distrib/8.4pl6/stdlib/Coq.Init.Peano}{\coqdocnotation{+}} 3).\coqdoceol
\coqdocnoindent
\end{coqdoccode}

As one can see, these are rather nontrivial metatheorems about \ipc. Still, it was a rather pleasant part of the paper to formalize. 
The automated proofs follow closely proofs in the paper. 
 I would even claim that the Coq proofs are at times easier to understand than in the original version and clarify some remarks that were not always entirely transparent---e.g., the repeated instruction to ``use (iterated) substitution''---but this is ultimately a question of individual preferences.

\begin{remark} \label{rem:pll}
It is worth noting that unlike the main theorem itself (\coqdockw{Theorem} \coqdef{Ruitenburg1984Main.rui 1 9 Ens}{rui\_1\_9\_Ens}{\coqdoclemma{rui\_1\_9\_Ens}}) and its key ingredient (\coqdockw{Theorem} \coqdef{Ruitenburg1984KeyTheorem.rui 1 4}{rui\_1\_4}{\coqdoclemma{rui\_1\_4}}) presented in \Cref{sec:bounds_ens} below, all the lemmas in question would hold for systems meeting the restrictions discussed in Remark \ref{rem:extensions}, in particular to intuitionistic modalities satisfying the axiom \iln{S}, i.e., Curry-Howard counterparts of applicative functors. In fact, the repository includes a fork \cdkw{feature/extended\_formalisms} with a subfolder \cdkw{applicative\_development} illustrating that all the results of \cdkw{HilbertIPCSetup} and  \cdkw{Ruitenburg1984Aux} (i.e., the parts of formalization presented in \Cref{sec:setup} and in this section) would work for any extension $\iln{S}$ in the unimodal Heyting signature. Whether or not there are interesting non-locally-finite logics of this form for which the rest of the development can be carried, i.e., for which llps holds is not clear; cf. Open Problem \ref{probl:pll}.
\end{remark}

\section{Bounds as Ensembles: proving  the main result} \label{sec:bounds_ens}

\coqdockw{Theorem} \coqdef{Ruitenburg1984KeyTheorem.rui 1 4}{rui\_1\_4}{\coqdoclemma{rui\_1\_4}} 
relies on a notion of a \emph{bound} of formula $A$ over a context (a set, or in our case a list of formulas).  It is a finite set of formulas, each of which is equivalent to a substituted implicational subformula of $A$ (more precise definition below), and the proof of \coqdockw{Theorem} \coqdef{Ruitenburg1984KeyTheorem.rui 1 4}{rui\_1\_4}{\coqdoclemma{rui\_1\_4}} proceeds by induction over its cardinality. \coqexternalref{Ensemble}{http://coq.inria.fr/distrib/8.4pl6/stdlib/Coq.Sets.Ensembles}{\coqdocdefinition
{Ensemble}}, an old weapon in Coq's arsenal, seems particularly well-suited for such proofs. The disadvantage, of course, is that being a predicate, i.e., a \coqdockw{Prop}-valued function, it does not allow the use of program extraction or  Coq's programming capabilities; we will see a solution in \Cref{sec:bounds_lists}. I believe, however, that it was beneficial to keep the logical and computational uses of bounds apart. If the code is refined in future, it could be beneficial to explicitly use the notion of \emph{reflection} here.

After giving the obvious definition of \coqdef{BoundsSubformulas.Subformulas}{Subformulas}{\coqdocdefinition{Subformulas}}, we identify their special subclass \coqdef{BoundsSubformulas.BoundSubformulas}{BoundSubformulas}{\coqdocdefinition{BoundSub}}-\coqdef{BoundsSubformulas.BoundSubformulas}{BoundSubformulas}{\coqdocdefinition{formulas}}: those which are either implicational subformulas or propositional variables. Then one can proceed to defining what bounds are:
\begin{coqdoccode}
\coqdocemptyline
\coqdocnoindent
\coqdockw{Definition} \coqdef{BoundsSubformulas.Bound}{Bound}{\coqdocdefinition{Bound}} (\coqdocvar{G}: \coqref{HilbertIPCsetup.context}{\coqdocabbreviation{context}}) (\coqdocvar{A} : \coqref{HilbertIPCsetup.form}{\coqdocinductive{form}}) (\coqdocvar{b} : \coqexternalref{Ensemble}{http://coq.inria.fr/distrib/8.4pl6/stdlib/Coq.Sets.Ensembles}{\coqdocdefinition{Ensemble}} \coqref{HilbertIPCsetup.form}{\coqdocinductive{form}}) :=\coqdoceol
\coqdocnoindent
\coqdocindent{1.00em}
\coqdockw{\ensuremath{\forall}} \coqdocvar{C}: \coqref{HilbertIPCsetup.form}{\coqdocinductive{form}}, \coqexternalref{In}{http://coq.inria.fr/distrib/8.4pl6/stdlib/Coq.Sets.Ensembles}{\coqdocdefinition{In}} (\coqref{BoundsSubformulas.BoundSubformulas}{\coqdocdefinition{BoundSubformulas}} \coqdocvariable{A}) \coqdocvariable{C} \ensuremath{\rightarrow} \coqdoceol
\coqdocindent{8.00em}
\coqexternalref{:type scope:'exists' x '..' x ',' x}{http://coq.inria.fr/distrib/8.4pl6/stdlib/Coq.Init.Logic}{\coqdocnotation{\ensuremath{\exists}}} \coqdocvar{B}\coqexternalref{:type scope:'exists' x '..' x ',' x}{http://coq.inria.fr/distrib/8.4pl6/stdlib/Coq.Init.Logic}{\coqdocnotation{,}} \coqexternalref{In}{http://coq.inria.fr/distrib/8.4pl6/stdlib/Coq.Sets.Ensembles}{\coqdocdefinition{In}} \coqdocvariable{b} \coqdocvariable{B} \coqexternalref{:type scope:x '/x5C' x}{http://coq.inria.fr/distrib/8.4pl6/stdlib/Coq.Init.Logic}{\coqdocnotation{\ensuremath{\land}}} \coqdocvariable{G} \coqref{HilbertIPCsetup.::x '|--' x}{\coqdocnotation{|--}} \coqref{HilbertIPCsetup.sub}{\coqdocdefinition{sub}} (\coqref{HilbertIPCsetup.s p}{\coqdocdefinition{s\_p}} \coqref{HilbertIPCsetup.tt}{\coqdocconstructor{tt}}) \coqdocvariable{C} \coqref{HilbertIPCsetup.::x '<<->>' x}{\coqdocnotation{<<->>}} \coqdocvariable{B}.\coqdoceol
\coqdocemptyline
\end{coqdoccode}

There is a minor discrepancy between the definition of \coqref{BoundsSubformulas.Bound}{\coqdocdefinition{Bound}} used here and the one used in the original paper \cite{Ruitenburg84:jsl}: the latter does not impose that \vrbl{b} already contains formulas (equivalent to) \coqref{BoundsSubformulas.Bound}{\coqdocdefinition{BoundSubformulas}} of \vrbl{A}  elements \coqdocvar{p} substituted with \coqdocvar{tt}; in Ruitenburg's version, the part following the turnstile would be \coqref{HilbertIPCsetup.sub}{\coqdocdefinition{sub}} (\coqref{HilbertIPCsetup.s p}{\coqdocdefinition{s\_p}} \coqref{HilbertIPCsetup.tt}{\coqdocconstructor{tt}}) \coqdocvariable{C} \coqref{HilbertIPCsetup.::x '<<->>' x}{\coqdocnotation{<<->>}} (\coqref{HilbertIPCsetup.s p}{\coqdocdefinition{s\_p}} \coqref{HilbertIPCsetup.tt}{\coqdocconstructor{tt}})  \coqdocvariable{B}.
Of course, if \coqdocvar{b} is a bound in his sense, then \coqdocvar{b}\{\coqdocvar{tt}/\coqdocvar{p}\} is a bound both in the present sense and in his sense;
 hence, the present definition is narrower, but the difference does not matter from a practical point of view. On the other hand, Ruitenburg's definition insist that a bound contains \coqdocconstructor{tt}, whereas it may well happen that a  \coqref{BoundsSubformulas.Bound}{\coqdocdefinition{Bound}} contains nothing equivalent to it; consider, for example, a bound of \vrbl{q} \coqdocnotation{->>} \vrbl{r} over \coqdocnotation{[]}. The difference can be handled easily, as we will see in the statement of \coqdockw{Theorem} \coqdef{Ruitenburg1984KeyTheorem.rui 1 4}{rui\_1\_4}{\coqdoclemma{rui\_1\_4}} below; moreover, Ruitenburg's convention is followed when bounds are treated as lists rather than Ensembles (see \Cref{sec:bounds_lists}). 
 
 \begin{coqdoccode}
\coqdocemptyline
 \coqdocnoindent
\coqdockw{Definition} \coqdef{BoundsSubformulas.ExactBound}{ExactBound}{\coqdocdefinition{ExactBound}} (\coqdocvar{G}: \coqref{HilbertIPCsetup.context}{\coqdocabbreviation{context}}) (\coqdocvar{A} : \coqref{HilbertIPCsetup.form}{\coqdocinductive{form}}) (\coqdocvar{b} : \coqexternalref{Ensemble}{http://coq.inria.fr/distrib/8.4pl6/stdlib/Coq.Sets.Ensembles}{\coqdocdefinition{Ensemble}} \coqref{HilbertIPCsetup.form}{\coqdocinductive{form}}) :=\coqdoceol
\coqdocindent{1.00em}
\coqexternalref{:type scope:x '/x5C' x}{http://coq.inria.fr/distrib/8.4pl6/stdlib/Coq.Init.Logic}{\coqdocnotation{(}}\coqdockw{\ensuremath{\forall}} \coqdocvar{B}: \coqref{HilbertIPCsetup.form}{\coqdocinductive{form}}, \coqexternalref{In}{http://coq.inria.fr/distrib/8.4pl6/stdlib/Coq.Sets.Ensembles}{\coqdocdefinition{In}} \coqdocvariable{b} \coqdocvariable{B} \ensuremath{\rightarrow} \coqdoceol
\coqdocindent{8.00em}
\coqexternalref{:type scope:'exists' x '..' x ',' x}{http://coq.inria.fr/distrib/8.4pl6/stdlib/Coq.Init.Logic}{\coqdocnotation{\ensuremath{\exists}}} \coqdocvar{C}\coqexternalref{:type scope:'exists' x '..' x ',' x}{http://coq.inria.fr/distrib/8.4pl6/stdlib/Coq.Init.Logic}{\coqdocnotation{,}} \coqexternalref{In}{http://coq.inria.fr/distrib/8.4pl6/stdlib/Coq.Sets.Ensembles}{\coqdocdefinition{In}} (\coqref{BoundsSubformulas.BoundSubformulas}{\coqdocdefinition{BoundSubformulas}} \coqdocvariable{A}) \coqdocvariable{C} \coqexternalref{:type scope:x '/x5C' x}{http://coq.inria.fr/distrib/8.4pl6/stdlib/Coq.Init.Logic}{\coqdocnotation{\ensuremath{\land}}} \coqdoceol
\coqdocindent{10.30em}
\coqdocvariable{G} \coqref{HilbertIPCsetup.::x '|--' x}{\coqdocnotation{|--}} \coqref{HilbertIPCsetup.sub}{\coqdocdefinition{sub}} (\coqref{HilbertIPCsetup.s p}{\coqdocdefinition{s\_p}} \coqref{HilbertIPCsetup.tt}{\coqdocconstructor{tt}}) \coqdocvariable{C} \coqref{HilbertIPCsetup.::x '<<->>' x}{\coqdocnotation{<<->>}} \coqdocvariable{B}\coqexternalref{:type scope:x '/x5C' x}{http://coq.inria.fr/distrib/8.4pl6/stdlib/Coq.Init.Logic}{\coqdocnotation{)}} \coqexternalref{:type scope:x '/x5C' x}{http://coq.inria.fr/distrib/8.4pl6/stdlib/Coq.Init.Logic}{\coqdocnotation{\ensuremath{\land}}} \coqdoceol
\coqdocindent{1.00em}
\coqref{BoundsSubformulas.Bound}{\coqdocdefinition{Bound}} \coqdocvariable{G} \coqdocvariable{A} \coqdocvariable{b}\coqdoceol
\coqdocemptyline
\end{coqdoccode}
\noindent
 The remainder of \cdkw{BoundsSubformulas.v} is devoted to various auxiliary lemmas. 

We can move on now to the contents of \coqdockw{Ruitenburg1984KeyTheorem}, which contains the actual proof of the central syntactic result in the paper. 
The first larger theorem proved in this file, before actual \coqdockw{Theorem} \coqdef{Ruitenburg1984KeyTheorem.rui 1 4}{rui\_1\_4}{\coqdoclemma{rui\_1\_4}}, deals with a remark in the base case of the proof, referring to the Rieger-Nishimura theorem mentioned in \Cref{sec:intro}. Recall that this very theorem pinpoints why \ipc\ does not have local finiteness by describing the infinite poset of all \ipc-formulas in one free variable, quotiented by provable equivalence and ordered by provable implication. While the Rieger-Nishimura lattice has been thoroughly understood  and reconstructed on several occasions, using differing techniques\footnote{In fact, its very name refers to the rediscovery of Rieger's result \cite{Rieger49} by Nishimura \cite{Nishimura60:jsl}. More information and further references can be found in standard monographs (see, e.g., \cite[Ch. 7]{ChagrovZ97:ml}).}, it could be of interest to formalize it fully. Fortunately, as it turns out, we only need a corollary of this result:

\begin{coqdoccode}
\coqdocemptyline
\coqdocnoindent
\coqdockw{Lemma}  \coqdef{Ruitenburg1984KeyTheorem.Rieger Nishimura corollary}{Rieger\_Nishimura\_corollary}{\coqdoclemma{Rieger\_Nishimura\_corollary}} :\coqdockw{\ensuremath{\forall}} \coqdocvar{G} \coqdocvar{B} \coqdocvar{v},  \coqdoceol
\coqdocindent{5.50em}
\coqref{HilbertIPCsetup.fresh l p}{\coqdocdefinition{fresh\_l\_p}} \coqdocvariable{v} (\coqdocvariable{B} \coqexternalref{:list scope:x '::' x}{http://coq.inria.fr/distrib/8.4pl6/stdlib/Coq.Init.Datatypes}{\coqdocnotation{::}} \coqdocvariable{G}) \ensuremath{\rightarrow}  \coqdoceol
\coqdocindent{5.50em}
( \coqdockw{\ensuremath{\forall}} \coqdocvar{C} : \coqref{HilbertIPCsetup.form}{\coqdocinductive{form}},
(\coqref{BoundsSubformulas.BoundSubformulas}{\coqdocdefinition{BoundSubformulas}} \coqdocvariable{B}) \coqdocvariable{C} \ensuremath{\rightarrow}
\coqdocvariable{G} \coqref{HilbertIPCsetup.::x '|--' x}{\coqdocnotation{|--}} \coqref{HilbertIPCsetup.sub}{\coqdocdefinition{sub}} (\coqref{HilbertIPCsetup.s p}{\coqdocdefinition{s\_p}} \coqref{HilbertIPCsetup.tt}{\coqdocconstructor{tt}}) \coqdocvariable{C}  ) \ensuremath{\rightarrow}
\coqdockw{\ensuremath{\forall}}  \coqdocvar{C},  (\coqref{BoundsSubformulas.Subformulas}{\coqdocdefinition{Subformulas}} \coqdocvariable{B}) \coqdocvariable{C} \ensuremath{\rightarrow}  \coqdocvariable{G} \coqref{HilbertIPCsetup.::x '|--' x}{\coqdocnotation{|--}} \coqref{HilbertIPCsetup.::x '->>' x}{\coqdocnotation{(}}\coqref{HilbertIPCsetup.sub}{\coqdocdefinition{sub}} (\coqref{HilbertIPCsetup.s p}{\coqdocdefinition{s\_p}} (\coqref{HilbertIPCsetup.var}{\coqdocconstructor{var}} \coqdocvariable{v})) \coqdocvariable{C}\coqref{HilbertIPCsetup.::x '->>' x}{\coqdocnotation{)}} \coqref{HilbertIPCsetup.::x '->>' x}{\coqdocnotation{->>}} \coqref{HilbertIPCsetup.ff}{\coqdocconstructor{ff}} \coqexternalref{:type scope:x 'x5C/' x}{http://coq.inria.fr/distrib/8.4pl6/stdlib/Coq.Init.Logic}{\coqdocnotation{\ensuremath{\lor}}}  \coqdoceol
\coqdocindent{12.00em}
\coqdocvariable{G} \coqref{HilbertIPCsetup.::x '|--' x}{\coqdocnotation{|--}} \coqref{HilbertIPCsetup.sub}{\coqdocdefinition{sub}} (\coqref{HilbertIPCsetup.s p}{\coqdocdefinition{s\_p}} (\coqref{HilbertIPCsetup.var}{\coqdocconstructor{var}} \coqdocvariable{v})) \coqdocvariable{C} \coqref{HilbertIPCsetup.::x '<<->>' x}{\coqdocnotation{<<->>}} \coqref{HilbertIPCsetup.::x '<<->>' x}{\coqdocnotation{(}}\coqref{HilbertIPCsetup.var}{\coqdocconstructor{var}} \coqdocvariable{v}\coqref{HilbertIPCsetup.::x '<<->>' x}{\coqdocnotation{)}}  \coqexternalref{:type scope:x 'x5C/' x}{http://coq.inria.fr/distrib/8.4pl6/stdlib/Coq.Init.Logic}{\coqdocnotation{\ensuremath{\lor}}}\coqdoceol
\coqdocindent{12.00em}
\coqdocvariable{G} \coqref{HilbertIPCsetup.::x '|--' x}{\coqdocnotation{|--}} \coqref{HilbertIPCsetup.sub}{\coqdocdefinition{sub}} (\coqref{HilbertIPCsetup.s p}{\coqdocdefinition{s\_p}} (\coqref{HilbertIPCsetup.var}{\coqdocconstructor{var}} \coqdocvariable{v})) \coqdocvariable{C}.\coqdoceol
\coqdocemptyline
\end{coqdoccode}

\noindent
With this last issue out of the way, one can finally state and prove

\newcommand{\smallerint}{\coqdocindent{1.00em}}
\newcommand{\biggerint}{\coqdocindent{2.50em}}
\newcommand{\biggestint}{\coqdocindent{3.20em}}

 \begin{coqdoccode}
\coqdocemptyline
\coqdocnoindent
\coqdockw{Theorem} \coqdef{Ruitenburg1984KeyTheorem.rui 1 4}{rui\_1\_4}{\coqdoclemma{rui\_1\_4}}: \coqdockw{\ensuremath{\forall}} \coqdocvar{n} \coqdocvar{G} \coqdocvar{A} \coqdocvar{B},\coqdoceol
\smallerint
\coqexternalref{Included}{http://coq.inria.fr/distrib/8.4pl6/stdlib/Coq.Sets.Ensembles}{\coqdocdefinition{Included}} (\coqref{BoundsSubformulas.BoundSubformulas}{\coqdocdefinition{BoundSubformulas}} \coqdocvariable{B}) (\coqref{BoundsSubformulas.BoundSubformulas}{\coqdocdefinition{BoundSubformulas}} \coqdocvariable{A}) \ensuremath{\rightarrow} \coqdoceol
\smallerint
\coqdockw{\ensuremath{\forall}} \coqdocvar{i} \coqdocvar{v},\coqdoceol
\smallerint
\coqdockw{let} \coqdocvar{G'} :=  \coqexternalref{:list scope:x '::' x}{http://coq.inria.fr/distrib/8.4pl6/stdlib/Coq.Init.Datatypes}{\coqdocnotation{(}}\coqref{HilbertIPCsetup.f p}{\coqdocdefinition{f\_p}} \coqdocvariable{A} \coqdocvariable{i}\coqexternalref{:list scope:x '::' x}{http://coq.inria.fr/distrib/8.4pl6/stdlib/Coq.Init.Datatypes}{\coqdocnotation{)}} \coqexternalref{:list scope:x '::' x}{http://coq.inria.fr/distrib/8.4pl6/stdlib/Coq.Init.Datatypes}{\coqdocnotation{::}} \coqref{HilbertIPCsetup.sub}{\coqdocdefinition{sub}} (\coqref{HilbertIPCsetup.s p}{\coqdocdefinition{s\_p}} \coqref{HilbertIPCsetup.tt}{\coqdocconstructor{tt}}) \coqdocvariable{A} \coqexternalref{:list scope:x '::' x}{http://coq.inria.fr/distrib/8.4pl6/stdlib/Coq.Init.Datatypes}{\coqdocnotation{::}} \coqdocvariable{G} \coqdoctac{in}\coqdoceol
\smallerint
(\coqexternalref{:type scope:'exists' x '..' x ',' x}{http://coq.inria.fr/distrib/8.4pl6/stdlib/Coq.Init.Logic}{\coqdocnotation{\ensuremath{\exists}}} \coqdocvar{b}\coqexternalref{:type scope:'exists' x '..' x ',' x}{http://coq.inria.fr/distrib/8.4pl6/stdlib/Coq.Init.Logic}{\coqdocnotation{,}}  
\coqdockw{let} \coqdocvar{b'} := (\coqref{BoundsSubformulas.App}{\coqdocdefinition{App}} \coqref{HilbertIPCsetup.tt}{\coqdocconstructor{tt}} \coqdocvariable{b}) \coqdoctac{in} 
\coqref{BoundsSubformulas.Bound}{\coqdocdefinition{Bound}} \coqdocvariable{G'} \coqdocvariable{A} \coqdocvariable{b'} \coqexternalref{:type scope:x '/x5C' x}{http://coq.inria.fr/distrib/8.4pl6/stdlib/Coq.Init.Logic}{\coqdocnotation{\ensuremath{\land}}} \coqexternalref{cardinal}{http://coq.inria.fr/distrib/8.4pl6/stdlib/Coq.Sets.Finite\_sets}{\coqdocinductive{cardinal}} \coqdocvariable{b} \coqdocvariable{n}) \ensuremath{\rightarrow}\coqdoceol
\smallerint
\coqref{HilbertIPCsetup.fresh l p}{\coqdocdefinition{fresh\_l\_p}} \coqdocvariable{v} (\coqref{HilbertIPCsetup.p}{\coqdocdefinition{p}}\coqexternalref{:list scope:x '::' x}{http://coq.inria.fr/distrib/8.4pl6/stdlib/Coq.Init.Datatypes}{\coqdocnotation{::}}\coqdocvariable{B}\coqexternalref{:list scope:x '::' x}{http://coq.inria.fr/distrib/8.4pl6/stdlib/Coq.Init.Datatypes}{\coqdocnotation{::}}\coqdocvariable{A}\coqexternalref{:list scope:x '::' x}{http://coq.inria.fr/distrib/8.4pl6/stdlib/Coq.Init.Datatypes}{\coqdocnotation{::}}\coqdocvariable{G}) \ensuremath{\rightarrow}\coqdoceol
\biggerint
\coqexternalref{:list scope:x '::' x}{http://coq.inria.fr/distrib/8.4pl6/stdlib/Coq.Init.Datatypes}{\coqdocnotation{(}}\coqref{HilbertIPCsetup.::x '->>' x}{\coqdocnotation{(}}\coqref{HilbertIPCsetup.f p}{\coqdocdefinition{f\_p}} \coqdocvariable{A} (2\coqexternalref{:nat scope:x '*' x}{http://coq.inria.fr/distrib/8.4pl6/stdlib/Coq.Init.Peano}{\coqdocnotation{\ensuremath{\times}}}\coqdocvariable{n})\coqref{HilbertIPCsetup.::x '->>' x}{\coqdocnotation{)}} \coqref{HilbertIPCsetup.::x '->>' x}{\coqdocnotation{->>}} \coqref{HilbertIPCsetup.::x '->>' x}{\coqdocnotation{(}}\coqref{HilbertIPCsetup.var}{\coqdocconstructor{var}} \coqdocvariable{v}\coqref{HilbertIPCsetup.::x '->>' x}{\coqdocnotation{)}}\coqexternalref{:list scope:x '::' x}{http://coq.inria.fr/distrib/8.4pl6/stdlib/Coq.Init.Datatypes}{\coqdocnotation{)::}}\coqdocvariable{G'} \coqref{HilbertIPCsetup.::x '|--' x}{\coqdocnotation{|--}} \coqdoceol
\biggestint
\coqref{HilbertIPCsetup.::x 'x26' x}{\coqdocnotation{(}}\coqref{HilbertIPCsetup.sub}{\coqdocdefinition{sub}} (\coqref{HilbertIPCsetup.s p}{\coqdocdefinition{s\_p}} (\coqref{HilbertIPCsetup.var}{\coqdocconstructor{var}} \coqdocvariable{v})) \coqdocvariable{B} \coqref{HilbertIPCsetup.::x '<<->>' x}{\coqdocnotation{<<->>}} \coqref{HilbertIPCsetup.sub}{\coqdocdefinition{sub}} (\coqref{HilbertIPCsetup.s p}{\coqdocdefinition{s\_p}} \coqref{HilbertIPCsetup.tt}{\coqdocconstructor{tt}}) \coqdocvariable{B}\coqref{HilbertIPCsetup.::x 'x26' x}{\coqdocnotation{)}} \coqref{HilbertIPCsetup.::x 'x26' x}{\coqdocnotation{\&}} \coqref{HilbertIPCsetup.::x 'x26' x}{\coqdocnotation{(}}\coqref{HilbertIPCsetup.sub}{\coqdocdefinition{sub}} (\coqref{HilbertIPCsetup.s p}{\coqdocdefinition{s\_p}} \coqref{HilbertIPCsetup.tt}{\coqdocconstructor{tt}}) \coqdocvariable{B} \coqref{HilbertIPCsetup.::x '->>' x}{\coqdocnotation{->>}} \coqref{HilbertIPCsetup.::x '->>' x}{\coqdocnotation{(}}\coqref{HilbertIPCsetup.var}{\coqdocconstructor{var}} \coqdocvariable{v}\coqref{HilbertIPCsetup.::x '->>' x}{\coqdocnotation{)}}\coqref{HilbertIPCsetup.::x 'x26' x}{\coqdocnotation{)}} \coqexternalref{:type scope:x 'x5C/' x}{http://coq.inria.fr/distrib/8.4pl6/stdlib/Coq.Init.Logic}{\coqdocnotation{\ensuremath{\lor}}}\coqdoceol
\biggerint
\coqexternalref{:list scope:x '::' x}{http://coq.inria.fr/distrib/8.4pl6/stdlib/Coq.Init.Datatypes}{\coqdocnotation{(}}\coqref{HilbertIPCsetup.::x '->>' x}{\coqdocnotation{(}}\coqref{HilbertIPCsetup.f p}{\coqdocdefinition{f\_p}} \coqdocvariable{A} (2\coqexternalref{:nat scope:x '*' x}{http://coq.inria.fr/distrib/8.4pl6/stdlib/Coq.Init.Peano}{\coqdocnotation{\ensuremath{\times}}}\coqdocvariable{n})\coqref{HilbertIPCsetup.::x '->>' x}{\coqdocnotation{)}} \coqref{HilbertIPCsetup.::x '->>' x}{\coqdocnotation{->>}} \coqref{HilbertIPCsetup.::x '->>' x}{\coqdocnotation{(}}\coqref{HilbertIPCsetup.var}{\coqdocconstructor{var}} \coqdocvariable{v}\coqref{HilbertIPCsetup.::x '->>' x}{\coqdocnotation{)}}\coqexternalref{:list scope:x '::' x}{http://coq.inria.fr/distrib/8.4pl6/stdlib/Coq.Init.Datatypes}{\coqdocnotation{)::}}\coqdocvariable{G'} \coqref{HilbertIPCsetup.::x '|--' x}{\coqdocnotation{|--}} 
\coqref{HilbertIPCsetup.sub}{\coqdocdefinition{sub}} (\coqref{HilbertIPCsetup.s p}{\coqdocdefinition{s\_p}} (\coqref{HilbertIPCsetup.var}{\coqdocconstructor{var}} \coqdocvariable{v})) \coqdocvariable{B} \coqref{HilbertIPCsetup.::x '<<->>' x}{\coqdocnotation{<<->>}}  \coqref{HilbertIPCsetup.::x '<<->>' x}{\coqdocnotation{(}}\coqref{HilbertIPCsetup.var}{\coqdocconstructor{var}} \coqdocvariable{v}\coqref{HilbertIPCsetup.::x '<<->>' x}{\coqdocnotation{)}} \coqexternalref{:type scope:x 'x5C/' x}{http://coq.inria.fr/distrib/8.4pl6/stdlib/Coq.Init.Logic}{\coqdocnotation{\ensuremath{\lor}}}\coqdoceol
\biggerint
\coqexternalref{:list scope:x '::' x}{http://coq.inria.fr/distrib/8.4pl6/stdlib/Coq.Init.Datatypes}{\coqdocnotation{(}}\coqref{HilbertIPCsetup.::x '->>' x}{\coqdocnotation{(}}\coqref{HilbertIPCsetup.f p}{\coqdocdefinition{f\_p}} \coqdocvariable{A} (2\coqexternalref{:nat scope:x '*' x}{http://coq.inria.fr/distrib/8.4pl6/stdlib/Coq.Init.Peano}{\coqdocnotation{\ensuremath{\times}}}\coqdocvariable{n})\coqref{HilbertIPCsetup.::x '->>' x}{\coqdocnotation{)}} \coqref{HilbertIPCsetup.::x '->>' x}{\coqdocnotation{->>}} \coqref{HilbertIPCsetup.::x '->>' x}{\coqdocnotation{(}}\coqref{HilbertIPCsetup.var}{\coqdocconstructor{var}} \coqdocvariable{v}\coqref{HilbertIPCsetup.::x '->>' x}{\coqdocnotation{)}}\coqexternalref{:list scope:x '::' x}{http://coq.inria.fr/distrib/8.4pl6/stdlib/Coq.Init.Datatypes}{\coqdocnotation{)::}}\coqdocvariable{G'} \coqref{HilbertIPCsetup.::x '|--' x}{\coqdocnotation{|--}} 
\coqref{HilbertIPCsetup.sub}{\coqdocdefinition{sub}} (\coqref{HilbertIPCsetup.s p}{\coqdocdefinition{s\_p}} (\coqref{HilbertIPCsetup.var}{\coqdocconstructor{var}} \coqdocvariable{v})) \coqdocvariable{B}.\coqdoceol
\coqdocnoindent
\coqdocemptyline
\end{coqdoccode}

A comparison reveals inessential differences with the statement of this theorem in the original paper. Instead of assuming that \coqdocdefinition{BoundSubformulas} of \vrbl{B} are contained in those of \vrbl{A}, a premise of the result stated by Ruitenburg is the existence of a bound (of size \vrbl{n}) of \vrbl{A} \& \vrbl{B} over \vrbl{G'} .  Then, at the very beginning of the proof, an observation is made that one can assume that \emph{\vrbl{B} is a subformula of \vrbl{A} by replacing \vrbl{A} by the equivalent formula} \vrbl{A}~\&~(\vrbl{B}~$\backslash$~v/~\coqdocconstructor{tt}). 
The assumption made here, i.e., that
  (\coqref{BoundsSubformulas.BoundSubformulas}{\coqdocdefinition{BoundSubformulas}} \coqdocvariable{B}) are included in (\coqref{BoundsSubformulas.BoundSubformulas}{\coqdocdefinition{BoundSubformulas}} \coqdocvariable{A}) 
seems an optimal solution. 
 Another, very minor difference is that the supposed bound \vrbl{b} is immediately tweaked to \vrbl{b'} containing \coqdocconstructor{tt}. This has to do with the difference in the definition of our \coqdocdefinition{Bound} mentioned above.

The proof is by induction on \vrbl{n}, i.e., the  cardinality of 
a bound for \vrbl{A} (and, consequently, also for \vrbl{B}). The inductive step, furthermore, involves an induction over \vrbl{B}. This in turn involves a consideration of numerous cases and subcases of these subcases, almost each of which involves  some actual propositional reasoning in \ipc. 


The theorem yields corollaries grouped in the file \coqdockw{Ruitenburg1984Main.v}:

\begin{coqdoccode}
\coqdocemptyline
\coqdocnoindent
\coqdockw{Corollary} \coqdef{Ruitenburg1984Main.rui 1 4'}{rui\_1\_4'}{\coqdoclemma{rui\_1\_4'}}: \coqdockw{\ensuremath{\forall}} \coqdocvar{A} \coqdocvar{b} \coqdocvar{n}, (\coqref{BoundsSubformulas.Bound}{\coqdocdefinition{Bound}}  \coqexternalref{ListNotations.:list scope:'[' ']'}{http://coq.inria.fr/distrib/8.4pl6/stdlib/Coq.Lists.List}{\coqdocnotation{[]}} \coqdocvariable{A} \coqdocvariable{b}) \ensuremath{\rightarrow} \coqexternalref{cardinal}{http://coq.inria.fr/distrib/8.4pl6/stdlib/Coq.Sets.Finite\_sets}{\coqdocinductive{cardinal}} \coqdocvariable{b} \coqdocvariable{n} \ensuremath{\rightarrow}\coqdoceol
\smallerint
\coqdockw{\ensuremath{\forall}} \coqdocvar{i} \coqdocvar{v}, 
\coqdockw{let} \coqdocvar{v'} := \coqexternalref{S}{http://coq.inria.fr/distrib/8.4pl6/stdlib/Coq.Init.Datatypes}{\coqdocconstructor{S}} \coqdocvariable{v} \coqdoctac{in} 
\coqref{HilbertIPCsetup.fresh f p}{\coqdocinductive{fresh\_f\_p}} \coqdocvariable{v'} \coqdocvariable{A} \ensuremath{\rightarrow}\coqdoceol
\biggerint
\coqdockw{let} \coqdocvar{G'} :=  \coqexternalref{ListNotations.:list scope:'[' x ';' '..' ';' x ']'}{http://coq.inria.fr/distrib/8.4pl6/stdlib/Coq.Lists.List}{\coqdocnotation{[}}\coqref{HilbertIPCsetup.f p}{\coqdocdefinition{f\_p}} \coqdocvariable{A} \coqdocvariable{i} \coqexternalref{ListNotations.:list scope:'[' x ';' '..' ';' x ']'}{http://coq.inria.fr/distrib/8.4pl6/stdlib/Coq.Lists.List}{\coqdocnotation{;}} \coqref{HilbertIPCsetup.sub}{\coqdocdefinition{sub}} (\coqref{HilbertIPCsetup.s p}{\coqdocdefinition{s\_p}} \coqref{HilbertIPCsetup.tt}{\coqdocconstructor{tt}}) \coqdocvariable{A}\coqexternalref{ListNotations.:list scope:'[' x ';' '..' ';' x ']'}{http://coq.inria.fr/distrib/8.4pl6/stdlib/Coq.Lists.List}{\coqdocnotation{]}}  \coqdoctac{in}\coqdoceol
\biggerint
\coqdockw{let} \coqdocvar{G'{}'} :=  \coqexternalref{:list scope:x '::' x}{http://coq.inria.fr/distrib/8.4pl6/stdlib/Coq.Init.Datatypes}{\coqdocnotation{(}}\coqref{HilbertIPCsetup.f p}{\coqdocdefinition{f\_p}} \coqdocvariable{A} (2\coqexternalref{:nat scope:x '*' x}{http://coq.inria.fr/distrib/8.4pl6/stdlib/Coq.Init.Peano}{\coqdocnotation{\ensuremath{\times}}}\coqdocvariable{n}) \coqref{HilbertIPCsetup.::x '->>' x}{\coqdocnotation{->>}} \coqref{HilbertIPCsetup.::x '->>' x}{\coqdocnotation{(}}\coqref{HilbertIPCsetup.var}{\coqdocconstructor{var}} \coqdocvariable{v'}\coqref{HilbertIPCsetup.::x '->>' x}{\coqdocnotation{)}}\coqexternalref{:list scope:x '::' x}{http://coq.inria.fr/distrib/8.4pl6/stdlib/Coq.Init.Datatypes}{\coqdocnotation{)::}}\coqdocvariable{G'}  \coqdoctac{in}\coqdoceol
\biggestint
\coqdocvariable{G'{}'} \coqref{HilbertIPCsetup.::x '|--' x}{\coqdocnotation{|--}} \coqref{HilbertIPCsetup.::x 'x26' x}{\coqdocnotation{(}}\coqref{HilbertIPCsetup.sub}{\coqdocdefinition{sub}} (\coqref{HilbertIPCsetup.s p}{\coqdocdefinition{s\_p}} (\coqref{HilbertIPCsetup.var}{\coqdocconstructor{var}} \coqdocvariable{v'})) \coqdocvariable{A} \coqref{HilbertIPCsetup.::x '<<->>' x}{\coqdocnotation{<<->>}} \coqref{HilbertIPCsetup.sub}{\coqdocdefinition{sub}} (\coqref{HilbertIPCsetup.s p}{\coqdocdefinition{s\_p}} \coqref{HilbertIPCsetup.tt}{\coqdocconstructor{tt}}) \coqdocvariable{A}\coqref{HilbertIPCsetup.::x 'x26' x}{\coqdocnotation{)}} \coqref{HilbertIPCsetup.::x 'x26' x}{\coqdocnotation{\&}} \coqdoceol
\biggestint\coqdocindent{3em}
\coqref{HilbertIPCsetup.::x 'x26' x}{\coqdocnotation{(}}\coqref{HilbertIPCsetup.sub}{\coqdocdefinition{sub}} (\coqref{HilbertIPCsetup.s p}{\coqdocdefinition{s\_p}} \coqref{HilbertIPCsetup.tt}{\coqdocconstructor{tt}}) \coqdocvariable{A} \coqref{HilbertIPCsetup.::x '->>' x}{\coqdocnotation{->>}} \coqref{HilbertIPCsetup.::x '->>' x}{\coqdocnotation{(}}\coqref{HilbertIPCsetup.var}{\coqdocconstructor{var}} \coqdocvariable{v'}\coqref{HilbertIPCsetup.::x '->>' x}{\coqdocnotation{)}}\coqref{HilbertIPCsetup.::x 'x26' x}{\coqdocnotation{)}} \coqexternalref{:type scope:x 'x5C/' x}{http://coq.inria.fr/distrib/8.4pl6/stdlib/Coq.Init.Logic}{\coqdocnotation{\ensuremath{\lor}}}\coqdoceol
\biggestint
\coqdocvariable{G'{}'} \coqref{HilbertIPCsetup.::x '|--' x}{\coqdocnotation{|--}} \coqref{HilbertIPCsetup.sub}{\coqdocdefinition{sub}} (\coqref{HilbertIPCsetup.s p}{\coqdocdefinition{s\_p}} (\coqref{HilbertIPCsetup.var}{\coqdocconstructor{var}} \coqdocvariable{v'})) \coqdocvariable{A} \coqref{HilbertIPCsetup.::x '<<->>' x}{\coqdocnotation{<<->>}}  \coqref{HilbertIPCsetup.::x '<<->>' x}{\coqdocnotation{(}}\coqref{HilbertIPCsetup.var}{\coqdocconstructor{var}} \coqdocvariable{v'}\coqref{HilbertIPCsetup.::x '<<->>' x}{\coqdocnotation{)}} \coqexternalref{:type scope:x 'x5C/' x}{http://coq.inria.fr/distrib/8.4pl6/stdlib/Coq.Init.Logic}{\coqdocnotation{\ensuremath{\lor}}}\coqdoceol
\biggestint
\coqdocvariable{G'{}'} \coqref{HilbertIPCsetup.::x '|--' x}{\coqdocnotation{|--}} \coqref{HilbertIPCsetup.sub}{\coqdocdefinition{sub}} (\coqref{HilbertIPCsetup.s p}{\coqdocdefinition{s\_p}} (\coqref{HilbertIPCsetup.var}{\coqdocconstructor{var}} \coqdocvariable{v'})) \coqdocvariable{A}.\coqdoceol
\coqdocemptyline
\end{coqdoccode}

In other words, this corollary is simply stating what \coqdef{Ruitenburg1984KeyTheorem.rui 1 4}{rui\_1\_4}{\coqdoclemma{rui\_1\_4}} means for a single formula \vrbl{A} rather than for a pair of formulas \vrbl{A}, \vrbl{B} (note one needs to ensure here that the fresh variable in question is not \vrbl{p} itself; it might happen that \vrbl{A} does not contain it).

\begin{coqdoccode}
\coqdocemptyline
\coqdocnoindent
\coqdockw{Corollary} \coqdef{Ruitenburg1984Main.rui 1 5}{rui\_1\_5}{\coqdoclemma{rui\_1\_5}}: \coqdockw{\ensuremath{\forall}} \coqdocvar{A} \coqdocvar{b} \coqdocvar{i} \coqdocvar{n}, (\coqref{BoundsSubformulas.Bound}{\coqdocdefinition{Bound}} \coqexternalref{ListNotations.:list scope:'[' ']'}{http://coq.inria.fr/distrib/8.4pl6/stdlib/Coq.Lists.List}{\coqdocnotation{[]}} \coqdocvariable{A} \coqdocvariable{b}) \ensuremath{\rightarrow} \coqexternalref{cardinal}{http://coq.inria.fr/distrib/8.4pl6/stdlib/Coq.Sets.Finite\_sets}{\coqdocinductive{cardinal}} \coqdocvariable{b} \coqdocvariable{n} \ensuremath{\rightarrow} \coqdoceol
\biggerint
 \coqdockw{let} \coqdocvar{m}:= (2 \coqexternalref{:nat scope:x '*' x}{http://coq.inria.fr/distrib/8.4pl6/stdlib/Coq.Init.Peano}{\coqdocnotation{\ensuremath{\times}}} \coqdocvariable{n} \coqexternalref{:nat scope:x '+' x}{http://coq.inria.fr/distrib/8.4pl6/stdlib/Coq.Init.Peano}{\coqdocnotation{+}} 1) \coqdoctac{in} \coqdoceol
 \biggestint
  \coqexternalref{ListNotations.:list scope:'[' x ';' '..' ';' x ']'}{http://coq.inria.fr/distrib/8.4pl6/stdlib/Coq.Lists.List}{\coqdocnotation{[}}\coqref{HilbertIPCsetup.sub}{\coqdocdefinition{sub}} (\coqref{HilbertIPCsetup.s p}{\coqdocdefinition{s\_p}} \coqref{HilbertIPCsetup.tt}{\coqdocconstructor{tt}}) \coqdocvariable{A}\coqexternalref{ListNotations.:list scope:'[' x ';' '..' ';' x ']'}{http://coq.inria.fr/distrib/8.4pl6/stdlib/Coq.Lists.List}{\coqdocnotation{;}} \coqref{HilbertIPCsetup.f p}{\coqdocdefinition{f\_p}} \coqdocvariable{A} \coqdocvariable{i}\coqexternalref{ListNotations.:list scope:'[' x ';' '..' ';' x ']'}{http://coq.inria.fr/distrib/8.4pl6/stdlib/Coq.Lists.List}{\coqdocnotation{]}} \coqref{HilbertIPCsetup.::x '|--' x}{\coqdocnotation{|--}} \coqref{HilbertIPCsetup.f p}{\coqdocdefinition{f\_p}} \coqdocvariable{A} \coqdocvariable{m}.\coqdoceol
\coqdocemptyline
\end{coqdoccode}

While it may seem surprising, proving this much simpler-looking corollary is the only goal of the intimidating \coqdockw{Theorem} \coqdef{Ruitenburg1984KeyTheorem.rui 1 4}{rui\_1\_4}{\coqdoclemma{rui\_1\_4}}, and the proof  of this corollary is not even using the theorem itself, but rather \coqdockw{Corollary} \coqdef{Ruitenburg1984Main.rui 1 4'}{rui\_1\_4'}{\coqdoclemma{rui\_1\_4'}} above. Note also that in the original paper, the statement of the theorem does not involve the connection between \vrbl{m} and a bound, although it is clear in the proof. 

This corollary is now combined with Lemmas 1.6--1.8 discussed in \Cref{sec:aux} to yield 

\begin{coqdoccode}
\coqdocemptyline
\coqdocnoindent
\coqdockw{Theorem} \coqdef{Ruitenburg1984Main.rui 1 9 Ens}{rui\_1\_9\_Ens}{\coqdoclemma{rui\_1\_9\_Ens}}:\coqdoceol
\coqdocindent{1.00em}
\coqdockw{\ensuremath{\forall}} \coqdocvar{A} \coqdocvar{b} \coqdocvar{m}, (\coqref{BoundsSubformulas.Bound}{\coqdocdefinition{Bound}} \coqexternalref{ListNotations.:list scope:'[' ']'}{http://coq.inria.fr/distrib/8.4pl6/stdlib/Coq.Lists.List}{\coqdocnotation{[]}} \coqdocvariable{A} \coqdocvariable{b}) \ensuremath{\rightarrow}\coqdoceol
\coqdocindent{8.00em}
\coqexternalref{cardinal}{http://coq.inria.fr/distrib/8.4pl6/stdlib/Coq.Sets.Finite\_sets}{\coqdocinductive{cardinal}} \coqdocvariable{b} \coqdocvariable{m} \ensuremath{\rightarrow}\coqdoceol
\coqdocindent{8.00em}
\coqexternalref{ListNotations.:list scope:'[' ']'}{http://coq.inria.fr/distrib/8.4pl6/stdlib/Coq.Lists.List}{\coqdocnotation{[]}} \coqref{HilbertIPCsetup.::x '|--' x}{\coqdocnotation{|--}} \coqref{HilbertIPCsetup.f p}{\coqdocdefinition{f\_p}} \coqdocvariable{A} (2 \coqexternalref{:nat scope:x '*' x}{http://coq.inria.fr/distrib/8.4pl6/stdlib/Coq.Init.Peano}{\coqdocnotation{\ensuremath{\times}}} \coqdocvariable{m} \coqexternalref{:nat scope:x '+' x}{http://coq.inria.fr/distrib/8.4pl6/stdlib/Coq.Init.Peano}{\coqdocnotation{+}} 2) \coqref{HilbertIPCsetup.::x '<<->>' x}{\coqdocnotation{<<->>}}  \coqref{HilbertIPCsetup.f p}{\coqdocdefinition{f\_p}} \coqdocvariable{A} (2 \coqexternalref{:nat scope:x '*' x}{http://coq.inria.fr/distrib/8.4pl6/stdlib/Coq.Init.Peano}{\coqdocnotation{\ensuremath{\times}}} \coqdocvariable{m} \coqexternalref{:nat scope:x '+' x}{http://coq.inria.fr/distrib/8.4pl6/stdlib/Coq.Init.Peano}{\coqdocnotation{+}} 4).\coqdoceol
\coqdocemptyline
\end{coqdoccode}

This finally explains the name \coqref{BoundsSubformulas.Bound}{\coqdocdefinition{Bound}}: the size of such a bound for \vrbl{A} determines (linearly) how many iterated substitutions (at worst) it takes before it enters the cycle. Clearly, there is always a bound linear in the size of \vrbl{A}: simply take all the implicational subformulas (i.e., \coqdef{BoundsSubformulas.BoundSubformulas}{BoundSubformulas}{\coqdocdefinition{BoundSubformulas}}) of \vrbl{A} and substitute \coqdocconstructor{tt} for \vrbl{p} removing possible duplicates. Is it the best that one can do? And is it important to care about such minor adjustments? This is the last part of our considerations.

\section{Bounds as lists: computations and program extraction} \label{sec:bounds_lists}
 
 As discussed in \Cref{sec:bounds_ens}, treatment of bounds as Ensembles, very convenient for the proof of main result, is not very useful computationally, starting from the fact that \coqdockw{Prop} gets erased during program extraction. For this reason, one can consider a natural reformulation of the notion of a bound in terms of lists. These, in turn, would be awkward in proofs  discussed in \Cref{sec:bounds_ens}, but are  bread-and-butter from a functional programming point of view. All that is needed to verify the programs obtained in this way is to provide bridge theorems between Ensemble- and list-counterparts of the same notion, and this is much easier than developing everything from scratch in terms of lists. These are the contents of \coqdockw{BoundsLists.v}. As suggested above, a more structured approach would probably involve systematic use of reflection.
 
A suitable counterpart of \coqref{BoundsSubformulas.Bound}{\coqdocdefinition{Bound}}  is

\begin{coqdoccode}
\coqdocemptyline
\coqdocnoindent
\coqdockw{Definition} \coqdef{BoundsLists.bound}{bound}{\coqdocdefinition{bound}} (\coqdocvar{b}: \coqexternalref{list}{http://coq.inria.fr/distrib/8.4pl6/stdlib/Coq.Init.Datatypes}{\coqdocinductive{list}} \coqref{HilbertIPCsetup.form}{\coqdocinductive{form}}) (\coqdocvar{A} : \coqref{HilbertIPCsetup.form}{\coqdocinductive{form}}) (\coqdocvar{G} : \coqref{HilbertIPCsetup.context}{\coqdocabbreviation{context}}) :=\coqdoceol
\coqdocindent{1.00em}
\coqexternalref{Forall}{http://coq.inria.fr/distrib/8.4pl6/stdlib/Coq.Lists.List}{\coqdocinductive{Forall}} (\coqdockw{fun} \coqdocvar{C} \ensuremath{\Rightarrow} \coqexternalref{Exists}{http://coq.inria.fr/distrib/8.4pl6/stdlib/Coq.Lists.List}{\coqdocinductive{Exists}} (\coqdockw{fun} \coqdocvar{B} \ensuremath{\Rightarrow} \coqdocvariable{G} \coqref{HilbertIPCsetup.::x '|--' x}{\coqdocnotation{|--}}  \coqdocvariable{B} \coqref{HilbertIPCsetup.::x '<<->>' x}{\coqdocnotation{<<->>}}  \coqdocvariable{C} \coqexternalref{:type scope:x '/x5C' x}{http://coq.inria.fr/distrib/8.4pl6/stdlib/Coq.Init.Logic}{\coqdocnotation{\ensuremath{\land}}} \coqexternalref{ListNotations.:list scope:'[' ']'}{http://coq.inria.fr/distrib/8.4pl6/stdlib/Coq.Lists.List}{\coqdocnotation{[]}} \coqref{HilbertIPCsetup.::x '|--' x}{\coqdocnotation{|--}} \coqdocvariable{B} \coqref{HilbertIPCsetup.::x '<<->>' x}{\coqdocnotation{<<->>}} \coqref{HilbertIPCsetup.sub}{\coqdocdefinition{sub}} (\coqref{HilbertIPCsetup.s p}{\coqdocdefinition{s\_p}} \coqref{HilbertIPCsetup.tt}{\coqdocconstructor{tt}}) \coqdocvariable{B}) \coqdocvariable{b}) (\coqref{BoundsLists.mb red}{\coqdocdefinition{mb\_red}} \coqdocvariable{A}).\coqdoceol
\coqdocemptyline
\coqdocemptyline
\end{coqdoccode}
\noindent
Using a natural function converting contexts (i.e., lists) to Ensembles, one can easily show

\begin{coqdoccode}
\coqdocemptyline
\coqdocnoindent
\coqdockw{Lemma} \coqdef{BoundsLists.bound is Bound}{bound\_is\_Bound}{\coqdoclemma{bound\_is\_Bound}} : \coqdockw{\ensuremath{\forall}} \coqdocvar{b} \coqdocvar{A} \coqdocvar{G}, \coqref{BoundsLists.bound}{\coqdocdefinition{bound}} \coqdocvariable{b} \coqdocvariable{A} \coqdocvariable{G} \ensuremath{\rightarrow} \coqref{BoundsSubformulas.Bound}{\coqdocdefinition{Bound}} \coqdocvariable{G} \coqdocvariable{A} (\coqref{BoundsSubformulas.context to set}{\coqdocdefinition{context\_to\_set}} \coqdocvariable{b}).\coqdoceol
\coqdocemptyline
\end{coqdoccode}
\noindent
And the corresponding version of the main theorem is

\begin{coqdoccode}
\coqdocemptyline
\coqdocnoindent
\coqdockw{Theorem} \coqdef{Ruitenburg1984Main.rui 1 9 list}{rui\_1\_9\_list}{\coqdoclemma{rui\_1\_9\_list}}: \coqdockw{\ensuremath{\forall}} \coqdocvar{A} \coqdocvar{b}, (\coqref{BoundsLists.bound}{\coqdocdefinition{bound}} \coqdocvariable{b} \coqdocvariable{A} \coqexternalref{ListNotations.:list scope:'[' ']'}{http://coq.inria.fr/distrib/8.4pl6/stdlib/Coq.Lists.List}{\coqdocnotation{[]}}) \ensuremath{\rightarrow} \coqdoceol
\smallerint
 \coqexternalref{:type scope:'exists' x '..' x ',' x}{http://coq.inria.fr/distrib/8.4pl6/stdlib/Coq.Init.Logic}{\coqdocnotation{\ensuremath{\exists}}} \coqdocvar{m}\coqexternalref{:type scope:'exists' x '..' x ',' x}{http://coq.inria.fr/distrib/8.4pl6/stdlib/Coq.Init.Logic}{\coqdocnotation{,}} \coqdocvariable{m} \coqexternalref{:nat scope:x '<=' x}{http://coq.inria.fr/distrib/8.4pl6/stdlib/Coq.Init.Peano}{\coqdocnotation{\ensuremath{\le}}} \coqexternalref{length}{http://coq.inria.fr/distrib/8.4pl6/stdlib/Coq.Init.Datatypes}{\coqdocdefinition{length}} \coqdocvariable{b} \coqexternalref{:type scope:x '/x5C' x}{http://coq.inria.fr/distrib/8.4pl6/stdlib/Coq.Init.Logic}{\coqdocnotation{\ensuremath{\land}}}
\coqexternalref{ListNotations.:list scope:'[' ']'}{http://coq.inria.fr/distrib/8.4pl6/stdlib/Coq.Lists.List}{\coqdocnotation{[]}} \coqref{HilbertIPCsetup.::x '|--' x}{\coqdocnotation{|--}} \coqref{HilbertIPCsetup.f p}{\coqdocdefinition{f\_p}} \coqdocvariable{A} (2 \coqexternalref{:nat scope:x '*' x}{http://coq.inria.fr/distrib/8.4pl6/stdlib/Coq.Init.Peano}{\coqdocnotation{\ensuremath{\times}}} \coqdocvariable{m} \coqexternalref{:nat scope:x '+' x}{http://coq.inria.fr/distrib/8.4pl6/stdlib/Coq.Init.Peano}{\coqdocnotation{+}} 2) \coqref{HilbertIPCsetup.::x '<<->>' x}{\coqdocnotation{<<->>}}  \coqref{HilbertIPCsetup.f p}{\coqdocdefinition{f\_p}} \coqdocvariable{A} (2 \coqexternalref{:nat scope:x '*' x}{http://coq.inria.fr/distrib/8.4pl6/stdlib/Coq.Init.Peano}{\coqdocnotation{\ensuremath{\times}}} \coqdocvariable{m} \coqexternalref{:nat scope:x '+' x}{http://coq.inria.fr/distrib/8.4pl6/stdlib/Coq.Init.Peano}{\coqdocnotation{+}} 4).\coqdoceol
\coqdocnoindent
\coqdocemptyline
\end{coqdoccode}

The most na\"ive way to produce a \coqdef{BoundsLists.bound}{bound}{\coqdocdefinition{bound}} for \vrbl{A} over \coqexternalref{ListNotations.:list scope:'[' ']'}{http://coq.inria.fr/distrib/8.4pl6/stdlib/Coq.Lists.List}{\coqdocnotation{[]}} (and hence over any \vrbl{G}, as implied by \coqdef{BoundsLists.bound for bound upward}{bound\_for\_bound\_upward}{\coqdoclemma{bound\_for\_bound\_upward}}) is by 

\begin{coqdoccode}
\coqdocemptyline
\coqdocnoindent
\coqdockw{Fixpoint} \coqdef{BoundsLists.mb red}{mb\_red}{\coqdocdefinition{mb\_red}} (\coqdocvar{A} : \coqref{HilbertIPCsetup.form}{\coqdocinductive{form}}) : \coqexternalref{list}{http://coq.inria.fr/distrib/8.4pl6/stdlib/Coq.Init.Datatypes}{\coqdocinductive{list}} \coqref{HilbertIPCsetup.form}{\coqdocinductive{form}} :=\coqdoceol
\coqdocnoindent
\coqdockw{match} \coqdocvariable{A} \coqdockw{with}\coqdoceol
\coqdocindent{2.00em}
\ensuremath{|} \coqref{HilbertIPCsetup.var}{\coqdocconstructor{var}} \coqdocvar{i} \ensuremath{\Rightarrow} \coqexternalref{ListNotations.:list scope:'[' x ';' '..' ';' x ']'}{http://coq.inria.fr/distrib/8.4pl6/stdlib/Coq.Lists.List}{\coqdocnotation{[}}\coqref{HilbertIPCsetup.sub}{\coqdocdefinition{sub}} (\coqref{HilbertIPCsetup.s p}{\coqdocdefinition{s\_p}} \coqref{HilbertIPCsetup.tt}{\coqdocconstructor{tt}}) (\coqref{HilbertIPCsetup.var}{\coqdocconstructor{var}} \coqdocvar{i}) \coqexternalref{ListNotations.:list scope:'[' x ';' '..' ';' x ']'}{http://coq.inria.fr/distrib/8.4pl6/stdlib/Coq.Lists.List}{\coqdocnotation{;}} \coqref{HilbertIPCsetup.tt}{\coqdocconstructor{tt}}\coqexternalref{ListNotations.:list scope:'[' x ';' '..' ';' x ']'}{http://coq.inria.fr/distrib/8.4pl6/stdlib/Coq.Lists.List}{\coqdocnotation{]}}\coqdoceol
\coqdocindent{2.00em}
\ensuremath{|} \coqdocvar{B} \coqref{HilbertIPCsetup.::x '->>' x}{\coqdocnotation{->>}} \coqdocvar{C} \ensuremath{\Rightarrow} \coqref{HilbertIPCsetup.sub}{\coqdocdefinition{sub}} (\coqref{HilbertIPCsetup.s p}{\coqdocdefinition{s\_p}} \coqref{HilbertIPCsetup.tt}{\coqdocconstructor{tt}}) (\coqdocvar{B} \coqref{HilbertIPCsetup.::x '->>' x}{\coqdocnotation{->>}} \coqdocvar{C}) \coqexternalref{:list scope:x '::' x}{http://coq.inria.fr/distrib/8.4pl6/stdlib/Coq.Init.Datatypes}{\coqdocnotation{::}} \coqexternalref{:list scope:x '::' x}{http://coq.inria.fr/distrib/8.4pl6/stdlib/Coq.Init.Datatypes}{\coqdocnotation{(}}\coqref{BoundsLists.mb red}{\coqdocdefinition{mb\_red}} \coqdocvar{B} \coqexternalref{:list scope:x '++' x}{http://coq.inria.fr/distrib/8.4pl6/stdlib/Coq.Init.Datatypes}{\coqdocnotation{++}} \coqref{BoundsLists.mb red}{\coqdocdefinition{mb\_red}} \coqdocvar{C}\coqexternalref{:list scope:x '::' x}{http://coq.inria.fr/distrib/8.4pl6/stdlib/Coq.Init.Datatypes}{\coqdocnotation{)}}\coqdoceol
\coqdocindent{2.00em}
\ensuremath{|} \coqdocvar{B} \coqref{HilbertIPCsetup.::x 'x26' x}{\coqdocnotation{\&}} \coqdocvar{C} \ensuremath{\Rightarrow} (\coqref{BoundsLists.mb red}{\coqdocdefinition{mb\_red}} \coqdocvar{B} \coqexternalref{:list scope:x '++' x}{http://coq.inria.fr/distrib/8.4pl6/stdlib/Coq.Init.Datatypes}{\coqdocnotation{++}} \coqref{BoundsLists.mb red}{\coqdocdefinition{mb\_red}} \coqdocvar{C})\coqdoceol
\coqdocindent{2.00em}
\ensuremath{|} \coqdocvar{B} \coqref{HilbertIPCsetup.::x 'x5Cv/' x}{\coqdocnotation{\symbol{92}}}\coqref{HilbertIPCsetup.::x 'x5Cv/' x}{\coqdocnotation{v}}\coqref{HilbertIPCsetup.::x 'x5Cv/' x}{\coqdocnotation{/}} \coqdocvar{C} \ensuremath{\Rightarrow} (\coqref{BoundsLists.mb red}{\coqdocdefinition{mb\_red}} \coqdocvar{B} \coqexternalref{:list scope:x '++' x}{http://coq.inria.fr/distrib/8.4pl6/stdlib/Coq.Init.Datatypes}{\coqdocnotation{++}} \coqref{BoundsLists.mb red}{\coqdocdefinition{mb\_red}} \coqdocvar{C})\coqdoceol
\coqdocindent{2.00em}
\ensuremath{|} \coqref{HilbertIPCsetup.tt}{\coqdocconstructor{tt}} \ensuremath{\Rightarrow} \coqexternalref{ListNotations.:list scope:'[' x ';' '..' ';' x ']'}{http://coq.inria.fr/distrib/8.4pl6/stdlib/Coq.Lists.List}{\coqdocnotation{[}}\coqref{HilbertIPCsetup.tt}{\coqdocconstructor{tt}}\coqexternalref{ListNotations.:list scope:'[' x ';' '..' ';' x ']'}{http://coq.inria.fr/distrib/8.4pl6/stdlib/Coq.Lists.List}{\coqdocnotation{]}}\coqdoceol
\coqdocindent{2.00em}
\ensuremath{|} \coqref{HilbertIPCsetup.ff}{\coqdocconstructor{ff}} \ensuremath{\Rightarrow} \coqexternalref{ListNotations.:list scope:'[' x ';' '..' ';' x ']'}{http://coq.inria.fr/distrib/8.4pl6/stdlib/Coq.Lists.List}{\coqdocnotation{[}}\coqref{HilbertIPCsetup.tt}{\coqdocconstructor{tt}}\coqexternalref{ListNotations.:list scope:'[' x ';' '..' ';' x ']'}{http://coq.inria.fr/distrib/8.4pl6/stdlib/Coq.Lists.List}{\coqdocnotation{]}}\coqdoceol
\coqdocnoindent
\coqdockw{end}.\coqdoceol
\coqdocemptyline
\end{coqdoccode}
\noindent
Note that this time we are following Ruitenburg's convention and explicitly including \coqref{HilbertIPCsetup.tt}{\coqdocconstructor{tt}}. 

However, this is obviously suboptimal. To begin with, the output of \coqdef{BoundsLists.mb red}{mb\_red}{\coqdocdefinition{mb\_red}} is almost guaranteed to contain duplicates, but this is easy to deal with (using \coqdef{BoundsLists.dup rem}{dup\_rem}{\coqdocdefinition{dup\_rem}}). More importantly, such a list is also likely to contain equivalent formulas, which are also redundant. The problem gets dramatic when the formula in question contains no other variables than \vrbl{p}; cf. Proposition 2.3 and Theorem 2.4 in the original paper \cite{Ruitenburg84:jsl};  within the one-variable fragment, \ipc\ has strictly globally periodic sequences, just like the classical logic. But improvements are possible also when a formula contains more than one variable. The present development is restricted to a simple optimizer \coqdef{BoundsLists.t optimize}{t\_optimize}{\coqdocdefinition{t\_optimize}}, which is essentially removing redundant occurrences of \coqref{HilbertIPCsetup.tt}{\coqdocconstructor{tt}}.  The function \coqdef{BoundsLists.optimized bound}{optimized\_bound}{\coqdocdefinition{optimized\_bound}} combines duplicate removal from \coqdef{BoundsLists.dup rem}{dup\_rem}{\coqdocdefinition{dup\_rem}} with iterating  \coqdef{BoundsLists.t optimize}{t\_optimize}{\coqdocdefinition{t\_optimize}} as many times as the formula depth of \vrbl{A} requires. Still further improvements are conceivable. Given that \ipc\ is decidable, the ultimate option would be to integrate a decision procedure for \ipc\ (cf. \Cref{sec:turnstile}) and test pairwise elements of a given \coqdef{BoundsLists.bound}{bound}{\coqdocdefinition{bound}}, removing the elements equivalent to those  found earlier in the list. 

Nevertheless, the present stage of development can already be used for actual computation, either via Coq's core functional programming language Gallina or, if one prefers, via program extraction. Combo functions available at the end of   \coqdockw{Ruitenburg1984Main.v}, i.e., \coqdocvar{optimized\_cycle}  or \coqdocvar{cycle\_formula\_length} produce the value after which the sequence is going to enter a cycle for a given $A(p)$ and the size of  corresponding $A^{2m + 2}(p)$. They can be directly extracted to any typical target language such as Haskell or OCaml. In fact, Coq's \coqdockw{Compute} itself does a satisfying job in computing these values. However, even simple experiments indicate one should be rather careful as a blow-up can occur very quickly. The fact that $m$ itself is linear in the size of $A$ surely enough does not mean that $A^{2m + 2}(p)$ is and rather simple examples can make it painfully clear.  Consider, e.g., a formula from \coqdockw{BoundsLists.v}:

\begin{coqdoccode}
\coqdocemptyline
\coqdocnoindent
\coqdockw{Definition} \coqdef{BoundsLists.exform1}{exform1}{\coqdocdefinition{exform1}} := \coqref{HilbertIPCsetup.::x 'x26' x}{\coqdocnotation{(}}\coqref{HilbertIPCsetup.q}{\coqdocdefinition{q}} \coqref{HilbertIPCsetup.::x '->>' x}{\coqdocnotation{->>}} \coqref{HilbertIPCsetup.p}{\coqdocdefinition{p}} \coqref{HilbertIPCsetup.::x '->>' x}{\coqdocnotation{->>}}\coqref{HilbertIPCsetup.r}{\coqdocdefinition{r}}\coqref{HilbertIPCsetup.::x 'x26' x}{\coqdocnotation{)}} \coqref{HilbertIPCsetup.::x 'x26' x}{\coqdocnotation{\&}} \coqref{HilbertIPCsetup.::x 'x26' x}{\coqdocnotation{(}}\coqref{HilbertIPCsetup.::x '->>' x}{\coqdocnotation{(}}\coqref{HilbertIPCsetup.p}{\coqdocdefinition{p}} \coqref{HilbertIPCsetup.::x '->>' x}{\coqdocnotation{->>}} \coqref{HilbertIPCsetup.r}{\coqdocdefinition{r}}\coqref{HilbertIPCsetup.::x '->>' x}{\coqdocnotation{)}} \coqref{HilbertIPCsetup.::x '->>' x}{\coqdocnotation{->>}} \coqref{HilbertIPCsetup.p}{\coqdocdefinition{p}} \coqref{HilbertIPCsetup.::x 'x5Cv/' x}{\coqdocnotation{\symbol{92}}}\coqref{HilbertIPCsetup.::x 'x5Cv/' x}{\coqdocnotation{v}}\coqref{HilbertIPCsetup.::x 'x5Cv/' x}{\coqdocnotation{/}} \coqref{HilbertIPCsetup.r}{\coqdocdefinition{r}}\coqref{HilbertIPCsetup.::x 'x26' x}{\coqdocnotation{)}}.\coqdoceol
\coqdocemptyline
\end{coqdoccode}
\noindent
for which the length of the value of \coqdef{BoundsLists.optimized bound}g{optimized\_bound}{\coqdocdefinition{optimized\_bound}} is just $m := 4$.  The reader is now encouraged to pick $A(p)$ to be \coqdef{BoundsLists.exform1}{exform1}{\coqdocdefinition{exform1}}  and, as an exercise, estimate the size of $A^{2m+2}(p)$.\footnote{This is, in fact, a mistake I made myself while experimenting with the code. I ran Gallina on this input without a prior pen-and-paper or simply commonsensical estimate of the size of output. To Gallina's credit,  it was able to return with the actual formula after several minutes. A curious reader can see it in \coqdockw{f\_p\_exform1\_10\_output.txt}.} 

 \section{Conclusions} 
 \label{sec:conclusions}

As the formalization was developed in 2015--2016, and it was an exercise for the author in understanding Ruitenburg's paper and improving his own skills, it does not involve most modern or complex Coq libraries and features. After relatively minor changes, it has proved possible to compile under recent versions of Coq (upwards of 8.17), but the development itself is not using in an essential way anything that was not already available in versions  8.4pl6 and 8.5. Some libraries being used are already getting obsolete, but a proper overhaul would constitute a separate project, focusing directly on the theorem-proving community. \Cref{sec:related} and the work of F{\'{e}}r{\'{e}}e and van Gool \cite{FereeG23} or Shillito and coauthors \cite{ShillitoGGI23,ShillitoG22,GoreRS21} suggest how such an overhaul could potentially look like. 

The routes for future development have been already suggested in the paper. I find the question whether there are other natural non-locally-finite logics with llps particularly intriguing (Open Problems \ref{probl:pll} and \ref{probl:relevance}). Combining the present formalization with some standard proof of decidability of \ipc \ and using it, e.g., to eliminate altogether the meta-level Excluded Middle (\Cref{sec:turnstile}) or to compute optimal size of a bound for any input (\Cref{sec:bounds_lists}) also seems a natural challenge for future work. 

\label{sect:bib}

\bibliographystyle{eptcs}
\bibliography{fics2024ruit.bib}

\end{document}